\documentclass[12pt]{article}

\usepackage[T1]{fontenc}
\usepackage{lmodern}

\usepackage{setspace}
\usepackage[margin=1.25in]{geometry}

\usepackage[dvipsnames]{xcolor}
\usepackage{pdfpages}
\usepackage{float}
\usepackage{multirow}
\usepackage{caption}
\usepackage{subcaption}
\usepackage{epigraph}

\AtBeginDocument{%
  \everymath{\color{Violet}}%
  \everydisplay{\color{Violet}}%
}

\usepackage{mathtools}
\usepackage{amssymb}
\usepackage{amsthm}
\usepackage{amsfonts}
\usepackage{aliascnt}
\usepackage{accents}
\usepackage{dutchcal}
\usepackage{bbm}

\usepackage{enumitem}
\usepackage{tikz}
\usepackage{tikz-cd}
\usetikzlibrary{calc,shapes,arrows}
\usepackage{tcolorbox}
\usepackage{listings}

\usepackage[round]{natbib}
\DeclareMathOperator{\supp}{supp}

\DeclareMathOperator*{\argmax}{arg\,max}
\DeclareMathOperator*{\argmin}{arg\,min}

\newcommand{\E}{\mathbb{E}}
\newcommand{\R}{\mathbb{R}}

\newcommand{\1}{\mathbbm{1}}

\newtheorem{theorem}{Theorem}[section]

\newaliascnt{proposition}{theorem}
\newtheorem{proposition}[proposition]{Proposition}
\aliascntresetthe{proposition}

\newaliascnt{lemma}{theorem}
\newtheorem{lemma}[lemma]{Lemma}
\aliascntresetthe{lemma}

\newaliascnt{corollary}{theorem}
\newtheorem{corollary}[corollary]{Corollary}
\aliascntresetthe{corollary}

\newaliascnt{claim}{theorem}
\newtheorem{claim}[claim]{Claim}
\aliascntresetthe{claim}

\theoremstyle{definition}

\newaliascnt{definition}{theorem}

\aliascntresetthe{definition}

\newaliascnt{example}{theorem}

\aliascntresetthe{example}

\newaliascnt{assumption}{theorem}

\aliascntresetthe{assumption}

\newaliascnt{condition}{theorem}

\aliascntresetthe{condition}

\newaliascnt{question}{theorem}

\aliascntresetthe{question}

\newaliascnt{remark}{theorem}

\aliascntresetthe{remark}

\newaliascnt{remarks}{theorem}

\aliascntresetthe{remarks}

\newaliascnt{aside}{theorem}

\aliascntresetthe{aside}

\newaliascnt{note}{theorem}

\aliascntresetthe{note}

\usepackage{thmtools}
\usepackage{thm-restate}

\usepackage{hyperref}

\hypersetup{
    hypertexnames=false, %
    colorlinks=true,
    linkcolor=Maroon,
    filecolor=Thistle,
    urlcolor=Thistle,
    citecolor=Thistle,
}

\usepackage[nameinlink]{cleveref}

\crefname{theorem}{theorem}{theorems}
\Crefname{theorem}{Theorem}{Theorems}
\crefname{proposition}{proposition}{propositions}
\Crefname{proposition}{Proposition}{Propositions}
\crefname{lemma}{lemma}{lemmas}
\Crefname{lemma}{Lemma}{Lemmas}
\crefname{corollary}{corollary}{corollaries}
\Crefname{corollary}{Corollary}{Corollaries}
\crefname{claim}{claim}{claims}
\Crefname{claim}{Claim}{Claims}
\crefname{definition}{definition}{definitions}
\Crefname{definition}{Definition}{Definitions}
\crefname{example}{example}{examples}
\Crefname{example}{Example}{Examples}
\crefname{assumption}{assumption}{assumptions}
\Crefname{assumption}{Assumption}{Assumptions}

\makeatletter
\let\cref@old@isrefconsecutive\cref@isrefconsecutive
\def\cref@isrefconsecutive#1#2{%
  \begingroup
    \def\cref@assumptiontype{assumption}%
    \cref@gettype{#1}{\cref@typea}%
    \ifx\cref@typea\cref@assumptiontype
      \endgroup
      \@cref@refconsecutivefalse
    \else
      \endgroup
      \cref@old@isrefconsecutive{#1}{#2}%
    \fi
}
\makeatother

\crefname{condition}{condition}{conditions}
\Crefname{condition}{Condition}{Conditions}
\crefname{question}{question}{questions}
\Crefname{question}{Question}{Questions}
\crefname{remark}{remark}{remarks}
\Crefname{remark}{Remark}{Remarks}
\crefname{remarks}{remarks}{remarks}
\Crefname{remarks}{Remarks}{Remarks}
\crefname{aside}{aside}{asides}
\Crefname{aside}{Aside}{Asides}
\crefname{note}{note}{notes}
\Crefname{note}{Note}{Notes}
\crefname{appendix}{appendix}{appendices}
\Crefname{appendix}{Appendix}{Appendices}

\definecolor{backcolour}{rgb}{0.63,0.79,0.95}
\lstdefinestyle{mystyle}{
  backgroundcolor=\color{backcolour},
  basicstyle=\ttfamily\footnotesize,
  breakatwhitespace=false,
  breaklines=true,
  captionpos=b,
  keepspaces=true,
  numbers=left,
  numbersep=5pt,
  showspaces=false,
  showstringspaces=false,
  showtabs=false,
  tabsize=2
}
\AddToHook{cmd/maketitle/before}{\begingroup\everymath{}}
\AddToHook{cmd/maketitle/after}{\endgroup}
\AddToHook{cmd/@makefnmark/before}{\begingroup\everymath{}}
\AddToHook{cmd/@makefnmark/after}{\endgroup}

\begin{document}

\title{Mediated Disclosure}

\author{Mark Whitmeyer\thanks{Arizona State University. Email: \href{mailto:mark.whitmeyer@gmail.com}{mark.whitmeyer@gmail.com}.} \and Kun Zhang\thanks{Auburn University and University of Queensland. Email: \href{mailto:kun@kunzhang.org}{kun@kunzhang.org}.}}
\date{October 27, 2026}
\maketitle

\begin{abstract}
We introduce a mediator to a communication setting with verifiable (hard) evidence and characterize precisely when mediation can induce every obedient outcome--every type-contingent distribution of actions that the receiver is willing to follow--including the outcome that delivers the sender's commitment (Bayesian persuasion) payoff. For both type-independent and general preferences, it is necessary for each type to have a message he alone can send. This is also sufficient for type-independent preferences, whereas the general case also requires a particular alignment of the types' utilities. We also rank evidence structures by the distributions of actions conditional on the sender's type that they can induce at (mediated) equilibrium, through garbling conditions analogous to Blackwell’s comparison of experiments. 
\end{abstract}

\section{Introduction}

Mediation is a powerful tool in communication settings, allowing for greater fine-tuning of reporting and responding incentives, to the potential benefit of the communicating parties. So far, the literature has primarily focused on the benefits of mediation when communication is cheap talk.\footnote{An important exception is \citet{forgeskoessler2005}, who study communication when available messages depend on type. Related work on hard evidence asks when certifiable communication can support full revelation; see \citet{mathis2008,hagenbachkoesslerperezrichet2014}.} Here, we instead specialize to games with verifiable disclosure, \textit{viz.}, hard evidence.

Our starting point is the observation that, with type-independent sender preferences, mediation grants the sender a significant amount of power in the canonical disclosure environments in which every type can prove who he is,\footnote{See, e.g., \citet{grossmanhart1980}, \citet{grossman1981}, \citet{milgrom1981}, and \citet{milgromroberts1986}.} which property of evidence structures we term \textit{complete provability}. In particular, complete provability allows the \textit{informed} sender to generate in equilibrium the Bayesian persuasion outcome: it is as if he could commit \textit{ex ante} to an information structure, reporting incentives be damned. The logic is straightforward. Each type submits a message that only he can send, and the mediator garbles these messages exactly as the sender-optimal experiment garbles the state. Every other message leads to the sender's least-preferred action among those used by the scheme, so no type gains by deviating.

\Cref{thm:complete} proves the converse. With type-independent sender preferences, complete provability is necessary and sufficient for mediation to induce every obedient outcome--every type-contingent distribution of actions that the receiver is willing to follow--and, equivalently, to grant the sender his Bayesian-persuasion value for every receiver payoff. It is necessary even for a weaker conclusion: \textit{whenever persuasion strictly improves on direct disclosure, mediation does too.} If a type lacks an exclusive message, every message he can send is also available to another type. We choose receiver preferences under which mediation can improve only by giving the type without an exclusive message a strictly higher payoff than every other type. Another type who can send the same message can obtain the same lottery, so mediation cannot create this difference in payoffs. Persuasion can, because it conditions directly on type.

Complete provability is stronger than necessary for mediation to help for \textit{some} receiver payoff. \Cref{prop:some} reveals that one informative message--one message that rules out at least one type--is enough. In the proof, we choose receiver preferences so that the sender's preferred action yields her payoff \(\varepsilon>0\) when his type can submit the informative message and \(-1\) otherwise, while his least-preferred action gives her zero. The mediator uses the informative message to support the sender's preferred action and also recommends that action with a small positive probability after other messages. The receiver sees only the recommendation, so she may follow it even though she would choose another action after seeing those messages directly. With two actions, mediation can help only if an informative message exists.

When sender payoffs depend on type, complete provability by itself is insufficient. Every type can send the common message that discloses nothing, so the lottery attached to that message must deter every type from withholding his evidence. \Cref{thm:sd-value} reveals that complete provability and a condition we term \textit{compatible action rankings} are necessary and sufficient to induce every obedient outcome.\footnote{The condition requires that no two types rank a pair of actions strictly in opposite directions. It ensures that every set of actions contains an action that is worst for every type, which the mediator can use after nondisclosure.}

Disagreement over some actions does not eliminate every gain from mediation. \Cref{prop:sd-some-improvement} shows that one informative message is enough if every type strictly prefers the same one of two actions. \Cref{prop:sd-two-exclusive} highlights that a gain is possible if two types have exclusive messages and both strictly prefer the same one of two actions. With two actions and complete provability, \Cref{cor:sd-complete-binary} proves the converse: mediation helps for some receiver payoff exactly when two types strictly prefer the same action.

We finish by comparing evidence structures. For any set \(B\) of types, an evidence structure determines the ways of assigning each type in \(B\) a distribution over messages that he can send and that no type outside \(B\) can send. In our first order, \eqref{eq:evidence-order}, evidence structure \(1\) is at least as powerful as evidence structure \(2\) when every such assignment under the latter can be reproduced by first generating messages under the former and then randomly converting them into messages under the latter. In \Cref{thm:evidence-order}, we find that this condition is equivalent to two comparisons for every set of actions, every full-support prior, and all sender and receiver preferences, with sender preferences independent of type: evidence structure \(1\) can induce every mediated outcome that evidence structure \(2\) can induce, and the sender's best mediated payoff is never lower under evidence structure \(1\).\footnote{The equation in \eqref{eq:evidence-order} has the same form as \citeauthor{blackwell1953}'s \citeyearpar{blackwell1953} comparison: one experiment is less informative when it can be obtained by garbling another. \citet{gossner2000} compares information structures by the equilibrium outcomes they support and characterizes the comparison by whether one structure can reproduce the distributions induced by the other. Here, the comparison applies to all message distributions permitted by an evidence structure rather than to a single experiment.}

Call a set of types \emph{provable} if each type in the set has a message that no type outside the set can send. Provable sets alone need not determine which mediation outcomes an evidence structure can induce: two evidence structures may have the same provable sets but differ in how well their messages distinguish the types within each set. Condition \eqref{eq:evidence-H} rules out this possibility: within every provable set, it requires that the types can be assigned different messages that no type outside the set can send. If evidence structure \(1\) satisfies this condition, \Cref{thm:evidence-H} reveals that an obedient outcome can be induced exactly when, for every real number \(c\), the set of types whose expected payoff is at least \(c\) is provable. There, we also uncover the fact that evidence structure \(1\) can induce every mediated outcome under evidence structure \(2\) exactly when every set provable under the latter is also provable under the former. Without \eqref{eq:evidence-H}, two evidence structures can have the same provable sets even though one yields a strictly higher mediated payoff for some choice of the set of actions and both players' preferences.

We also characterize the order when sender payoffs may depend on type. Matching the distribution of messages is no longer enough. Randomly converting one message into another may give a type a lottery that he could not obtain under the second evidence structure. Consequently, our order in \eqref{eq:evidence-state-dependent} permits that conversion only when every type who can send the first message can also send the second. This condition characterizes when evidence structure \(1\) can induce every mediated outcome under evidence structure \(2\) for sender types with compatible rankings.

\subsection{Related work}

The first part of the paper connects two literatures. One studies what mediation can achieve when messages are unverifiable.\footnote{\citet{salamanca2021} characterizes sender-optimal mediation in cheap-talk games. \citet{corraodai2026} compare mediation with cheap talk and Bayesian persuasion when the sender's payoff does not depend on his type. \citet{kamenicalin2026} prove that, generically, the sender values commitment exactly when he values randomization under commitment and that mediation improves on cheap talk exactly when the sender values the mediator's ability to randomize.} These models generally allow every type to send every message. \citet{forgeskoessler2005} is a notable exception that allows the messages available to an informed player to depend on his type.

The other literature studies communication with hard evidence. It asks when direct communication fully reveals the sender's type and which outcomes can be implemented.\footnote{\citet{lipmanseppi1995} introduce the notion of complete provability. \citet{seidmannwinter1997}, \citet{mathis2008}, and \citet{hagenbachkoesslerperezrichet2014} study full revelation, while \citet{koesslerperezrichet2019} study implementation through mechanisms that condition on submitted evidence. \citet{titovazhang2025} and \citet{shishkintitovazhang2026} compare direct verifiable disclosure with commitment.} We instead ask when adding a mediator makes every obedient outcome attainable.

The second part of the paper compares evidence structures. This connects to work that ranks experiments and information structures.\footnote{\citet{blackwell1953} compares experiments by garbling. \citet{gossner2000} compares information structures by the equilibrium outcomes they support. \citet{gentzkowkamenica2017} identify a condition on the signals available to competing persuaders under which competition reveals at least as much information as collusion for every preference profile.} A single evidence structure allows the sender to use many different strategies for choosing messages, subject to the messages available to each type. We characterize when every mediated outcome under one evidence structure can also be induced under another.

\section{Model}\label{sec:model}

A sender (he) privately observes his type \(\theta\in\Theta\), drawn from a full-support prior \(\mu\) on a finite set \(\Theta\). The receiver (she) chooses an action from a finite set \(A\). Her payoff is \(u(a,\theta)\), while the sender's payoff is \(v(a,\theta)\), where \(u,v\colon A\times\Theta\to\R\). We assume that each sender type strictly prefers some action to another: \(v(\cdot,\theta)\) is nonconstant for every \(\theta\in\Theta\). Where explicitly indicated, we also allow types who are indifferent among all actions.

The sender's message space \(M\) is finite, and \(M(\theta)\subseteq M\) is the nonempty set of messages available to type \(\theta\).
Together, \(M\) and the family \(\{M(\theta)\}_{\theta\in\Theta}\) define the \emph{evidence structure} \(\mathcal E\).
For each message \(m\), let
\[E_m\coloneqq\{\theta\in\Theta\colon m\in M(\theta)\}\]
denote the set of types who can submit it.
We discard messages available to no type. Every type has access to a common message \(m_0\), interpreted as ``disclose nothing,'' i.e., \(m_0\in M(\theta)\) for every \(\theta\in\Theta\). A message \(m\) is \emph{informative} if \(E_m\ne\Theta\), meaning that observing it rules out at least one type. It is \emph{exclusive to type \(\theta\)} if \(E_m=\{\theta\}\). Following \citet{lipmanseppi1995}, the evidence structure has \emph{complete provability} if every type has an exclusive message. \citet{hagenbachkoesslerperezrichet2014} call the same property \emph{own type certifiability}.

We consider two communication settings. Under \emph{direct disclosure}, the sender first privately observes his type \(\theta\), then submits a message \(m\in M(\theta)\) directly to the receiver. After observing \(m\), the receiver chooses an action \(a\in A\).

Under \emph{mediation}, a mediator first chooses a finite signal space \(Y\) and publicly commits to a stochastic map \(g\colon M\to\Delta(Y)\),\footnote{For any finite set \(X\), \(\Delta(X)\) denotes the set of probability distributions on \(X\), and, for each \(x\in X\), \(\delta_x\) denotes the distribution assigning probability one to \(x\).} before the sender learns his type. The sender then privately observes \(\theta\) and submits a message \(m\in M(\theta)\) to the mediator. After observing \(m\), the mediator draws a signal \(y\in Y\) according to \(g(\cdot\mid m)\). The receiver observes \(y\), but not \(m\), and chooses an action \(a\in A\).

In either setting, a sender strategy assigns to each type \(\theta\) a probability distribution \(\sigma(\cdot\mid\theta)\) with support on \(M(\theta)\). Under mediation, a receiver strategy \(r\colon Y\to\Delta(A)\) assigns probability \(r(a\mid y)\) to action \(a\) after signal \(y\); under direct disclosure, \(r\colon M\to\Delta(A)\) assigns an action distribution to each observed message.

In both settings, we use \emph{sequential equilibrium} as defined by \citet{krepswilson1982}. An equilibrium consists of strategies and beliefs satisfying sequential rationality and consistency. Consistency requires a single sequence of completely mixed strategy profiles converging to the equilibrium strategies, with the associated Bayesian beliefs converging to the equilibrium beliefs.

A \emph{mediated equilibrium} consists of a choice of \((Y,g)\) together with a sequential equilibrium of the resulting game between the sender and receiver, who take \((Y,g)\) as given. Under direct disclosure, consistency implies that the receiver's beliefs assign probability one to \(E_m\) after every message \(m\), including off-path messages.

Direct disclosure and mediation differ in the messages and signals observed by the receiver. To compare what they achieve, we consider, for each sender type, the probability that the receiver chooses each action. These probabilities determine both players' expected payoffs and allow us to compare communication settings even when they convey information differently. An \emph{outcome} is therefore a map \(q\colon\Theta\to\Delta(A)\), where \(q(a\mid\theta)\) is the probability that the receiver chooses action \(a\) when the sender has type \(\theta\). An equilibrium \emph{induces} an outcome \(q\) if, conditional on each sender type \(\theta\), the receiver's action in that equilibrium has distribution \(q(\theta)\). An outcome is a \emph{mediation outcome} if it is induced by a mediated equilibrium.

Receiver optimality restricts which outcomes can arise in equilibrium: she cannot strictly increase her expected payoff by choosing \(b\) whenever she would otherwise choose \(a\). We call an outcome \(q\) \emph{obedient} if it satisfies this requirement for every pair of actions:
\begin{equation}\label{eq:obedience}
\sum_{\theta\in\Theta}\mu(\theta)q(a\mid\theta)\left[u(a,\theta)-u(b,\theta)\right]\ge0\qquad\text{for every }a,b\in A.
\end{equation}
Let \(\mathcal Q(\mu,u)\) denote the set of obedient outcomes. Every such outcome can be induced by a committed signal rule whose distribution depends directly on the sender's type.\footnote{Use the actions that occur with positive probability under \(q\) as signals, and draw signal \(a\) with probability \(q(a\mid\theta)\) at type \(\theta\). After observing signal \(a\), the receiver chooses action \(a\), which is optimal by obedience.} This provides a benchmark for assessing the additional restrictions imposed by evidence availability and the sender's incentives to submit it.

In particular, if the sender could choose this rule before learning his type, he would select an obedient outcome maximizing his \textit{ex ante} expected payoff. This is the Bayesian persuasion problem of \citet{kamenicagentzkow2011}. We denote its value by \(V^{\mathrm{BP}}(\mu,u,v)\):
\begin{equation}\label{eq:bp}
V^{\mathrm{BP}}(\mu,u,v)\coloneqq\max_{q\in\mathcal Q(\mu,u)}\sum_{\theta\in\Theta}\mu(\theta)\sum_{a\in A}v(a,\theta)q(a\mid\theta).
\end{equation}
Let \(V^{\mathrm D}(\mu,u,v)\) denote the sender's highest equilibrium payoff under direct disclosure, and let \(V^{\mathrm{MD}}(\mu,u,v)\) be the supremum of his payoffs over all mediated equilibria, allowing both \((Y,g)\) and the associated equilibrium to vary. Since the mediator can be transparent and every equilibrium induces an obedient outcome, \(V^{\mathrm D}(\mu,u,v)\le V^{\mathrm{MD}}(\mu,u,v)\le V^{\mathrm{BP}}(\mu,u,v)\).

\section{Evidence and the gains from mediation}\label{sec:free}

When can mediation attain the sender's information-design maximum for every receiver payoff function, and when can it strictly improve on direct disclosure for some receiver payoff function? We study these questions holding the evidence structure, the action set \(A\), and the sender's payoff function \(v\) fixed, while varying the receiver's payoff function \(u\). The prior \(\mu\) is fixed unless stated otherwise. We first consider type-independent sender preferences and then allow the sender's payoff to depend on his type. All proofs are relegated to the appendix.

\subsection{Type-independent sender preferences}\label{subsec:state-independent}

Suppose the sender's payoff does not depend on his type, and write \(v(a,\theta)=v(a)\). We first characterize the evidence structures for which every obedient outcome is a mediation outcome for every receiver payoff.

\begin{theorem}\label{thm:complete}
The following statements are equivalent:
\begin{enumerate}[label=\textup{(\roman*)},nosep]
\item The evidence structure has complete provability.
\item For every receiver payoff \(u\), every \(q\in\mathcal Q(\mu,u)\) is a mediation outcome.
\item For every receiver payoff \(u\), \(V^{\mathrm{MD}}(\mu,u,v)=V^{\mathrm{BP}}(\mu,u,v)\).
\item For every receiver payoff \(u\),
\begingroup
\setlength{\belowdisplayskip}{0pt}
\setlength{\belowdisplayshortskip}{0pt}
\[V^{\mathrm{BP}}(\mu,u,v)>V^{\mathrm D}(\mu,u,v)\quad\Longrightarrow\quad V^{\mathrm{MD}}(\mu,u,v)>V^{\mathrm D}(\mu,u,v).\]
\endgroup \label{complete-iv}
\end{enumerate}
Under complete provability, every obedient outcome can be induced using a pure sender strategy and at most \(|A|\) signals.
\end{theorem}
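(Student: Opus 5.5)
The plan is to prove (i)$\Rightarrow$(ii)$\Rightarrow$(iii)$\Rightarrow$(iv)$\Rightarrow$(i). The last implication is proved in contrapositive form.

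For (i)$\Rightarrow$(ii), fix $u$ and $q\in\mathcal Q(\mu,u)$. For each $\theta$, let $m_\theta$ be an exclusive message for $\theta$. Let $Y=\{a\colon q(a\mid\theta)>0\text{ for some }\theta\}$, which has at most $|A|$ elements. Let $\underline a\in\argmin_{a\in Y}v(a)$ be the sender's worst used action. Set $g(\cdot\mid m_\theta)=q(\cdot\mid\theta)$ and $g(\cdot\mid m)=\delta_{\underline a}$ for every other message. The sender uses the pure strategy $\sigma(m_\theta\mid\theta)=1$, and the receiver follows recommendations.
\begin{itemize}[nosep]
\item \emph{Receiver.} The on-path posterior after signal $a$ is proportional to $\mu(\theta)q(a\mid\theta)$, so obedience \eqref{eq:obedience} gives sequential rationality. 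Every $y\in Y$ is on path, so consistency is handled by taking any trembles; beliefs at on-path signals are then pinned down by Bayes' rule.
\item \emph{Sender.} Any deviation to $m\neq m_\theta$ yields either $\underline a$ or the lottery $q(\cdot\mid\theta')$ for some $\theta'$. Because payoffs are type-independent, one must still check that type $\theta$ does not prefer $q(\theta')$ to $q(\theta)$. Exclusivity handles this: $m_{\theta'}\notin M(\theta)$ for $\theta'\neq\theta$, so the only available deviations lead to $\underline a$, which is weakly worse.
\end{itemize}
This construction also proves the final sentence of the theorem.

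The implications (ii)$\Rightarrow$(iii) and (iii)$\Rightarrow$(iv) are immediate. The maximizer in \eqref{eq:bp} is obedient and hence, by (ii), is a mediation outcome, which gives $V^{\mathrm{MD}}\ge V^{\mathrm{BP}}$; combined with $V^{\mathrm{MD}}\le V^{\mathrm{BP}}$ this is (iii). For (iv), note that (iii) gives $V^{\mathrm{MD}}=V^{\mathrm{BP}}>V^{\mathrm D}$ whenever the premise holds.

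The main step is (iv)$\Rightarrow$(i). Suppose type $\theta^*$ has no exclusive message, so every $m\in M(\theta^*)$ has some $\theta_m\neq\theta^*$ in $E_m$. The key observation concerns the sender's interim payoffs $w(\theta)=\sum_a v(a)q(a\mid\theta)$ in any mediation outcome. Because $v$ is type-independent, every type with access to message $m$ can obtain the same lottery from $g(\cdot\mid m)$, so $w$ must be constant on each $E_m$ among on-path messages. Hence in any mediated equilibrium there is some type $\theta'\neq\theta^*$ with $w(\theta')\ge w(\theta^*)$: take $\theta'=\theta_m$ for a message $m$ that $\theta^*$ sends on path, and note that $\theta'$ could send $m$ itself.

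I would then design $u$ so that persuasion strictly beats direct disclosure, but only through outcomes that give $\theta^*$ a strictly higher interim payoff than every other type. Since $v$ is nonconstant, take actions $\bar a,\underline a$ with $v(\bar a)>v(\underline a)$, and make all other actions strictly dominated for the receiver. Set
\begin{itemize}[nosep]
\item $u(\bar a,\theta^*)-u(\underline a,\theta^*)=1$;
\item $u(\bar a,\theta)-u(\underline a,\theta)=-K$ for $\theta\neq\theta^*$, with $K$ chosen so that $\mu(\theta^*)<K\mu(\Theta\setminus\{\theta^*\})$.
\end{itemize}
Under this choice the receiver picks $\bar a$ only if her posterior puts sufficient weight on $\theta^*$.
\begin{itemize}[nosep]
\item \emph{Persuasion.} It recommends $\bar a$ always at $\theta^*$ and with small positive probability at other types, which is strictly better for the sender than always $\underline a$.
\item \emph{Direct disclosure.} Every message of $\theta^*$ is sendable by some other type. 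I need direct disclosure to yield only $\underline a$. With consistency forcing beliefs into $E_m$, I expect to choose $K$ large, possibly type-specific, so that any pooling set containing another type has too little weight on $\theta^*$; then $V^{\mathrm D}=v(\underline a)$.
\item \emph{Mediation.} In any mediated outcome, any $\bar a$-probability given to other types must be weighed against $\theta^*$'s probability. The observation above yields $\theta'\neq\theta^*$ with $q(\bar a\mid\theta')\ge q(\bar a\mid\theta^*)$. Obedience for $\bar a$ then requires $\mu(\theta^*)\ge K\mu(\theta')$, which fails for $K>\mu(\theta^*)/\min_\theta\mu(\theta)$. Hence $q(\bar a\mid\cdot)=0$ and $V^{\mathrm{MD}}=V^{\mathrm D}$, contradicting (iv).
\end{itemize}

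The delicate points are verifying the direct-disclosure bound when mixed strategies are allowed, which is similar to the mediation argument applied to the transparent mediator, and being careful that the interim-payoff equalization argument uses only on-path messages of $\theta^*$.
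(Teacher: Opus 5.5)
Your proposal is correct and follows the paper's proof. For (i)$\Rightarrow$(ii) you use the same construction: exclusive messages, with the worst used action after every other message. For (iv)$\Rightarrow$(i) your receiver payoff is a rescaling of the paper's quadratic-loss test ($K$ plays the role of $1/(2\varepsilon)$), and you use the same key step: some type able to send a message used by $\theta^*$ must get at least as much probability of the good action.

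There are two small repairs.
\begin{enumerate}
\item $w$ is not constant on $E_m$. What holds is $w\ge z_m$ on $E_m$, where $z_m$ is the sender's payoff from the lottery after $m$, with equality for types who actually use $m$. That inequality is all your argument needs.
\item The direct-disclosure step requires no special choice of $K$. You have $V^{\mathrm D}\le V^{\mathrm{MD}}=v(\underline a)$. The reverse inequality comes from the equilibrium in which all types pool on $m_0$, with off-path beliefs concentrated on an eligible type other than $\theta^*$ and action $\underline a$ everywhere; this equilibrium attains $v(\underline a)$.
\end{enumerate}
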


Complete provability enables the mediator to verify the sender's type and then choose how much of that information reaches the receiver. To induce an obedient outcome \(q\), each type submits his exclusive message and the mediator recommends an action according to \(q(\theta)\). Every other message leads to the sender's least-preferred action among those used by \(q\), making the prescribed submission optimal. The receiver is willing to follow the recommendations because \(q\) is obedient.

Statement \ref{complete-iv} asks only whether mediation can strictly increase the sender's payoff above his best direct-disclosure payoff whenever information design can. To see why complete provability is necessary for this statement to hold, suppose a type \(\theta^*\) lacks an exclusive message, and take any message used by him. Sender optimality implies that this message gives him his equilibrium expected payoff. Since it is also available to another type, and all sender types have the same preferences, this other type can obtain the same payoff by submitting the message and, accordingly, must receive at least as much in equilibrium.

In the proof we construct quadratic-loss receiver payoffs for which only the sender's most- and least-preferred actions can be optimal for her. Direct disclosure gives the sender only the payoff from his least-preferred action, while the receiver is willing to choose his most-preferred action only when she assigns sufficiently high probability to \(\theta^*\). By Bayes' rule, this requires the probability of choosing such an action to be strictly higher when the sender's type is \(\theta^*\) than when it is any other type. Thus, for mediation to give the sender strictly more than direct disclosure, his expected payoff would have to be strictly higher at \(\theta^*\) than at every other type. But every message used by \(\theta^*\) is available to some other type, and that type can obtain the same lottery and must, therefore, receive at least as much in equilibrium. Under information design, in contrast, recommendations depend directly on the sender's type, so no other type can mimic \(\theta^*\) to obtain his payoff. The receiver can thus choose the sender's most-preferred action at \(\theta^*\) and his least-preferred action elsewhere, giving him strictly more than under direct disclosure. Consequently, some receiver payoff \(u\) satisfies \(V^{\mathrm{MD}}(\mu,u,v)=V^{\mathrm D}(\mu,u,v)<V^{\mathrm{BP}}(\mu,u,v)\).

We now ask a less demanding question: what evidence suffices for mediation to strictly improve on direct disclosure for at least \emph{some} receiver preferences?
\begin{proposition}\label{prop:some}
If an informative message is available, then there exists a receiver payoff \(u\) such that \(V^{\mathrm{MD}}(\mu,u,v)>V^{\mathrm D}(\mu,u,v)\). When \(|A|=2\), the existence of an informative message is also necessary.
\end{proposition}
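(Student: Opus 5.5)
For sufficiency, I will follow the construction sketched in the introduction. Fix an informative message \(m^*\) with \(E\coloneqq E_{m^*}\ne\Theta\). Since \(v\) is nonconstant and type-independent, pick a most-preferred action \(\bar a\) and a least-preferred action \(\underline a\), so \(v(\bar a)>v(\underline a)\). I choose receiver payoffs under which only these two actions are relevant. Set \(u(a,\theta)=-K\) for every \(a\notin\{\bar a,\underline a\}\), with \(K\) large. Set \(u(\underline a,\theta)=0\) for all \(\theta\). Set \(u(\bar a,\theta)=\varepsilon\) for \(\theta\in E\) and \(u(\bar a,\theta)=-1\) for \(\theta\notin E\). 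The parameter \(\varepsilon>0\) is chosen small enough that \(\varepsilon\mu(E)<\mu(\Theta\setminus E)\). Then the prior optimal action is \(\underline a\), and other actions are never optimal.

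The first step is to compute \(V^{\mathrm D}\). Under direct disclosure, the receiver's belief after any message \(m\) is supported on \(E_m\), including off path. Consider the pooling profile in which every type sends \(m_0\); the receiver then plays \(\underline a\), and deviations to any other message lead to \(\underline a\) via beliefs putting enough weight on \(E_m\setminus E\) whenever that set is nonempty. I must show that no equilibrium does better. Some type \(\theta'\notin E\) exists. Since preferences are type-independent, every type obtains the same equilibrium payoff: any message used by a type yields the supremum among messages available to him, and all types share \(m_0\). I would argue that all types must then get the same payoff \(w\). If \(w>v(\underline a)\), then \(\bar a\) is played with positive probability, and type-independence forces equal probability of \(\bar a\) across types. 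This step may require care when \(\bar a\) is not the unique top action; I will simply define \(\bar a\) as the only action with receiver payoff above \(-K\) among the sender's top actions, so "payoff above \(v(\underline a)\)" equals "probability of \(\bar a\)". Equal probability of \(\bar a\) across types means the conditional belief given \(\bar a\) equals the prior, under which \(\bar a\) is not optimal. Hence \(V^{\mathrm D}=v(\underline a)\). Making this uniform-payoff argument rigorous is the main obstacle. A cleaner route is to note that every type sending \(m_0\) bounds his payoff below by the payoff of \(m_0\), and that every message is sent by types whose equilibrium payoffs match it, which forces equality across all types.

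The second step is the mediated improvement. The mediator sends recommendation \(\bar a\) with probability one after \(m^*\), and recommends \(\bar a\) with probability \(\eta\) and \(\underline a\) otherwise after every other message. Types in \(E\) send \(m^*\) and the rest send \(m_0\). Obedience for \(\bar a\) requires \(\varepsilon\mu(E)-\eta\mu(\Theta\setminus E)\ge0\), which holds for small \(\eta>0\); obedience for \(\underline a\) holds trivially. Types outside \(E\) cannot send \(m^*\), and all other messages give the same lottery, so they have no profitable deviation. Types in \(E\) get \(\bar a\) for sure, which is optimal. Sequential consistency off path is immediate because the receiver sees only on-path signals. The sender's payoff strictly exceeds \(v(\underline a)\).

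For necessity when \(|A|=2\), suppose no message is informative, so \(E_m=\Theta\) for all \(m\). Every type can then send every message and faces the same lotteries. Every type therefore obtains the same probability \(p\) of his preferred action. Since \(|A|=2\) and the sender's preferences are type-independent and nonconstant, one action \(\bar a\) is strictly preferred, so the signal distribution conditional on \(\bar a\) equals the prior. The receiver's best response to the prior is then one of the actions played. If \(\bar a\) is optimal at the prior, direct disclosure by pooling already attains \(v(\bar a)\). Otherwise \(p>0\) would violate obedience unless the receiver is indifferent at the prior, in which case direct disclosure can also give \(\bar a\). Hence \(V^{\mathrm{MD}}=V^{\mathrm D}\).
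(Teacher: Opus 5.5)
Your mediated scheme and your necessity argument for \(|A|=2\) are fine; the latter is a simpler, type-independent specialization of the paper's binary argument. But your computation of \(V^{\mathrm D}\) is false, and the strict comparison rests on it. Under direct disclosure, the receiver's belief after \(m^*\) is supported on \(E_{m^*}=E\), where \(u(\bar a,\theta)=\varepsilon>0=u(\underline a,\theta)\). So \(\bar a\) is strictly optimal after \(m^*\), on or off path, and every type in \(E\) can secure \(\bar a\). Pooling on \(m_0\) followed by \(\underline a\) is therefore not an equilibrium. Your off-path belief device fails precisely at \(m^*\), since \(E_{m^*}\setminus E=\varnothing\). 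By contrast, having \(E\) send \(m^*\) and all other types send \(m_0\) is an equilibrium, so \(V^{\mathrm D}\ge W_E\coloneqq\mu(E)v(\bar a)+\mu(\Theta\setminus E)v(\underline a)>v(\underline a)\). Your ``equal payoffs'' argument confuses sharing \(m_0\) with sharing every message. Type-independence only implies that \(\theta\) earns at least as much as \(\theta'\) when \(\theta\) can send a message that \(\theta'\) uses, and types outside \(E\) cannot send \(m^*\). Showing that mediation beats \(v(\underline a)\) thus proves nothing.

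What you need is the bound \(V^{\mathrm D}\le W_E\): no type outside \(E\) receives \(\bar a\) with positive probability under direct disclosure. For this, your smallness condition \(\varepsilon\mu(E)<\mu(\Theta\setminus E)\) is too weak; the paper uses \(\varepsilon\mu(E)<\min_{\theta\notin E}\mu(\theta)\). For instance, take \(\Theta=\{1,2,3\}\), \(E_{m^*}=\{1\}\), another message \(m'\) with \(E_{m'}=\{1,2\}\), \(\mu=(1/2,1/20,9/20)\), and \(\varepsilon=1/2\). Types \(1,2\) can pool on \(m'\) and receive \(\bar a\) under direct disclosure, and your scheme beats this only if \(\eta>1/10\), so ``small \(\eta\)'' fails. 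With the per-type condition the bound follows:
\begin{itemize}
\item each type in \(E\) sends only messages inducing \(\bar a\) for sure, since he can get it via \(m^*\);
\item a message sent only by types outside \(E\) induces \(\underline a\);
\item hence a type \(\theta'\notin E\) who receives \(\bar a\) with positive probability receives it surely, and aggregate obedience gives \(\varepsilon\mu(E)\ge\mu(\theta')\), a contradiction.
\end{itemize}
The paper instead handles receiver mixing via its rounding lemma, \Cref{lem:sd-disclosure-rounding}. Your scheme then yields \(W_E+\eta\mu(\Theta\setminus E)[v(\bar a)-v(\underline a)]>W_E\).
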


In the proof, we choose receiver payoffs such that, if the receiver knew the sender's type, she would strictly prefer the sender's most-preferred action when he can submit the informative message, and his least-preferred action otherwise. The receiver's gain from choosing the sender's most-preferred action when the informative message is available is sufficiently small that, under direct disclosure, this action can be induced only when the sender can submit that message. Under mediation, the mediator can induce the sender's most-preferred action whenever he submits the informative message, while also recommending it with a small positive probability after other messages. Because the receiver observes only the recommendation, not the submitted message, she is willing to follow it as long as that probability is sufficiently small. 
And so mediation yields the sender a strictly higher payoff than his best direct-disclosure equilibrium.

\subsection{General sender preferences}\label{subsec:state-dependent-complete}

We now allow the sender's payoff to depend on his type. The question is whether complete provability can still ensure, for every receiver payoff, that every obedient outcome is a mediation outcome when sender types disagree over the rankings of actions. We say that sender types have \emph{compatible action rankings} if
\begin{equation}\label{eq:sd-compatible-rankings}
\left[v(a,\theta)-v(b,\theta)\right]\left[v(a,\widehat\theta)-v(b,\widehat\theta)\right]\ge0\qquad\text{for every }a,b\in A\text{ and }\theta,\widehat\theta\in\Theta.
\end{equation}
Thus, two types never rank a pair of actions strictly in opposite directions, although one type may be indifferent between actions that another type ranks strictly.

When the sender's payoff depends on his type, two actions can give him the same \textit{ex ante} payoff even though he strictly prefers one action at some types and the other at other types. We consider the following nondegeneracy condition: if two actions give the sender the same \textit{ex ante} payoff, they also give him the same payoff at every type. Formally, for every pair of actions \(a,b\in A\):
\begin{equation}\label{eq:sd-nondegeneracy}
\sum_{\theta\in\Theta}\mu(\theta)v(a,\theta)=\sum_{\theta\in\Theta}\mu(\theta)v(b,\theta)\quad\Longrightarrow\quad v(a,\theta)=v(b,\theta)\quad\text{for every }\theta\in\Theta.
\end{equation}

\begin{theorem}\label{thm:sd-value}
Suppose \eqref{eq:sd-nondegeneracy} holds. The following statements are equivalent:
\begin{enumerate}[label=\textup{(\roman*)},nosep]
\item The evidence structure has complete provability, and sender types have compatible action rankings. \label{sd-complete}
\item For every receiver payoff \(u\), every \(q\in\mathcal Q(\mu,u)\) is a mediation outcome. \label{sd-allob}
\item For every receiver payoff \(u\), \(V^{\mathrm{MD}}(\mu,u,v)=V^{\mathrm{BP}}(\mu,u,v)\). \label{sd-bp}
\item For every receiver payoff \(u\),
\begingroup
\setlength{\belowdisplayskip}{0pt}
\setlength{\belowdisplayshortskip}{0pt}
\[V^{\mathrm{BP}}(\mu,u,v)>V^{\mathrm D}(\mu,u,v)\quad\Longrightarrow\quad V^{\mathrm{MD}}(\mu,u,v)>V^{\mathrm D}(\mu,u,v).\]
\endgroup \label{sd-improve-all}
\end{enumerate}
The equivalence of statements \ref{sd-complete} and \ref{sd-allob} does not require \eqref{eq:sd-nondegeneracy}. Whenever statement \ref{sd-complete} holds, every obedient outcome can be induced using a pure sender strategy and at most \(|A|\) signals.
\end{theorem}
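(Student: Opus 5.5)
We plan to prove (i)$\Rightarrow$(ii)$\Rightarrow$(iii)$\Rightarrow$(iv)$\Rightarrow$(i), with (ii)$\Rightarrow$(i) argued separately and without \eqref{eq:sd-nondegeneracy}. The middle implications are immediate. Since \(\mathcal Q(\mu,u)\) is compact and the objective in \eqref{eq:bp} is linear, the maximizer exists, and by (ii) it is a mediation outcome; hence (ii) gives (iii). (iii)$\Rightarrow$(iv) holds because \(V^{\mathrm{MD}}=V^{\mathrm{BP}}>V^{\mathrm D}\).

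\textbf{(i)$\Rightarrow$(ii).} Fix an obedient \(q\) and let \(A_q\) be the set of actions used by \(q\). Compatible action rankings imply that the relation ``\(a\) is weakly preferred to \(b\) by every type'' is complete. If \(a\) and \(b\) were incomparable, some type would strictly prefer \(a\) and another would strictly prefer \(b\), which \eqref{eq:sd-compatible-rankings} forbids. We also need transitivity, which holds because each type's own ordering is transitive. Hence \(A_q\) contains an action \(\underline a\) that is weakly worst for every type. The construction is as follows:
\begin{itemize}[nosep]
\item signals are \(Y=A_q\);
\item each type \(\theta\) submits its exclusive message \(m_\theta\);
\item \(g(\cdot\mid m_\theta)=q(\theta)\);
\item every other message is sent to \(\delta_{\underline a}\).
\end{itemize}
Any deviation by \(\theta\) is to a message exclusive to no-one-else or to a non-exclusive message. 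It therefore yields either \(\delta_{\underline a}\) or nothing new, and so is weakly unprofitable. Every signal in \(Y\) is on path, and obedience makes following the recommendation sequentially rational. For consistency, we perturb the sender strategy completely and take limits; on-path beliefs are pinned down. This proves (ii) with a pure strategy and at most \(|A|\) signals.

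\textbf{(ii)$\Rightarrow$(i), without nondegeneracy.} Necessity of complete provability follows the idea behind \Cref{thm:complete}. Suppose \(\theta^*\) lacks an exclusive message. Choose \(u\) and an obedient \(q\) that gives \(\theta^*\) a lottery that some \(\theta'\) sharing a message with \(\theta^*\) strictly prefers to \(q(\theta')\). Then \(\theta'\) can mimic \(\theta^*\) by sending that message and obtain the same lottery, so \(q\) is not a mediation outcome. Concretely, take \(q(\theta^*)=\delta_{a}\) and \(q(\theta)=\delta_b\) for \(\theta\ne\theta^*\), where \(v(a,\theta')>v(b,\theta')\); such \(a,b\) exist because \(v(\cdot,\theta')\) is nonconstant. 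Let \(u\) reward \(a\) only at \(\theta^*\). Since every message \(\theta^*\) uses lies in some \(E_m\) containing another type, we pick \(\theta'\) to be such a type, with \(a,b\) chosen accordingly.

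Necessity of compatible rankings comes from the common message \(m_0\). Suppose \(v(a,\theta_1)>v(b,\theta_1)\) while \(v(b,\theta_2)>v(a,\theta_2)\). Take \(u\) making the outcome ``\(\theta_1\mapsto b\), \(\theta_2\mapsto a\), others \(\mapsto\) arbitrary'' obedient; for instance, reward \(b\) at \(\theta_1\), \(a\) at \(\theta_2\), and matching actions elsewhere. Both \(\theta_1\) and \(\theta_2\) can send \(m_0\), and the lottery \(\ell=g\)-induced at \(m_0\) must satisfy \(\ell\)-payoff \(\le v(b,\theta_1)\) for \(\theta_1\) and \(\le v(a,\theta_2)\) for \(\theta_2\). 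This is possible when \(\ell\) puts weight on other actions. The fix is to take \(A\) restricted, i.e.\ to choose \(u\) so that only \(a,b\) are ever obedient-usable. The delicate point is that the mediator's lottery after \(m_0\) induces some belief and receiver action; we make every action other than \(a,b\) strictly dominated for the receiver. Then \(\ell\in\Delta(\{a,b\})\), and no such \(\ell\) deters both types, since \(\theta_1\) needs \(\ell=\delta_b\) and \(\theta_2\) needs \(\ell=\delta_a\).

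\textbf{(iv)$\Rightarrow$(i).} This is the main obstacle. We must upgrade each counterexample to a \(u\) for which \(V^{\mathrm{MD}}=V^{\mathrm D}<V^{\mathrm{BP}}\), and nondegeneracy is exactly what allows ex ante comparisons to reflect per-type rankings. For a missing exclusive message, we first try the quadratic-loss construction of \Cref{thm:complete}, but with the two actions chosen as the ex ante best and worst. We then argue that mediation can improve only by raising \(\theta^*\)'s probability of the good action above every other type's, which mimicry blocks. The type dependence is where \eqref{eq:sd-nondegeneracy} enters, ensuring strict ex ante gains.

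For incompatible rankings, we restrict the receiver to \(\{a,b\}\) via \(u\). Using \eqref{eq:sd-nondegeneracy}, let \(a\) be ex ante strictly better. Design \(u\) so that BP recommends \(a\) at \(\theta_1\) with the opposite action elsewhere, and show that any mediated lottery after \(m_0\) makes one of \(\theta_1,\theta_2\) deviate, collapsing payoffs to direct disclosure. I expect this last step, showing that no mediated equilibrium beats \(V^{\mathrm D}\), to need the most care.
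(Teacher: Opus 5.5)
Your chain (i)$\Rightarrow$(ii)$\Rightarrow$(iii)$\Rightarrow$(iv) is fine. Your complete-preorder argument for a common worst action is a valid substitute for the paper's argmin-of-the-sum claim. The common-message argument that (ii) forces compatible rankings is also fine.

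The genuine gap is (iv)$\Rightarrow$(i) when rankings are incompatible: you never specify a receiver payoff. Showing that one obedient outcome cannot be implemented (your $m_0$ deterrence argument) does not bound $V^{\mathrm{MD}}$. You need a $u$ under which \emph{every} mediated equilibrium collapses to $V^{\mathrm D}$ while $V^{\mathrm{BP}}>V^{\mathrm D}$. A design whose persuasion optimum gives $a$ to $\theta_1$ and $b$ elsewhere cannot do this in general: when $\theta_1$ has an exclusive message, direct disclosure already attains that outcome.

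The missing idea is to make the receiver's gain from $a$ opposed to the sender's. With $d(\theta)=v(a,\theta)-v(b,\theta)$, \eqref{eq:sd-nondegeneracy} gives $\sum_\theta\mu(\theta)d(\theta)\neq0$. Relabel so that $D_+>D_->0$, where these are the prior-weighted positive and negative parts of $d$, and pick $\gamma\in(D_-/D_+,1)$. Set $u(b,\cdot)=0$, $u(a,\theta)=-\gamma d(\theta)$ if $d(\theta)>0$, $u(a,\theta)=-d(\theta)$ if $d(\theta)\le0$, and $u(c,\cdot)=-1$ for all other actions, so only $a,b$ are ever chosen. Let $p(\theta)$ be the probability of $a$ at type $\theta$ and $p_0$ its probability after $m_0$. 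The common message forces $p(\theta)\ge p_0$ where $d>0$ and $p(\theta)\le p_0$ where $d<0$. Substituting into the obedience constraint for $a$ gives $\gamma D_+p_0\le D_-p_0$, so $p_0=0$, and then $p=0$ at every type with $d\neq0$. Hence $V^{\mathrm{MD}}=\sum_\theta\mu(\theta)v(b,\theta)=V^{\mathrm D}$. Direct disclosure attains this by pooling on $m_0$ and answering an off-path message with $b$ if some type with $d\ge0$ can send it, and with $a$ otherwise. Meanwhile, recommending $a$ with probability $D_-/(\gamma D_+)$ where $d>0$, one where $d<0$, and zero where $d=0$ is obedient and gains $D_-(1/\gamma-1)>0$. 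This is the only place \eqref{eq:sd-nondegeneracy} is needed.

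A second, smaller gap is the order of your necessity arguments. The mimicry step for a missing exclusive message requires two actions that \emph{every} type strictly ranks the same way. Otherwise the messages $\theta^*$ uses need not yield the good action with a common probability, and a type sharing such a message may prefer the other action and have no reason to mimic. Taking the ex ante best and worst actions does not guarantee this, and \eqref{eq:sd-nondegeneracy} is not what delivers strictness.

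The fix is to first obtain compatible rankings: from (iv) via the construction above, or from (ii) via your common-message argument. Then take the common maximizer $a^+$ and common minimizer $a^-$, which every type strictly ranks by nonconstancy. The quadratic-loss construction then goes through without \eqref{eq:sd-nondegeneracy}.

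The same reordering repairs your (ii)$\Rightarrow$(i) argument for complete provability. There, choosing $a,b$ ``accordingly'' after $\theta'$ is circular, because $\theta'$ depends on which message $\theta^*$ uses in the hypothetical equilibrium. With $a=a^+$ and $b=a^-$, your outcome ($a^+$ at $\theta^*$, $a^-$ elsewhere) does give a valid and simpler proof of that implication than the paper's quadratic construction.
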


Compatible action rankings ensure that, among the actions used by any obedient outcome, there is one that minimizes every sender type's payoff. Consequently, the construction described after \Cref{thm:complete} still makes every type willing to submit his exclusive message.

To see why compatible action rankings are necessary, suppose two types rank \(a\) and \(b\) strictly in opposite directions. Choose receiver preferences such that, for every sender type, she is indifferent between these actions and strictly prefers them to every other action. Consider an outcome that assigns only actions \(a\) and \(b\) and is thus obedient: the outcome assigns \(b\) to the type who strictly prefers \(a\) to \(b\), and \(a\) to the type who strictly prefers \(b\) to \(a\). However, both types can submit \(m_0\) and obtain the same distribution over \(a\) and \(b\), so at least one would gain by withholding evidence. As a result, for these receiver preferences, the obedient outcome just described cannot be induced by any mediated equilibrium, even under complete provability.

Under the nondegeneracy condition \eqref{eq:sd-nondegeneracy}, if the evidence structure lacks complete provability or two sender types rank some pair of actions strictly in opposite directions, there exists a receiver payoff \(u\) such that \(V^{\mathrm{MD}}(\mu,u,v)=V^{\mathrm D}(\mu,u,v)<V^{\mathrm{BP}}(\mu,u,v)\).\footnote{In the Supplementary Appendix we provide a weaker nondegeneracy condition under which the conclusions of \Cref{thm:sd-value} still hold as well as an example in which mediation attains the information-design maximum for every receiver payoff function despite the absence of complete provability and compatible action rankings.} Accordingly, information design can strictly improve on the sender's best direct-disclosure payoff, while mediation cannot.

When types rank a pair of actions oppositely, their prior-weighted payoff differences may cancel at a particular prior, but cannot cancel at every full-support prior. Requiring \Cref{sd-allob,sd-bp,sd-improve-all} to hold at every full-support prior, therefore, removes the need for \eqref{eq:sd-nondegeneracy}.

\begin{corollary}\label{cor:sd-all-priors}
Without \eqref{eq:sd-nondegeneracy}, the four statements in \Cref{thm:sd-value} are equivalent when the requirements in \Cref{sd-allob,sd-bp,sd-improve-all} are imposed at every full-support prior.
\end{corollary}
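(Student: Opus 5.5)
The plan is to reuse the proof of \Cref{thm:sd-value} and isolate the single place where \eqref{eq:sd-nondegeneracy} enters. By \Cref{thm:sd-value}, (i)$\Leftrightarrow$(ii) at any fixed full-support prior needs no nondegeneracy. Hence (i) is equivalent to ``(ii) at every full-support prior'': (i) does not involve $\mu$, so it gives (ii) at each prior, and (ii) at any one prior already gives (i). The chain (ii)$\Rightarrow$(iii)$\Rightarrow$(iv) holds prior by prior without \eqref{eq:sd-nondegeneracy}:
\begin{itemize}
\item if every obedient outcome is a mediation outcome, the BP-optimal outcome is one, so $V^{\mathrm{MD}}=V^{\mathrm{BP}}$;
\item if $V^{\mathrm{MD}}=V^{\mathrm{BP}}$, then $V^{\mathrm{BP}}>V^{\mathrm D}$ immediately gives $V^{\mathrm{MD}}>V^{\mathrm D}$.
\end{itemize}
What remains is to show that ``(iv) at every full-support prior'' implies (i). Equivalently, if (i) fails we must exhibit one full-support prior $\mu'$ and one receiver payoff $u$ with $V^{\mathrm{MD}}(\mu',u,v)=V^{\mathrm D}(\mu',u,v)<V^{\mathrm{BP}}(\mu',u,v)$.

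For this I will rerun the contrapositive construction from the proof of \Cref{thm:sd-value}(iv)$\Rightarrow$(i), which produces exactly such a $u$ whenever complete provability or compatible rankings fail. The one adjustment is that \eqref{eq:sd-nondegeneracy} is only needed at the single prior used there. I therefore claim that the set of full-support priors violating \eqref{eq:sd-nondegeneracy} is a finite union of proper hyperplane sections of the open simplex.

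To see this, fix actions $a,b$ with $v(a,\cdot)\neq v(b,\cdot)$. The set $\{\mu:\sum_\theta\mu(\theta)[v(a,\theta)-v(b,\theta)]=0\}$ is the zero set of a linear functional. That functional is not proportional to $\sum_\theta\mu(\theta)$ unless $v(a,\theta)-v(b,\theta)$ is a nonzero constant in $\theta$, and in that case it never vanishes on the simplex. Either way its intersection with the open simplex has empty interior, so there are finitely many such pairs and each contributes a nowhere-dense set. Hence the priors satisfying \eqref{eq:sd-nondegeneracy} form an open dense subset of full-support priors, and in particular at least one such prior exists. At any such $\mu'$, \Cref{thm:sd-value} applies verbatim: failure of (i) gives (iv) failing at $\mu'$, and therefore at some full-support prior. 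This closes the chain.

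The main obstacle is confirming that the counterexample in \Cref{thm:sd-value} is not tied to the particular fixed prior, i.e. that it only needs $\mu'$ to satisfy \eqref{eq:sd-nondegeneracy} and have full support. Since that theorem is stated for an arbitrary full-support $\mu$ satisfying \eqref{eq:sd-nondegeneracy}, this holds by its statement, so the genericity argument above is the only new ingredient.

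If the claim is meant more strongly, namely that the equivalences hold prior by prior for all full-support priors, the argument is the same. The key observation is that (ii), (iii) and (iv) quantified over all priors each sit between (i) and ``(iv) at all priors''.
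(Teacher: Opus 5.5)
Your proof is correct, but it handles the one nontrivial step differently from the paper. You treat \Cref{thm:sd-value} as a black box and add a genericity lemma. There are finitely many action pairs. For each pair with $v(a,\cdot)-v(b,\cdot)$ nonconstant, the full-support priors with zero average difference form a hyperplane section of the open simplex with empty relative interior. Hence \eqref{eq:sd-nondegeneracy} holds on an open dense set of full-support priors, and at any such prior the theorem's implication (iv)$\Rightarrow$(i) applies as stated.

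The paper instead reopens the proof of (iv)$\Rightarrow$(i) and checks what each counterexample actually uses. The conflicting-rankings construction needs only $\sum_\theta\mu(\theta)d(\theta)\neq0$ for the one selected pair, which holds at some full-support prior because $d$ takes both signs. Once compatible rankings are established, the quadratic construction detects a failure of complete provability at every full-support prior, with no nondegeneracy at all.

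Your version is more modular: it needs only the theorem's statement and an elementary genericity observation. The paper's version is more informative. It shows that nondegeneracy matters only for a single conflicting pair, and that a failure of complete provability is witnessed at every prior rather than only at generic ones.

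One caution: drop your closing remark that the argument also gives the equivalences ``prior by prior.'' Your sandwich observation concerns only the versions quantified over all priors. The stronger reading, in which (iii) or (iv) at a single full-support prior implies (i), is false in general. The footnote in the discussion after \Cref{thm:sd-value} mentions an example in which mediation attains the persuasion value for every receiver payoff even though (i) fails.
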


We next consider when mediation can strictly improve on direct disclosure for some receiver payoff function.

\begin{proposition}\label{prop:sd-some-improvement}
\begin{enumerate}[label=\textup{(\roman*)},nosep]
\item\label{sd-some-informative} Suppose there are actions \(a^+,a^-\) such that \(v(a^+,\theta)>v(a^-,\theta)\) for every \(\theta\in\Theta\). If an informative message is available, there is a receiver payoff \(u\) such that \(V^{\mathrm{MD}}(\mu,u,v)>V^{\mathrm D}(\mu,u,v)\). A mediated equilibrium with a pure sender strategy and two signals can strictly improve on direct disclosure.
\item\label{sd-some-binary} If \(|A|=2\) and no informative message is available, then \(V^{\mathrm{MD}}(\mu,u,v)=V^{\mathrm D}(\mu,u,v)\) for every receiver payoff \(u\) and every sender payoff \(v\), including payoffs for which some types are indifferent among all actions.
\end{enumerate}
\end{proposition}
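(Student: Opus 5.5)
For part \ref{sd-some-informative}, I will adapt the construction behind \Cref{prop:some}, with \(a^+\) and \(a^-\) playing the roles of the most- and least-preferred actions. Fix an informative message \(m^*\) with \(E\coloneqq E_{m^*}\ne\Theta\); \(E\) is nonempty because messages available to no type are discarded. I will give the receiver payoffs under which only \(a^+\) and \(a^-\) are ever optimal: \(u(a^-,\theta)=0\) for every \(\theta\); \(u(a^+,\theta)=\varepsilon\) for \(\theta\in E\) and \(u(a^+,\theta)=-1\) for \(\theta\notin E\); and \(u(a,\theta)=-K\) for every other action \(a\), with \(K\) large.

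The first step is to bound direct disclosure. Choose \(\varepsilon>0\) small enough that \(\varepsilon\mu(E)<\mu(\Theta\setminus E)\). Then \(a^+\) is optimal after a message only if the receiver's belief puts weight at least \(1/(1+\varepsilon)\) on \(E\). Since \(\varepsilon\) is small, every belief on which \(a^+\) is optimal is close to \(\Delta(E)\). In any direct equilibrium, consider the posterior that \(a^+\) is chosen. By the martingale property, if this posterior is concentrated near \(E\), the probability that \(a^+\) is chosen at types outside \(E\) is at most of order \(\varepsilon\). Moreover, \(a^+\) can be chosen with probability at most one at types in \(E\). So I will bound the sender's direct-disclosure payoff by the payoff from \(a^+\) at most on \(E\) plus an \(O(\varepsilon)\) term, with \(a^-\) elsewhere.

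The second step is to exhibit a strictly better mediated equilibrium. The sender strategy is pure: types in \(E\) send \(m^*\), and all other types send \(m_0\). The mediator uses two signals. After \(m^*\) it recommends \(a^+\). After \(m_0\) it recommends \(a^+\) with probability \(\eta\) and \(a^-\) otherwise. After any other message it recommends \(a^-\). Obedience of \(a^+\) requires \(\varepsilon\mu(E)\ge\eta\mu(\Theta\setminus E)\), so I take \(\eta=\varepsilon\mu(E)/\mu(\Theta\setminus E)\in(0,1)\). Types in \(E\) receive \(a^+\) for sure, which they strictly prefer, so they do not deviate. Every other type's alternatives give either \(a^-\) or the same \(m_0\) lottery, so they do not deviate either. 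Sequential consistency is routine because every signal is on path.

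The main obstacle is making the comparison strict. The direct-disclosure bound does not rule out \(a^+\) being induced at all of \(E\), so I must show that direct disclosure cannot also give an \(\eta\)-share of \(a^+\) outside \(E\) while keeping \(a^+\) on all of \(E\). I expect to handle this either by tuning \(\varepsilon\) or by making \(u(a^+,\theta)\) negative and type-specific on \(E\setminus\{\text{types for which }m^*\text{ is informative}\}\). My first attempt is to show that a direct equilibrium giving \(a^+\) outside \(E\) must pool \(m^*\)-senders with others on a message whose posterior is not near \(E\), which contradicts the threshold. If instead the direct optimum already achieves the mediated payoff, I would modify \(u\) so that \(a^+\) is optimal on \(E\) only with pooling slack, which only the mediator's garbling supplies.

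For part \ref{sd-some-binary}, every message is uninformative, so \(E_m=\Theta\) for every message. Under direct disclosure, consistency therefore imposes no restriction, and the receiver may respond to every message with the same strategy. Given any mediated equilibrium, each type's payoff from a message \(m\) is \(\sum_a v(a,\theta)P(a\mid m)\), where \(P(a\mid m)\) is the action distribution the equilibrium attaches to \(m\). I will replicate the equilibrium by direct disclosure with the same sender strategy and a receiver who, after message \(m\), chooses the mixture \(P(\cdot\mid m)\). With two actions, obedience of the mediated outcome must be translated into optimality after each message. This works by pooling: let the direct receiver play the aggregate mixed action after every message. Sender incentives are trivial because every message then gives the same lottery. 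The sender's ex ante payoff is unchanged only if \(P(\cdot\mid m)\) is the same across messages used by each type. Since every type can send every message, I would show that any mediated equilibrium can be taken to equalize lotteries across used messages when \(|A|=2\). My plan is to note that every type's payoff is then determined by \(P(a_1\mid m)\). Each type weakly prefers either the highest or lowest available \(P(a_1\mid m)\), or is indifferent, so the sender is weakly better off under the direct equilibrium that gives each type his preferred extreme via two pooling messages, and this direct equilibrium is obedient by the same binary argument.
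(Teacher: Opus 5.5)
\textbf{Part (i).} Your mediated construction is fine. The step you flag as the main obstacle, however, is the whole content of the strict comparison, and none of your suggested fixes closes it. Your posterior/martingale bound is just aggregate obedience. It allows outsiders a total mass \(\varepsilon\mu(E)\) of \(a^+\), which is exactly the mass your mediated scheme uses. So no bound of the form ``\(a^+\) on \(E\) plus \(O(\varepsilon)\)'' can separate the two. You need the exact bound \(V^{\mathrm D}\le W_E\coloneqq\sum_{\theta\in E}\mu(\theta)v(a^+,\theta)+\sum_{\theta\notin E}\mu(\theta)v(a^-,\theta)\), and this requires two ingredients you are missing.

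First, a rounding step specific to direct disclosure. Only \(a^+\) and \(a^-\) are ever best responses, and every type strictly prefers \(a^+\). Replace the probability \(p_m\) of \(a^+\) after each message by \(\mathbf{1}_{\{p_m\ge t\}}\). This keeps every type's chosen messages optimal, and it keeps the receiver optimal, since an interior \(p_m\) means she was indifferent. Averaging over \(t\) uniform on \((0,1)\) recovers the original payoff, so some equilibrium with pure receiver responses does at least as well. In such an equilibrium, an outsider who gets \(a^+\) at all gets it for sure, so obedience would require \(\varepsilon\mu(E)\ge\mu(\theta)\) for that outsider.

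Second, \(\varepsilon\) must therefore satisfy \(\varepsilon\mu(E)<\min_{\theta\notin E}\mu(\theta)\), not merely \(\varepsilon\mu(E)<\mu(\Theta\setminus E)\). Under your weaker condition, take an outsider \(\theta'\) with \(\mu(\theta')\le\varepsilon\mu(E)\) and a message \(m\) with \(E_m=E\cup\{\theta'\}\). Direct disclosure can pool \(E\cup\{\theta'\}\) on \(m\) followed by \(a^+\), so \(V^{\mathrm D}>W_E\) and the exact bound fails. With both ingredients, \(V^{\mathrm D}=W_E\), and any \(\eta>0\) yields a strict gain.

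\textbf{Part (ii).} Two of your intermediate claims are false. Playing the aggregate action mixture after every message need not be a receiver best response at the prior: obedience of an informative outcome does not make its marginal optimal under \(\mu\). And a mediated equilibrium cannot in general be taken to equalize lotteries across used messages: strict \(a\)-types receiving \(H\) and strict \(b\)-types receiving \(L<H\) cannot both be kept whole by one common lottery.

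Your final plan, two messages giving each type his preferred extreme, has the right shape. But its obedience is the crux, and ``the same binary argument'' has not been supplied. The missing step is a rescaling. Let \(p(\theta)\) be the probability of \(a\) at type \(\theta\), and let \(L\) and \(H\) be the minimum and maximum probability of \(a\) across messages.
\begin{itemize}
\item If \(H=L\), the outcome is constant and prior-optimal, and pooling reproduces it.
\item If \(H>L\), set \(z(\theta)=(p(\theta)-L)/(H-L)\). This gives strict types their favorite action for sure and pins down how indifferent types split across the two messages. An arbitrary split of indifferent types can violate obedience.
\end{itemize}
Write \(R=\sum_\theta\mu(\theta)[u(a,\theta)-u(b,\theta)]\) and \(F=\sum_\theta\mu(\theta)p(\theta)[u(a,\theta)-u(b,\theta)]\). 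The original obedience constraints are \(F\ge\max\{0,R\}\). The rescaled expression \(F_z=(F-LR)/(H-L)\) satisfies the same constraints by a case split on the sign of \(R\). Only then can \(z\) be implemented by direct disclosure with two messages, which exist because \(H>L\). This step uses the fact that off-path beliefs are unrestricted when no message is informative.
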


Part \ref{sd-some-informative} extends the construction described after \Cref{prop:some}. In the proof, we choose receiver payoffs for which only \(a^+\) and \(a^-\) can be optimal. Every type's payoff is strictly increasing in the probability of \(a^+\), so the sender strategy in that construction remains optimal. Compared with the sender's best direct-disclosure equilibrium, mediation leaves the payoffs of types who can submit the informative message unchanged. It strictly benefits every other type by giving him a positive probability of \(a^+\) instead of \(a^-\) with certainty.

Part \ref{sd-some-binary} follows by showing that any mediated equilibrium can be replaced by a direct-disclosure equilibrium that makes no sender type worse off. For intuition, suppose every sender type strictly prefers one of the two actions. Because every type can submit every message, each type chooses a message that maximizes the probability of his preferred action. If every message induces the same lottery over the two actions, that lottery is a receiver best response under the prior and can be reproduced by direct disclosure. If different messages induce different lotteries, the proof constructs an obedient outcome that gives every type his preferred action with certainty. Direct disclosure can implement this outcome by having each type choose a message inducing his preferred action.

Part \ref{sd-some-informative} of \Cref{prop:sd-some-improvement} requires all sender types to agree on a strict comparison between two actions, but only one informative message is needed. We next show that agreement between just two types suffices if each of the two types has an exclusive message, even when other types have opposite preferences.

\begin{proposition}\label{prop:sd-two-exclusive}
Suppose distinct types \(\theta_1,\theta_2\) each have an exclusive message, and there are actions \(a,b\) such that for each \(i=1,2\), \(v(a,\theta_i)>v(b,\theta_i)\). Then there is a receiver payoff \(u\) for which \(V^{\mathrm{MD}}(\mu,u,v)>V^{\mathrm D}(\mu,u,v)\). A mediated equilibrium with a pure sender strategy and two signals can strictly improve on direct disclosure.
\end{proposition}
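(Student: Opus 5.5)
The plan is to build a receiver payoff under which only \(a\) and \(b\) can ever be optimal, and which makes the receiver lean slightly toward \(a\) at \(\theta_1\) and against \(a\) everywhere else. The mediator can then pool a small probability of \(a\) for \(\theta_2\) with the certain \(a\) for \(\theta_1\). Direct disclosure cannot do this, because pooling there requires \(\theta_2\) to share a message with \(\theta_1\) at a positive ex ante rate. Let \(m_1,m_2\) be the exclusive messages of \(\theta_1,\theta_2\).

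\textbf{Receiver payoff.} Set \(u(b,\theta)=0\) for all \(\theta\) and \(u(c,\theta)=-L\) for every \(c\notin\{a,b\}\), with \(L>0\). Set
\[u(a,\theta_1)=\varepsilon,\qquad u(a,\theta_2)=-1,\qquad u(a,\theta)=-K\ \text{ for } \theta\notin\{\theta_1,\theta_2\},\]
where \(K>0\) and \(0<\varepsilon<\mu(\theta_2)/\mu(\theta_1)\). Since \(L>0\), every \(c\notin\{a,b\}\) is strictly dominated by \(b\) and is never played. We also impose \(K\mu(\theta)>\varepsilon\mu(\theta_1)\) for every \(\theta\notin\{\theta_1,\theta_2\}\).

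\textbf{Upper bound under direct disclosure.} I will show that in any direct-disclosure equilibrium every \(\theta\neq\theta_1\) receives at most \(\max\{v(a,\theta),v(b,\theta)\}\), and receives exactly \(v(b,\theta)\) whenever \(v(a,\theta)>v(b,\theta)\); in particular \(\theta_2\) gets exactly \(v(b,\theta_2)\). Type \(\theta_1\) gets at most \(v(a,\theta_1)\).

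The key step concerns any type \(\theta\neq\theta_1\) with \(v(a,\theta)>v(b,\theta)\). Suppose, for contradiction, that every message he uses induces \(a\) with positive probability. Then the receiver weakly prefers \(a\) at each such message \(m\), which requires
\[\varepsilon\mu(\theta_1)\sigma(m\mid\theta_1)\ge\mu(\theta)\,k_\theta\,\sigma(m\mid\theta),\]
with \(k_\theta\in\{1,K\}\) the magnitude of \(u(a,\theta)\). The other types' terms only tighten this inequality. Summing over the messages \(\theta\) uses gives \(\varepsilon\mu(\theta_1)\ge\mu(\theta)k_\theta\), which contradicts the choice of \(\varepsilon\) and \(K\). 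So some message he uses yields \(b\) for sure, and by indifference across used messages his payoff is \(v(b,\theta)\).

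\textbf{Mediated equilibrium.} Use signals \(\{y_a,y_b\}\) and have type \(\theta_i\) send \(m_i\) for \(i=1,2\), while all other types send \(m_0\). Set \(g(y_a\mid m_1)=1\), \(g(y_a\mid m_2)=p\), and \(g(y_b\mid m)=1\) for every other \(m\). Choose \(p>0\) with \(\varepsilon\mu(\theta_1)-p\mu(\theta_2)\ge0\), which makes following \(y_a\) obedient.

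After \(y_b\), the posterior is supported on \(\{\theta_1\}^c\), where \(b\) is strictly optimal. Both signals are on path, so consistency of beliefs is immediate. The receiver follows the signals.

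Sender incentives then hold as follows:
\begin{itemize}
\item \(\theta_1\) gets \(a\) for sure, which is his best outcome among \(\{a,b\}\), so no deviation helps.
\item \(\theta_2\) gets a lottery that is strictly better than \(b\), and any deviation yields \(b\).
\item Every other type cannot send \(m_1\) or \(m_2\), so all of his available messages produce \(b\).
\end{itemize}

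\textbf{Comparison.} The mediated payoffs are \(v(a,\theta_1)\) at \(\theta_1\), \(pv(a,\theta_2)+(1-p)v(b,\theta_2)>v(b,\theta_2)\) at \(\theta_2\), and \(v(b,\theta)\) at every other type.

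Compared with the direct-disclosure bound, types \(\theta_1\) and \(\theta_2\) are handled: \(\theta_1\) matches his bound and \(\theta_2\) strictly exceeds it. The remaining types need more care. A type \(\theta\notin\{\theta_1,\theta_2\}\) with \(v(b,\theta)\ge v(a,\theta)\) is the main obstacle. In direct disclosure he could be pooled so that the receiver plays \(a\) for him; that only hurts him, but his direct payoff could still exceed \(v(b,\theta)\) if he is indifferent, and it can never exceed \(v(b,\theta)\).

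So every type gets weakly more under mediation and \(\theta_2\) gets strictly more. Summing with full-support weights gives \(V^{\mathrm{MD}}>V^{\mathrm D}\), and the construction uses a pure sender strategy and two signals, as claimed.
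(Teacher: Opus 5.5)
Your proof is correct. Your receiver payoff and mediated equilibrium coincide with the paper's up to parametrization: the paper sets \(u(a,\theta)=-1\) for all \(\theta\neq\theta_1\) with \(\varepsilon\mu(\theta_1)<\min_{\theta\neq\theta_1}\mu(\theta)\), so your constants \(K\) and \(L\) are inessential.

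Where you differ is the bound on direct disclosure. The paper first invokes \Cref{lem:sd-disclosure-rounding}, which replaces each \(p_m\) by \(\mathbf 1_{\{p_m\ge t\}}\) and averages over a uniform threshold. This reduces to equilibria with deterministic receiver responses. The paper then applies the aggregate obedience constraint to a type \(\theta\neq\theta_1\) who strictly prefers \(a\) and so would receive \(a\) with probability one.

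You instead use the receiver's best-response condition at each on-path message separately and sum over the messages that type uses. This handles mixed receiver responses directly. It also isolates what mediation relaxes: under direct disclosure, any positive probability of \(a\) after \(m\) requires \(a\) to be optimal at \(m\)'s own posterior. Mediation needs only the constraint aggregated with weights \(p_m\).

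Your route is self-contained and yields a type-by-type bound valid in every direct-disclosure equilibrium. Hence your mediated equilibrium weakly interim-dominates all of them, with \(\theta_2\) strictly better. The paper's route is shorter given a lemma it already needs for \Cref{prop:sd-some-improvement}.

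One sentence in your comparison step is self-contradictory and should be replaced: you say an indifferent type's direct payoff ``could still exceed'' \(v(b,\theta)\) and then that it ``can never exceed'' it. A type with \(v(b,\theta)\ge v(a,\theta)\) is no obstacle at all. His direct payoff is a mixture of \(v(a,\theta)\le v(b,\theta)\) and \(v(b,\theta)\), exactly as your upper-bound section already shows.
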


Our proof adapts the construction described after \Cref{prop:some}. Receiver payoffs are chosen so that the sender's best direct-disclosure payoff is attained when \(\theta_1\) submits his exclusive message and all other types submit \(m_0\). The receiver chooses \(a\) after \(\theta_1\)'s exclusive message and \(b\) after \(m_0\). Under mediation, the mediator recommends \(a\) with certainty after \(\theta_1\)'s exclusive message and with a small positive probability after \(\theta_2\)'s exclusive message. All other messages induce \(b\).

Both \(\theta_1\) and \(\theta_2\) optimally submit their exclusive messages: neither can submit the other's, and both strictly prefer \(a\) to \(b\). Every other type can induce only \(b\), because neither exclusive message is available to him. As in \Cref{prop:some}, the receiver is willing to follow both recommendations when the probability of recommending \(a\) after \(\theta_2\)'s message is sufficiently small. Mediation thus improves upon the sender's best direct-disclosure equilibrium, strictly increasing \(\theta_2\)'s payoff and leaving every other type's payoff unchanged.\footnote{We provide a weaker condition on messages in the supplementary appendix that does not require either message to be exclusive.}

With two actions and complete provability, the condition in \Cref{prop:sd-two-exclusive} is also necessary for mediation to strictly improve on direct disclosure for some receiver payoff.

\begin{corollary}\label{cor:sd-complete-binary}
Suppose \(|A|=2\) and complete provability holds. There exists a receiver payoff \(u\) for which \(V^{\mathrm{MD}}(\mu,u,v)>V^{\mathrm D}(\mu,u,v)\) if and only if two distinct types strictly prefer the same action.

In particular, with at least three types, mediation can strictly improve on direct disclosure for some receiver preferences even when sender types have conflicting preferences.
\end{corollary}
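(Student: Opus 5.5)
"If" follows directly from \Cref{prop:sd-two-exclusive}. Suppose two distinct types \(\theta_1,\theta_2\) strictly prefer the same action, say \(a\) over \(b\), where \(A=\{a,b\}\). Under complete provability each has an exclusive message. That proposition then delivers a receiver payoff \(u\) with \(V^{\mathrm{MD}}(\mu,u,v)>V^{\mathrm D}(\mu,u,v)\).

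For "only if," suppose no two distinct types strictly prefer the same action. Since each \(v(\cdot,\theta)\) is nonconstant and \(|A|=2\), every type strictly prefers exactly one action. The hypothesis therefore leaves at most one type strictly preferring \(a\) and at most one strictly preferring \(b\), so \(|\Theta|\le2\). If \(|\Theta|=1\), mediation and direct disclosure trivially coincide, since the only obedient outcomes are the receiver's best responses at the prior. So assume \(\Theta=\{\theta_a,\theta_b\}\), with \(\theta_a\) strictly preferring \(a\) and \(\theta_b\) strictly preferring \(b\).

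Under complete provability, direct disclosure has a fully revealing equilibrium: each type sends his exclusive message, and the receiver best-responds to each type. Off-path beliefs are pinned to \(E_m\). Every off-path message \(m\) is either exclusive to one type or equals \(m_0\) or another message available to both types. The plan is to show that \(V^{\mathrm D}\) already equals the maximum of the sender's payoff over the outcomes any mediated equilibrium can induce.

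Fix a mediated equilibrium inducing outcome \(q\). Each type can send \(m_0\) and the other's lottery is not available to him, so the incentive constraints only say that \(\theta_a\) weakly prefers his own lottery over \(m_0\)'s lottery, and similarly for \(\theta_b\). The key step is a case split on receiver preferences at the degenerate beliefs.

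\begin{enumerate}[label=(\alph*),nosep]
\item If the receiver's type-by-type best responses give each type his preferred action, the fully revealing direct-disclosure equilibrium gives both types their best payoffs, so direct disclosure is weakly better than \(q\) type by type.
\item If they give both types the same action \(c\), then obedience of \(q\) and linearity imply that any mediated outcome can put weight on the other action only in a way that weakly hurts the ex ante payoff relative to the fully revealing outcome. I would verify this by writing obedience for the two-type, two-action case and comparing with the corner solution.
\item If they give each type his least-preferred action, full revelation is bad for the sender. Instead, I would use pooling on \(m_0\) in direct disclosure, or a partially pooling equilibrium where one type mixes, and show that the mediated optimum is attained there.
\end{enumerate}

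The main obstacle is case (c). There I would exploit the fact that the two types have opposite preferences. Any mediated lottery that shifts \(\theta_a\) toward \(a\) relative to the \(m_0\)-lottery \(\ell_0\) must shift \(\theta_b\) toward \(b\), since incentive compatibility requires each type's lottery to be weakly better for him than \(\ell_0\). The plan is to show that obedience then forces \(q(\theta_a)=q(\theta_b)=\ell_0\) or a configuration replicable by direct disclosure, where the receiver mixes after \(m_0\) and the types pool. I would handle this by a direct linear-programming argument over the four numbers \(q(a\mid\theta_a),q(a\mid\theta_b),\ell_0(a)\) and the obedience inequalities.

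The final sentence then follows: with three or more types and two actions, pigeonhole gives two types strictly preferring the same action, so mediation helps for some \(u\), even if a third type has the opposite preference.
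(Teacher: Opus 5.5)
Your proposal is correct and follows the paper's route: sufficiency from \Cref{prop:sd-two-exclusive}, and necessity by reducing to two types with opposite strict preferences. From there you run, as in \Cref{lem:sd-opposite-binary}, a case split on the receiver's preferences at each type, combined with the common-message constraint \(p_a\ge p_0\ge p_b\).

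The step you defer in case (c) takes one line. Write \(x,y\) for the prior-weighted receiver gains from \(a\) over \(b\) at \(\theta_a,\theta_b\), so \(x<0<y\). The ordering gives \(xp_a+yp_b\le p_b(x+y)\), and together with obedience \(xp_a+yp_b\ge\max\{0,x+y\}\) this forces \(p_a=p_b\). So pooling on \(m_0\) suffices, and each exclusive message is deterred by the receiver's best response at that type, which is his least-preferred action.

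One correction: your opening claim that complete provability always yields a fully revealing disclosure equilibrium is false in case (c), because no response to \(m_0\) deters both types at once. Your argument never relies on it there, so the proof is unaffected.
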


With only two actions, our assumption that \(v(\cdot,\theta)\) is nonconstant for every type \(\theta\) means that each type strictly prefers one action to the other. Among three or more types, at least two must therefore prefer the same action. Complete provability gives these types exclusive messages, so \Cref{prop:sd-two-exclusive} implies that, for some receiver preferences, mediation gives the sender a strictly higher expected payoff than he can obtain in any direct-disclosure equilibrium.

By contrast, with two types who strictly prefer opposite actions, mediation cannot give the sender a higher expected payoff than his best direct-disclosure equilibrium for any receiver preferences. This holds even though complete provability gives each type a message that rules out the other type and thus identifies him. If the nondegeneracy condition \eqref{eq:sd-nondegeneracy} also holds, \Cref{thm:sd-value} implies that, for some receiver preferences, information design gives the sender a strictly higher expected payoff than his best direct-disclosure equilibrium, while mediation cannot.

\section{Ranking evidence structures}\label{sec:evidence-comparison}

So far, we have explored the gains from mediation for a given evidence structure. We now compare different evidence structures, asking when every mediation outcome under one is also a mediation outcome under another, and how this relates to comparisons of the sender's highest mediated payoff. As before, we first consider type-independent sender preferences and then turn to the general case.

Let \(\mathcal E^1\) and \(\mathcal E^2\) be evidence structures on the same type space \(\Theta\). We add a superscript \(i\) to the notation \(M\), \(M(\theta)\), and \(E_m\) to distinguish the two structures, and write \(V_i^{\mathrm{MD}}(\mu,u,v)\) for \(V^{\mathrm{MD}}(\mu,u,v)\) under \(\mathcal E^i\).

\subsection{Type-independent sender preferences}\label{subsec:evidence-state-independent}

Suppose the sender's payoff does not depend on his type, and write \(v(a,\theta)=v(a)\).

We begin by defining \emph{provable sets of types} (\emph{provable sets} for short). A set \(B\subseteq\Theta\) is provable under evidence structure \(i\) if every type in \(B\) has an available message that rules out every type outside \(B\); that is,
\[
\text{for every }\theta\in B,\quad \text{there exists }m\in M^i\text{ such that }\theta\in E_m^i\subseteq B.
\]
In particular, complete provability is equivalent to every set of types being provable.

Provability requires messages that rule out types outside a set. We also need to consider what these messages reveal about types within the set. For each nonempty \(B\subseteq\Theta\), let
\[
M_B^i\coloneqq\{m\in M^i\colon E_m^i\subseteq B\}
\]
be the set of messages that rule out every type outside \(B\). Each type in \(B\) can randomize over the messages in \(M_B^i\) that are available to him, and we represent such a sender strategy by a \emph{reporting matrix} \(S\), where \(S_{\theta m}=\sigma(m\mid\theta)\) is the probability that type \(\theta\) submits message \(m\). The set of all such matrices is
\[
\Sigma_i(B)\coloneqq\left\{S\in\R_+^{B\times M_B^i}\colon\sum_{m\in M_B^i}S_{\theta m}=1\text{ for every }\theta\in B,\quad S_{\theta m}=0\text{ if }\theta\notin E_m^i\right\}.
\]
Each row is a probability distribution that assigns zero probability to unavailable messages. The set \(\Sigma_i(B)\) is nonempty exactly when \(B\) is provable under evidence structure \(i\).

We write \(\mathcal E^1\succeq\mathcal E^2\) if, for every nonempty set of types \(B\), every reporting matrix under evidence structure \(\mathcal E^2\) can be obtained by garbling a reporting matrix under evidence structure \(\mathcal E^1\), with both using messages that rule out every type outside \(B\). Formally,
\begin{equation}\tag{E}\label{eq:evidence-order}
\begin{gathered}\text{for every nonempty }B\subseteq\Theta\text{ and every }S^2\in\Sigma_2(B),\\ \text{there exist }S^1\in\Sigma_1(B)\text{ and a row-stochastic matrix }K\\ \text{such that }S^2=S^1K.\end{gathered}
\end{equation}
Here \(K_{mm'}\) is the probability of converting message \(m\) under evidence structure \(\mathcal E^1\) into message \(m'\) under evidence structure \(\mathcal E^2\). The same conversion rule applies to every type. The equality \(S^2=S^1K\) requires the resulting distribution of messages to equal the corresponding row of \(S^2\) for each type in \(B\).

\begin{theorem}\label{thm:evidence-order}
The following statements are equivalent:
\begin{enumerate}[label=\textup{(\roman*)},nosep]
\item\label{evidence-order-condition} \(\mathcal E^1\succeq\mathcal E^2\).
\item\label{evidence-order-payoff} For every set of actions \(A\), every type-independent sender payoff \(v\colon A\to\R\), every receiver payoff \(u\), and every full-support prior \(\mu\),
\(V_1^{\mathrm{MD}}(\mu,u,v)\ge V_2^{\mathrm{MD}}(\mu,u,v)\).
\item\label{evidence-order-outcomes} For every such \(A,v,u,\mu\), every mediation outcome under evidence structure \(\mathcal E^2\) is also a mediation outcome under evidence structure \(\mathcal E^1\).
\end{enumerate}
\end{theorem}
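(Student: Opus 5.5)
The plan is to prove the cycle \ref{evidence-order-condition}\(\Rightarrow\)\ref{evidence-order-outcomes}\(\Rightarrow\)\ref{evidence-order-payoff}\(\Rightarrow\)\ref{evidence-order-condition}.

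\textbf{Step 1: \ref{evidence-order-outcomes}\(\Rightarrow\)\ref{evidence-order-payoff}.} This is immediate. With type-independent \(v\), the sender's payoff is a function of the outcome alone, and \(V^{\mathrm{MD}}_i\) is a supremum over mediation outcomes.

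\textbf{Step 2: a structural lemma for mediated equilibria with type-independent \(v\).} Fix an equilibrium under \(\mathcal E^2\). Let \(w(m)\) be the sender's expected payoff from submitting \(m\), and let \(U(\theta)=\max_{m\in M^2(\theta)}w(m)\). List the distinct values of \(U\) as \(c_1>\dots>c_k\), and set \(B_j=\{\theta\colon U(\theta)\ge c_j\}\) and \(D_j=B_j\setminus B_{j-1}\).
\begin{itemize}
\item Any message with \(w(m)\ge c_j\) cannot be available to a type outside \(B_j\), since that type would then earn at least \(c_j\). Hence every message used by a type in \(D_j\) has value \(c_j\) and lies in \(M^2_{B_j}\).
\item Consequently, restricting the equilibrium strategy to \(B_j\) gives a reporting matrix \(S^2_j\in\Sigma_2(B_j)\).
\end{itemize}

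\textbf{Step 3: \ref{evidence-order-condition}\(\Rightarrow\)\ref{evidence-order-outcomes}.} For each \(j\), apply \eqref{eq:evidence-order} to get \(S^1_j\in\Sigma_1(B_j)\) and a row-stochastic \(K_j\) with \(S^2_j=S^1_jK_j\). Under \(\mathcal E^1\), let each \(\theta\in D_j\) report according to his row of \(S^1_j\).

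The mediator must recover \(j\) from the message alone. For \(m\in M^1\), let \(j(m)\) be the smallest \(j\) with \(E^1_m\subseteq B_j\). If \(\theta\in D_j\) uses \(m\), then \(m\in M^1_{B_j}\) and \(\theta\notin B_{j'}\) for \(j'<j\), so \(j(m)=j\).

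The mediator then works as follows.
\begin{itemize}
\item After a used \(m\), it draws \(m'\sim K_{j(m)}(m,\cdot)\) and applies the \(\mathcal E^2\) mediator \(g^2(\cdot\mid m')\).
\item After an unused \(m\), it applies \(g^2\) to a fixed on-path \(\mathcal E^2\) message of value \(c_k\).
\end{itemize}

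Because the row of \(S^1_jK_j\) for \(\theta\in D_j\) equals \(\theta\)'s \(\mathcal E^2\) strategy, each type faces the same signal distribution as before. The receiver therefore has the same joint distribution over (type, signal), only old on-path signals occur, and obedience and consistency carry over.

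For sender incentives, a used \(m\) with \(j(m)=j'\) is garbled, with positive probability, only into \(\mathcal E^2\) messages used by \(D_{j'}\). These all have value \(c_{j'}\), so \(m\) is worth \(c_{j'}\). Any \(\theta\) who can send \(m\) lies in \(B_{j'}\), so he already earns at least \(c_{j'}\). Unused messages are worth \(c_k\), which is no more than any type's payoff. So no type gains by deviating.

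\textbf{Step 4: \ref{evidence-order-payoff}\(\Rightarrow\)\ref{evidence-order-condition}, by contrapositive.} Suppose \eqref{eq:evidence-order} fails at \(B\) and \(S^2\in\Sigma_2(B)\).

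First, the outcome-level version is clean. Take actions \(A=M^2_B\cup\{a^-\}\), with \(v=1\) on \(M^2_B\) and \(v(a^-)=0\). Let the receiver be indifferent over \(M^2_B\) and strictly prefer these actions to \(a^-\) at types in \(B\), and strictly prefer \(a^-\) at types outside \(B\).
\begin{itemize}
\item Under \(\mathcal E^2\), the outcome \(q(\theta)=S^2_\theta\) on \(B\) and \(\delta_{a^-}\) off \(B\) is a mediation outcome. The mediator recommends \(m\) after \(m\in M^2_B\) and \(a^-\) otherwise.
\item Suppose \(q\) were a mediation outcome under \(\mathcal E^1\). Types in \(B\) earn \(1\) and outsiders earn \(0\). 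Any \(\mathcal E^1\) message used by a type in \(B\) must be worth \(1\), so it cannot be available to an outsider, and the \(B\)-rows of the strategy form some \(S^1\in\Sigma_1(B)\).
\item The mediator followed by the receiver is then a row-stochastic \(K\) from messages to \(M^2_B\), with \(S^1K=S^2\). This contradicts the failure of \eqref{eq:evidence-order}.
\end{itemize}

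\textbf{Main obstacle: upgrading this to a strict payoff gap.} With indifference, \(\mathcal E^1\) may reach payoff \(\mu(B)\) by other routes. I would replace indifference by a strict decision problem on \(B\) in the spirit of \citet{blackwell1953}, and try two routes in order.
\begin{itemize}
\item \emph{Separation route.} Choose receiver payoffs on \(B\times M^2_B\), via a separating hyperplane between \(S^2\) and \(\mathcal G=\{S^1K\colon S^1\in\Sigma_1(B),\,K\text{ row-stochastic}\}\), such that recommending the \(S^2\)-revealed message is strictly obedient. The sender gets \(1\) on \(M^2_B\) only when the receiver is willing. Then \(\mathcal E^1\) attains \(\mu(B)\) only if some element of \(\mathcal G\) meets the same obedience constraints, which the separation rules out.

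The difficulty is that \(\mathcal G\) need not be convex, because products with a common \(K\) need not mix. I would first check whether sender mixing plus message relabeling makes \(\mathcal G\) convex, or whether its convex hull already satisfies the same characterization.
\item \emph{Fallback route.} If convexity fails, use a compactness argument: \(\mathcal G\) is compact, so \(S^2\) is at positive distance from it. Then choose a strictly proper scoring-type receiver payoff over many actions, so that approximately matching \(\mathcal E^2\)'s payoff forces approximately matching the induced posteriors.
\end{itemize}
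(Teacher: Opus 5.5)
Your Steps 1--3 are sound. Step 3 is essentially the paper's sufficiency argument: your level index $j(m)$ does the work of the paper's disjoint message sets $J_c$, and you compose the old mediator $g^2$ after $K_{j(m)}$ where the paper composes final-action lotteries and recommends actions. The outcome-level argument at the start of Step 4 is also correct, but it proves (iii)$\Rightarrow$(i), not (ii)$\Rightarrow$(i). Since (ii) is a priori weaker than (iii), your cycle is not closed. The missing implication is the substantive half of the theorem: that payoff dominance alone forces \eqref{eq:evidence-order}.

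Neither of your proposed upgrades works as stated. The separation route fails because the non-convexity you flag is real, and it can place $S^2$ in $\mathrm{conv}\,\mathcal G\setminus\mathcal G$. Take $\Theta=\{1,2,3,4\}$ and $B=\{1,2,3\}$, with both structures also containing a common message. Let $\mathcal E^1$ have $E^1_{m_a}=\{1,2\}$ and $E^1_{m_b}=\{1,3\}$, and let $\mathcal E^2$ have $E^2_{n_x}=\{1,2,3\}$ and $E^2_{n_y}=\{2,3\}$. In every element of $\mathcal G$, type $1$'s row lies on the segment between the rows of types $2$ and $3$. Hence $S^2$ with rows $\delta_{n_x}$, $\tfrac12(\delta_{n_x}+\delta_{n_y})$, $\tfrac12(\delta_{n_x}+\delta_{n_y})$ is not in $\mathcal G$. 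Yet it is the average of $(\delta_{n_x},\delta_{n_x},\delta_{n_y})$ and $(\delta_{n_x},\delta_{n_y},\delta_{n_x})$, both of which are in $\mathcal G$, so no hyperplane separates $S^2$ from $\mathcal G$.

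What is needed instead is a receiver problem in which the $\mathcal E^2$ outcome built from $S^2$ is the \emph{unique} obedient outcome attaining sender payoff $\mu(B)$. Then any $\mathcal E^1$ equilibrium attaining that payoff must reproduce $S^2$ exactly, and the non-convexity of $\mathcal G$ becomes irrelevant. The paper achieves this in four steps:
\begin{enumerate}
\item[(a)] It reduces $M^2_B$ to a set $N_B$ of at most $|B|$ messages via a maximum matching. Every element of $\Sigma_2(B)$ is a garbling of a matrix on $N_B$, and those with linearly independent columns are dense.
\item[(b)] It gives each $n_j\in N_B$ a sender-favorable action $a_j$ that is receiver-optimal only at the induced posterior $\pi_j$. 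This is enforced by sender-unfavorable actions $b_{jt}$ that beat $a_j$ at every other belief.
\item[(c)] Linear independence of the $\pi_j$ pins down their weights, so matching the posteriors exactly yields $S^2=S^1K$.
\item[(d)] Compactness of mediation outcomes makes $V_1^{\mathrm{MD}}$ attained; a limit over the dense set and a composition of garblings then finish the proof.
\end{enumerate}
Your scoring-rule fallback gestures at (b) but misses (a) and (c). With more than $|B|$ messages in $M^2_B$, the induced posteriors are linearly dependent. Then even exact matching of posteriors under $\mathcal E^1$ yields only some $S'\in\mathcal G$ with the same posteriors and different weights, not $S^2$ itself.
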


\Cref{thm:evidence-order} shows that evidence structure \(\mathcal E^1\) reproduces every mediation outcome under evidence structure \(\mathcal E^2\) exactly when it can match every reporting matrix under evidence structure \(\mathcal E^2\) by garbling messages that rule out every type outside the same set.\footnote{The relation \(\succeq\) is reflexive and transitive, and becomes a partial order after identifying evidence structures that dominate each other.} The same condition characterizes when the sender's highest equilibrium payoff under mediation is never lower under evidence structure \(\mathcal E^1\) than under evidence structure \(\mathcal E^2\), regardless of the set of actions \(A\) and both players' preferences. The equivalence in \Cref{thm:evidence-order} also holds when the prior is fixed at any full-support distribution.

The order in \eqref{eq:evidence-order} follows the logic of Blackwell's comparison of experiments. \citet{blackwell1953} asks whether one experiment can be obtained by garbling another, while \citet{gossner2000} applies a related comparison to the equilibrium outcomes supported by information structures. Here, each evidence structure permits a set of reporting matrices, and evidence structure \(\mathcal E^1\) dominates evidence structure \(\mathcal E^2\) when every matrix permitted by \(2\) can be obtained by garbling some matrix permitted by \(1\).

To see why \(\mathcal E^1\succeq\mathcal E^2\) lets evidence structure \(\mathcal E^1\) reproduce every mediation outcome under evidence structure \(\mathcal E^2\), fix a mediated equilibrium under evidence structure \(\mathcal E^2\) and let \(q\) be its outcome. Let \(w(\theta)\coloneqq\sum_{a\in A}v(a)q(a\mid\theta)\) denote the sender's expected payoff conditional on type \(\theta\) under \(q\). For each \(c\in\R\), let \(B_c\coloneqq\{\theta\in\Theta\colon w(\theta)\ge c\}\) be the set of types whose expected payoff is at least \(c\). In this equilibrium, every message used by a type in \(B_c\) delivers at least \(c\). Since the sender's payoff does not depend on his type, a type outside \(B_c\) who could submit a message used by a type in \(B_c\) would also obtain at least \(c\), exceeding his equilibrium payoff. Messages used by types in \(B_c\) must therefore be unavailable to types outside \(B_c\). For nonempty \(B_c\), the equilibrium reporting probabilities of types in \(B_c\) therefore form a matrix in \(\Sigma_2(B_c)\); in particular, \(B_c\) is provable. Condition \eqref{eq:evidence-order} guarantees that this matrix can be reproduced by garbling a reporting matrix under evidence structure \(\mathcal E^1\). The messages used under evidence structure \(\mathcal E^1\) also rule out every type outside \(B_c\).

Provability of these sets is not enough: the messages must also provide enough information about which type in \(B_c\) the sender is to support the receiver's responses. Condition \eqref{eq:evidence-order} preserves both requirements. Applying it at each attained payoff \(c\), the mediator can reproduce the original reporting matrix on \(B_c\) by garbling messages under evidence structure \(\mathcal E^1\) that rule out every type outside \(B_c\). Using this exclusion, we choose a sender strategy in which types with different values of \(w(\theta)\) never use the same message. The mediator can therefore specify one action lottery after each message. After garbling, the mediator uses the action lottery induced by the resulting message in the mediated equilibrium under evidence structure \(\mathcal E^2\) that induces \(q\), reproducing \(q(\theta)\) for every type \(\theta\). A message used under this strategy gives an expected payoff \(c\) and excludes all types with \(w(\theta)<c\). Any type who can submit it therefore already receives at least \(c\), so cannot gain by switching to it. Messages unused by every type instead induce the sender's least-preferred action among those occurring under \(q\), which also cannot improve any type's payoff. Because \(q\) is preserved, the receiver remains willing to follow the recommendations.

If condition \eqref{eq:evidence-order} fails, we can choose the set of actions and both players' preferences so that the sender's highest equilibrium payoff under mediation is strictly higher under evidence structure \(\mathcal E^2\) than under evidence structure \(\mathcal E^1\). We choose receiver preferences that make the sender's preferred actions optimal only at the posteriors generated by a selected reporting matrix on a set \(B\). Under evidence structure \(\mathcal E^2\), a mediated equilibrium uses this matrix and gives types in \(B\) only the sender's most-preferred actions and all other types only his least-preferred actions. We choose preferences so that matching this expected payoff under evidence structure \(\mathcal E^1\) would require the selected reporting matrix to be a garbling of a matrix in \(\Sigma_1(B)\). When \eqref{eq:evidence-order} fails, we can choose the reporting matrix so that no such garbling exists.

We next give a condition on \(\mathcal E^1\) under which provable sets characterize which obedient outcomes are mediation outcomes. This characterization also reduces the comparison in \eqref{eq:evidence-order} to inclusion of provable sets. Let
\[
\mathcal P_i\coloneqq\left\{\bigcup_{m\in D}E_m^i\colon D\subseteq M^i\right\}
\]
be the family of provable sets under evidence structure \(i\). For a single evidence structure, omit the index. Consider the condition
\begin{equation}\tag{H}\label{eq:evidence-H}
 \begin{gathered}
 \text{for every nonempty }B\in\mathcal P,
 \text{ there is an injective map }f_B\colon B\to M\\
 \text{such that }\theta\in E_{f_B(\theta)}\subseteq B
 \text{ for every }\theta\in B.
 \end{gathered}
\end{equation}
In words, \eqref{eq:evidence-H} states that whenever \(B\) is provable, we can assign each type in \(B\) a distinct message, with every assigned message ruling out all types outside \(B\). When types follow this assignment, the mediator can tell which type sent each message. After receiving the message assigned to type \(\theta\), the mediator can randomly generate a report according to \(\theta\)'s row of any desired reporting matrix on \(B\). Thus, if \(\mathcal E^1\) satisfies \eqref{eq:evidence-H}, condition \eqref{eq:evidence-order} holds whenever every set provable under \(\mathcal E^2\) is also provable under \(\mathcal E^1\).

Condition \eqref{eq:evidence-H} is related to the evidence-base concept of \citet{hagenbachkoesslerperezrichet2014}. An evidence base assigns each type a distinct message such that no other type who can submit that message wants to masquerade as that type, relative to the selected continuation equilibrium. Condition \eqref{eq:evidence-H} instead depends only on the evidence structure: within every provable set, it requires distinct messages that rule out every type outside the set.

\begin{theorem}\label{thm:evidence-H}
Suppose evidence structure \(\mathcal E^1\) satisfies \eqref{eq:evidence-H}. Fix any set of actions \(A\), any type-independent sender payoff \(v\), and any full-support prior \(\mu\). The following statements are equivalent:
\begin{enumerate}[label=\textup{(\roman*)},nosep]
\item\label{evidence-H-inclusion} \(\mathcal P_2\subseteq\mathcal P_1\).
\item\label{evidence-H-payoff} For every receiver payoff \(u\),
\(V_1^{\mathrm{MD}}(\mu,u,v)\ge V_2^{\mathrm{MD}}(\mu,u,v)\).
\item\label{evidence-H-outcomes} For every receiver payoff \(u\), every mediation outcome under evidence structure \(\mathcal E^2\) is also a mediation outcome under evidence structure \(\mathcal E^1\).
\end{enumerate}
Moreover, an obedient outcome \(q\) is a mediation outcome under evidence structure \(\mathcal E^1\) if and only if
\begin{equation}\label{eq:evidence-upper-sets}
 \{\theta\in\Theta \colon w(\theta)\ge c\}\in\mathcal P_1
 \qquad\text{for every }c\in\R.
\end{equation}
Whenever this condition holds, a pure sender strategy and at most \(|A|\) signals suffice.
\end{theorem}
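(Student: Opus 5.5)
The plan is to prove the \emph{moreover} characterization first, for an arbitrary evidence structure on the necessity side and under \eqref{eq:evidence-H} on the sufficiency side. The three equivalences then follow from it.

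\emph{Necessity of \eqref{eq:evidence-upper-sets}.} This holds for any evidence structure and is the argument sketched after \Cref{thm:evidence-order}. Fix a mediated equilibrium inducing \(q\), a value \(c\), and \(B_c=\{\theta\colon w(\theta)\ge c\}\ne\emptyset\).
\begin{itemize}
\item By sender optimality, every message used by a type in \(B_c\) yields expected payoff at least \(c\), because it delivers that type's equilibrium payoff.
\item Since \(v\) is type-independent, a type outside \(B_c\) who could send such a message would obtain at least \(c>w(\theta)\). Hence \(E_m\subseteq B_c\) for every message \(m\) used by a type in \(B_c\).
\item Therefore \(B_c\) is the union of the sets \(E_m\) over these messages, so \(B_c\in\mathcal P\).
\end{itemize}

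\emph{Sufficiency under \eqref{eq:evidence-H}.} Let \(c_1>\dots>c_k\) be the distinct values of \(w\), and let \(B_j=\{\theta\colon w(\theta)\ge c_j\}\in\mathcal P_1\). A type \(\theta\) with \(w(\theta)=c_j\) sends \(f_{B_j}(\theta)\).

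These messages are distinct across all types. Within a level this is injectivity of \(f_{B_j}\). Across levels, suppose \(f_{B_j}(\theta)=f_{B_{j'}}(\theta')\) with \(j'<j\). Then \(\theta\in E_m\subseteq B_{j'}\), which contradicts \(w(\theta)=c_j<c_{j'}\).

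Take \(Y\) to be the set of actions used by \(q\), so \(|Y|\le|A|\). After \(\theta\)'s assigned message the mediator draws \(y\sim q(\theta)\). After every other message the mediator recommends the sender's least-preferred action in \(Y\).
\begin{itemize}
\item Receiver: following recommendations is optimal by \eqref{eq:obedience}. Every signal is on path, so consistency of beliefs is automatic.
\item Sender: if \(\theta\) can send the message assigned to \(\theta'\), then \(\theta\in E_{f(\theta')}\subseteq B_{\ell(\theta')}\), where \(\ell(\theta')\) is the level of \(\theta'\). Hence \(w(\theta)\ge w(\theta')\), so this deviation does not help. Unassigned messages give the minimum payoff over \(Y\), which is at most \(w(\theta)\).
\end{itemize}
This also delivers the pure-strategy, \(|A|\)-signal claim.

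\emph{Equivalences.}
\begin{itemize}
\item (i)\(\Rightarrow\)(iii): a mediation outcome under \(\mathcal E^2\) has its upper sets in \(\mathcal P_2\) by necessity. By (i) they lie in \(\mathcal P_1\), so sufficiency applies.
\item (iii)\(\Rightarrow\)(ii): immediate.
\end{itemize}

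\emph{(ii)\(\Rightarrow\)(i), the main obstacle.} Argue by contraposition. Take \(B\in\mathcal P_2\setminus\mathcal P_1\). Let \(B'\subsetneq B\) be the union of all \(\mathcal P_1\)-sets contained in \(B\); it lies in \(\mathcal P_1\) because \(\mathcal P_1\) is closed under unions. Fix \(a^+\in\argmax v\) and \(a^-\in\argmin v\), which are distinct since \(v\) is nonconstant. Define the receiver payoff
\[
u(a^+,\theta)=\begin{cases}\varepsilon,&\theta\in B,\\-1,&\theta\notin B,\end{cases}\qquad u(a^-,\cdot)=0,
\]
with every other action strictly dominated.

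Under \(\mathcal E^2\), types in \(B\) send messages in \(M^2_B\) and all other types send \(m_0\). The mediator sends one signal after messages in \(M^2_B\) and another otherwise. This induces \(a^+\) exactly on \(B\) and gives \(V_2^{\mathrm{MD}}\ge v(a^-)+(v(a^+)-v(a^-))\mu(B)\).

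Under \(\mathcal E^1\), any mediation outcome uses only \(a^\pm\). Obedience for \(a^+\) bounds \(\sum_{\theta\notin B}\mu(\theta)q(a^+\mid\theta)\le\varepsilon\mu(B)\), so \(q(a^+\mid\theta)\le\varepsilon/\min\mu\) for every \(\theta\notin B\).

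The key step concerns a type \(\theta\in B\setminus B'\). Its upper set \(\{w\ge w(\theta)\}\) lies in \(\mathcal P_1\) by necessity. It is not contained in \(B\), since otherwise it would lie in \(B'\). Hence it contains some type outside \(B\), and so \(q(a^+\mid\theta)\le\varepsilon/\min\mu\).

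Consequently the sender's payoff under \(\mathcal E^1\) is at most \(v(a^-)+(v(a^+)-v(a^-))\bigl(\mu(B')+O(\varepsilon)\bigr)\). This bound is uniform over mediated equilibria, so it also bounds the supremum \(V_1^{\mathrm{MD}}\). Because \(\mu(B\setminus B')>0\) under a full-support prior, choosing \(\varepsilon\) small makes it strictly below \(V_2^{\mathrm{MD}}\), contradicting (ii).
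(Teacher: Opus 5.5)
Your proof is correct and follows essentially the paper's route: the upper-contour characterization (necessity for any evidence structure, sufficiency via the injections $f_{B_c}$), then (i)$\Rightarrow$(iii)$\Rightarrow$(ii), and (ii)$\Rightarrow$(i) via a two-action test payoff that rewards $a^+$ only on $B$. The one difference is that the paper sets the receiver's gain from $a^+$ on $B$ to exactly zero rather than $\varepsilon$. Obedience then forces $q(a^+\mid\theta)=0$ outside $B$, which yields the exact value $v(a^-)+\mu(B')[v(a^+)-v(a^-)]$ under $\mathcal E^1$ without your $O(\varepsilon)$ limiting estimate.
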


\Cref{thm:evidence-H} shows that, under \eqref{eq:evidence-H}, an obedient outcome is a mediation outcome exactly when, for every real number \(c\), the set of types whose expected payoff is at least \(c\) is provable. In particular, two obedient outcomes that give the sender the same expected payoff conditional on each type are either both mediation outcomes or neither is, even if they assign different action lotteries.

We now show how \eqref{eq:evidence-H} lets us induce any obedient outcome \(q\) satisfying \eqref{eq:evidence-upper-sets}. For each type \(\theta\), select the message assigned to him by \eqref{eq:evidence-H} for the set \(B_{w(\theta)}\) and attach the lottery \(q(\theta)\). Types \(\theta\) with \(w(\theta)=c\) use the same assignment for \(B_c\), which gives them distinct messages by \eqref{eq:evidence-H}. Distinct messages allow these types to receive different action lotteries that give the sender the same expected payoff. Because the message assigned to \(\theta\) rules out types whose expected payoff under \(q\) is below \(w(\theta)\), the reasoning used after \Cref{thm:evidence-order} again ensures that types with different values of \(w(\theta)\) receive different assigned messages and that no type gains by switching to another type's assigned message. Every message not assigned to a type induces the sender's least-preferred action among those occurring under \(q\), so no type gains by submitting it. Thus, evidence imposes no further restriction beyond provability of the sets \(B_c\) and receiver obedience. In particular, complete provability satisfies \eqref{eq:evidence-H} and makes this restriction automatic, recovering \Cref{thm:complete}.

Condition \eqref{eq:evidence-H} is imposed only on evidence structure \(\mathcal E^1\); \(\mathcal E^2\) can be any evidence structure. Under this condition, \[\mathcal E^1\succeq\mathcal E^2\quad\Longleftrightarrow\quad\mathcal P_2\subseteq\mathcal P_1.\] Equivalently, for every message \(n\) under \(\mathcal E^2\), each type who can submit \(n\) has an available message under \(\mathcal E^1\) that rules out every type ruled out by \(n\). Unlike \Cref{thm:evidence-order}, this characterization also allows the action set and sender payoff to be fixed; only the receiver's payoff needs to vary.

If evidence structure \(\mathcal E^1\) fails \eqref{eq:evidence-H}, we can construct another evidence structure \(\mathcal E^2\) with exactly the same provable sets, \(\mathcal P_2=\mathcal P_1\). For some choice of the action set and both players' preferences, however, the sender's highest mediated payoff is strictly higher under \(\mathcal E^2\): \(V_2^{\mathrm{MD}}(\mu,u,v)>V_1^{\mathrm{MD}}(\mu,u,v)\). We obtain \(\mathcal E^2\) by adding distinct messages available precisely to the types in an already provable set. These messages allow the mediator to assign different actions to types within that set without making any additional set provable. %

\subsection{General sender preferences}\label{subsec:evidence-state-dependent}

We now allow the sender's payoff to depend on type. The argument for \Cref{thm:evidence-order} uses the fact that a given distribution of actions gives every type the same expected payoff. With type-dependent payoffs, condition \eqref{eq:evidence-order} alone need not preserve the sender's incentives. We consider a condition requiring that, whenever one message is converted into another, every type who can submit the first message can also submit the second.

Write \(\mathcal E^1\succeq^{\mathrm{G}}\mathcal E^2\) if
\begin{equation}\tag{G}\label{eq:evidence-state-dependent}
\begin{gathered}
\text{for every }S^2\in\Sigma_2(\Theta),\\
\text{there exist }S^1\in\Sigma_1(\Theta)\text{ and a row-stochastic matrix }K\\
\text{such that }S^2=S^1K,\\
\text{\&}\qquad K_{mn}>0\Longrightarrow E_m^1\subseteq E_n^2.
\end{gathered}
\end{equation}
Here \(K\) has rows indexed by \(M^1\) and columns indexed by \(M^2\). Condition \eqref{eq:evidence-order} compares reporting matrices separately on each set \(B\), using messages that rule out types outside \(B\), whereas \eqref{eq:evidence-state-dependent} compares reporting matrices on \(\Theta\) and restricts each conversion of messages.

To see why \eqref{eq:evidence-state-dependent} preserves the sender's incentives, fix a mediated equilibrium under \(\mathcal E^2\) with reporting matrix \(S^2\), and choose \(S^1,K\) as in that condition. Keep the receiver's original strategy. After receiving \(m\), the mediator draws \(n\) according to \(K\) and then draws a signal from the original distribution of signals following \(n\). The equality \(S^2=S^1K\) preserves the outcome and hence each type's expected payoff from following \(S^1\). The inclusion in \eqref{eq:evidence-state-dependent} ensures that every message \(n\) generated from \(m\) is available under \(\mathcal E^2\) to every type who can submit \(m\), including types who do not submit it under \(S^1\). For any such type, each of these messages gives at most his equilibrium expected payoff under \(\mathcal E^2\); a mixture of their resulting distributions of actions cannot give him a strictly higher expected payoff. Thus, the sender's prescribed strategy remains optimal.

Sequential equilibrium also requires the receiver's beliefs after unused signals to be consistent with small probabilities of sender deviations. In the original equilibrium under \(\mathcal E^2\), signals following different off-path messages can induce different beliefs because types may submit those messages with different relative probabilities. The construction using \(K\) may instead generate these signals after a single off-path message under \(\mathcal E^1\). If the signals arise only from that message, the receiver must have the same limiting belief after each of them, because the mediator's randomization conditional on a given message does not depend on the sender's type. The actions she originally chose after these signals need not all be optimal at this common belief. Thus, preserving the outcome and the sender's incentives does not by itself establish a mediated equilibrium. %

Compatible action rankings resolve this difficulty by providing a common minimizing action among those recommended with positive probability. The mediator can recommend that action after off-path messages while preserving the outcome and every type's incentives. Since the recommendation already occurs with positive probability in equilibrium, the receiver's belief follows Bayes' rule, and obedience ensures that following the recommendation is optimal for her.
\begin{theorem}\label{thm:evidence-state-dependent}
The following statements are equivalent:
\begin{enumerate}[label=\textup{(\roman*)},nosep]
\item\label{evidence-general-condition} \(\mathcal E^1\succeq^{\mathrm{G}}\mathcal E^2\).
\item\label{evidence-general-outcomes} For every set of actions \(A\), every sender payoff \(v\colon A\times\Theta\to\R\) satisfying \eqref{eq:sd-compatible-rankings}, every receiver payoff \(u\), and every full-support prior \(\mu\), every mediation outcome under evidence structure \(\mathcal E^2\) is also a mediation outcome under evidence structure \(\mathcal E^1\).
\end{enumerate}
The equivalence remains true if, in statement \ref{evidence-general-outcomes}, we fix any one full-support prior, set \(u\equiv0\), and restrict attention to sender payoffs under which all types have the same strict ranking of actions. Either statement implies \(V_1^{\mathrm{MD}}(\mu,u,v)\ge V_2^{\mathrm{MD}}(\mu,u,v)\) for every \(A,v,u,\mu\) in statement \ref{evidence-general-outcomes}.
\end{theorem}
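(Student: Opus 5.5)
The sufficiency direction (i)$\Rightarrow$(ii) follows the construction sketched before the statement. Fix $A$, $v$ satisfying \eqref{eq:sd-compatible-rankings}, $u$, $\mu$, and a mediated equilibrium under $\mathcal E^2$ with signal rule $(Y,g^2)$, sender strategy $S^2\in\Sigma_2(\Theta)$ and receiver strategy $r$, inducing outcome $q$. First I would normalize this equilibrium into a direct recommendation mechanism. Let the signal be the recommended action, so $n\mapsto \lambda(\cdot\mid n)\coloneqq\sum_y g^2(y\mid n)r(\cdot\mid y)$. This preserves each type's lottery after every message, including off-path ones, so sender incentives are unchanged. On path, obedience holds because recommendations pool signals at which the action was optimal.

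Next I take $S^1,K$ from \eqref{eq:evidence-state-dependent}. Let $A_q$ be the set of actions used with positive probability under $q$. By compatible rankings, the relation ``$a$ is weakly preferred to $b$ by all types'' is a complete preorder, because for each pair some direction is weakly preferred by every type. Hence $A_q$ contains an action $\underline a$ minimizing $v(\cdot,\theta)$ over $A_q$ for every $\theta$. The mediator under $\mathcal E^1$ uses $g^1(\cdot\mid m)=\sum_n K_{mn}\lambda(\cdot\mid n)$ whenever $m$ is used by some type under $S^1$, and $\delta_{\underline a}$ otherwise.

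I would then verify three things in turn.
\begin{itemize}[nosep]
\item The outcome is reproduced, because $S^1K=S^2$.
\item A type $\theta$ deviating to a used $m$ receives a mixture of the lotteries $\lambda(\cdot\mid n)$ with $K_{mn}>0$. Each such $n$ lies in $M^2(\theta)$, since $\theta\in E^1_m\subseteq E^2_n$, so each gives $\theta$ at most his equilibrium payoff.
\item A deviation to an unused $m$ gives $\underline a$, which is worst for $\theta$ within $A_q$. His equilibrium lottery is supported on $A_q$, so this deviation is not profitable either.
\end{itemize}
Every recommended action is in $A_q$ and hence on path, so the receiver's beliefs are pinned down by Bayes' rule and obedience of $q$ gives sequential rationality. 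Consistency then follows from trembles that put vanishing weight on unused messages. This gives (ii), and the payoff inequality $V_1^{\mathrm{MD}}\ge V_2^{\mathrm{MD}}$ follows by taking suprema.

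For (ii)$\Rightarrow$(i), I use only the restricted version: fixed $\mu$, $u\equiv0$ (so every outcome is obedient), and a common strict ranking. Given $S^2\in\Sigma_2(\Theta)$, I would take $A$ to be a copy of $M^2$ with a common strict ranking. The mediation outcome under $\mathcal E^2$ has the receiver play $n$ when $n$ is submitted, with the sender using $S^2$. For this to be an equilibrium, the messages in the support of each row must all be top-ranked among $M^2(\theta)$, which fails unless rows are degenerate.

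This is the main obstacle: a strict ranking forces each type to use only his best available message. I would therefore design payoffs adapted to $S^2$ in one of two ways. The first option is to add actions and let the mediator map message $n$ to the lottery $\lambda_n$, with the lotteries chosen so that all messages give every type equal value. The second option is to make the outcome assign a distinct ``fingerprint'' lottery to each $n$, namely $q(\theta)=\sum_n S^2_{\theta n}\lambda_n$ with the $\lambda_n$ linearly independent.

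Then, from a mediated equilibrium under $\mathcal E^1$ inducing $q$ with reporting $S^1$ and lotteries $\rho_m$, I would need to show that each $\rho_m$ used by $\theta$ lies in the convex hull of the $\lambda_n$ with $E^1_m\subseteq E^2_n$. Writing $\rho_m=\sum_n K_{mn}\lambda_n$ then gives the garbling, and incentive constraints for all types able to send $m$ force the support condition. To enforce this I would pick the ranking and the $\lambda_n$ so that a lottery acceptable to every type in $E^1_m$ must be a mixture of messages available to all of them. Getting this extremal argument right, possibly with a separating-hyperplane choice of $v$ when \eqref{eq:evidence-state-dependent} fails, is where I expect most of the work to be.
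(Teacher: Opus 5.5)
Your sufficiency direction is correct and is the paper's argument. Your complete-preorder derivation of a common minimizer in $A_q$ is a fine substitute for \Cref{claim:sd-common-extrema}.

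The gap is in (ii)$\Rightarrow$(i). You locate the obstacle but give no construction, and neither route you sketch works as stated.

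\emph{First option.} If all messages of $\mathcal E^2$ give every type the same value, a type in $E^1_m\setminus E^2_n$ loses nothing when $m$ is partly converted into $n$. Then no incentive constraint can deliver $K_{mn}>0\Rightarrow E^1_m\subseteq E^2_n$. The test payoffs need an asymmetry: every message a type \emph{can} send under $\mathcal E^2$ must be worth the same to him, so that any $S^2$ is optimal, while every message he \emph{cannot} send must be strictly better.

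\emph{Fingerprint option.} With $u\equiv0$, the mediator under $\mathcal E^1$ is constrained only by reproducing $q$ and by sender incentives. So a lottery $\rho_m$ after a used message need not lie in the span of the $\lambda_n$, let alone their convex hull. Only the mixtures $\sum_m S^1_{\theta m}\rho_m=q(\theta)$ are pinned down, and different used messages can depart from the $\lambda_n$ in offsetting ways. Linearly independent fingerprints therefore do not let you read $K$ off $\rho_m$. The intermediate claim you aim for is stronger than needed and is not implied by reproducing $q$.

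The paper's construction handles both points.
\begin{enumerate}[nosep]
\item For each $n_j\in M^2$, take two fresh actions with $v(a_j^\pm,\theta)=\pm2j+\1_{\{\theta\notin E^2_{n_j}\}}$. All types share one strict ranking, because the spacing $2j$ dominates the indicator.
\item Set $u\equiv0$ and map $n_j$ to $\tfrac12\delta_{a_j^-}+\tfrac12\delta_{a_j^+}$. This lottery is worth $\1_{\{\theta\notin E^2_{n_j}\}}$ to type $\theta$, so any $S^2$ is optimal.
\item Because the blocks $\{a_j^-,a_j^+\}$ are disjoint, define $K_{mn_j}\coloneqq\rho_m(a_j^-)+\rho_m(a_j^+)$. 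Summing $q$ over each block gives $S^2=S^1K$, with no need to express $\rho_m$ in terms of the $\lambda_n$.
\item For the inclusion, a type $\theta_0$ who uses $m$ receives only blocks of messages available to him. His equilibrium payoff of zero then forces the type-independent part $\sum_j2j\bigl[\rho_m(a_j^+)-\rho_m(a_j^-)\bigr]$ to vanish.
\item Hence any $\theta\in E^1_m$ obtains $\sum_{\{j:\theta\notin E^2_{n_j}\}}K_{mn_j}\ge0$ from $m$. Sender optimality caps this at zero, which gives $K_{mn_j}>0\Rightarrow E^1_m\subseteq E^2_{n_j}$.
\item Rows of $K$ for unused messages are sent to the common message of $\mathcal E^2$, where the inclusion holds trivially.
\end{enumerate}
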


To prove that statement (ii) implies \(\mathcal E^1\succeq^{\mathrm{G}}\mathcal E^2\), we consider sender payoffs under which all types strictly rank actions in the same order but evaluate lotteries differently. The proof assigns to every message \(n\in M^2\) a lottery over actions that gives expected payoff zero to types who can submit \(n\) and a strictly positive expected payoff to all other types. Every available message therefore gives its sender expected payoff zero, making any reporting matrix \(S^2\in\Sigma_2(\Theta)\) optimal. Reproducing the resulting outcome under \(\mathcal E^1\) yields \(S^2=S^1K\). Moreover, whenever \(K_{mn}>0\), every type who can submit \(m\) under \(\mathcal E^1\) must also be able to submit \(n\) under \(\mathcal E^2\). Otherwise, a type who can submit \(m\) but not \(n\) could obtain a strictly positive expected payoff by submitting \(m\), exceeding his equilibrium expected payoff of zero. Thus, \(K_{mn}>0\) implies \(E_m^1\subseteq E_n^2\), which, together with \(S^2=S^1K\), establishes \eqref{eq:evidence-state-dependent}.

A second case permits types to rank every pair of actions in opposite directions. Suppose each type's payoff is an affine transformation of the same function \(x\) of the action:
\begin{equation}\label{eq:evidence-affine}
v(a,\theta)=\alpha(\theta)+\beta(\theta)x(a),\qquad \text{with } x\colon A\to\R,
\end{equation}
where \(\beta(\theta)\) may be positive, negative, or zero. The sender's payoff is strictly increasing in \(x(a)\) when \(\beta(\theta)>0\) and strictly decreasing when \(\beta(\theta)<0\).

\begin{proposition}\label{prop:evidence-affine}
Suppose \eqref{eq:evidence-affine} holds and \(\mathcal E^1\succeq^{\mathrm{G}}\mathcal E^2\). For every receiver payoff \(u\) and every full-support prior \(\mu\), every mediation outcome under evidence structure \(\mathcal E^2\) is also a mediation outcome under evidence structure \(\mathcal E^1\). In particular, \(V_1^{\mathrm{MD}}(\mu,u,v)\ge V_2^{\mathrm{MD}}(\mu,u,v)\).
\end{proposition}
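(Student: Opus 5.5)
The plan is to first put the mediated equilibrium under \(\mathcal E^2\) into a normalized \emph{recommendation form} in which every signal is on path, and only then transport it to \(\mathcal E^1\) with the garbling \(K\) supplied by \eqref{eq:evidence-state-dependent}. This sidesteps the off-path-belief difficulty described before \Cref{thm:evidence-state-dependent}: after the transport no signal is off path, so no limiting-belief argument is needed. The affine structure \eqref{eq:evidence-affine} replaces compatible action rankings in the normalization step.

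\textbf{Setup.} Fix \(u\), \(\mu\), and a mediated equilibrium under \(\mathcal E^2\) with reporting matrix \(S^2\in\Sigma_2(\Theta)\), mediator \(g^2\), receiver strategy \(r\), and outcome \(q\). Let \(A_q\) be the set of actions with \(q(a\mid\theta)>0\) for some \(\theta\). For each \(n\in M^2\), let \(\rho(n)\coloneqq\sum_y g^2(y\mid n)\,r(\cdot\mid y)\in\Delta(A)\) be the action lottery that \(n\) induces, and let \(X(n)\) be its mean of \(x\). Write \(w(\theta)\) for type \(\theta\)'s equilibrium payoff and \(\bar x(\theta)\coloneqq\sum_a q(a\mid\theta)x(a)\). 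Sender optimality gives
\[\alpha(\theta)+\beta(\theta)X(n)\le w(\theta)=\alpha(\theta)+\beta(\theta)\bar x(\theta)\quad\text{for all } n\in M^2(\theta).\]

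\textbf{Step 1: normalization.} Messages used under \(S^2\) already have \(\rho(n)\in\Delta(A_q)\). For an unused \(n\), let \(\underline x\) and \(\overline x\) be the minimum and maximum of \(x\) on \(A_q\). I would replace \(\rho(n)\) by a lottery on \(A_q\) whose mean is the clamp \(X'(n)\coloneqq\min\{\max\{X(n),\underline x\},\overline x\}\); this is possible because \(\underline x\) and \(\overline x\) are each attained at some action in \(A_q\).

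The key check is that no type in \(E^2_n\) gains. Suppose, say, \(X(n)>\overline x\), so \(X'(n)=\overline x\).
\begin{itemize}
\item Types with \(\beta(\theta)>0\) satisfied \(X(n)\le\bar x(\theta)\), so \(\overline x<X(n)\le\bar x(\theta)\) and they do not gain.
\item Types with \(\beta(\theta)<0\) need \(\overline x\ge\bar x(\theta)\), which holds because \(\bar x(\theta)\) is an average of \(x\) over \(A_q\).
\item Types with \(\beta(\theta)=0\) are indifferent.
\end{itemize}
The case \(X(n)<\underline x\) is symmetric, and if \(X(n)\) lies in \([\underline x,\overline x]\) the mean is unchanged.

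Now let the mediator send the recommended action itself as the signal, drawn from the new lottery \(\rho'(n)\). On path nothing changes, so the joint distribution of type and recommendation is given by \(q\). Every recommendation in \(A_q\) occurs on path, so beliefs follow Bayes' rule and obedience of \(q\) makes compliance optimal.

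\textbf{Step 2: transport to \(\mathcal E^1\).} Take \(S^1\) and \(K\) from \eqref{eq:evidence-state-dependent} and set \(g^1(\cdot\mid m)\coloneqq\sum_n K_{mn}\rho'(n)\). Since \(S^2=S^1K\), the outcome is again \(q\), every signal is an on-path recommendation in \(A_q\), and the receiver obeys.

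For sender incentives, suppose type \(\theta\) deviates to some \(m\in M^1(\theta)\). Every \(n\) with \(K_{mn}>0\) satisfies \(\theta\in E^1_m\subseteq E^2_n\), so \(n\in M^2(\theta)\). By Step 1, each such \(\rho'(n)\) gives \(\theta\) at most \(w(\theta)\). Payoffs are linear in lotteries, so the mixture \(g^1(\cdot\mid m)\) also gives at most \(w(\theta)\). Hence \(S^1\) is optimal, and since every signal is on path, the profile is a sequential equilibrium.

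The inequality \(V_1^{\mathrm{MD}}\ge V_2^{\mathrm{MD}}\) follows immediately because the outcome, and hence the sender's payoff, is reproduced. The main obstacle is Step 1, which requires disciplining off-path lotteries without a common worst action. The clamping argument, which relies on one-dimensional affine payoffs, is what I expect to carry it.
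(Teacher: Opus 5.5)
Your argument is correct, and it takes a different and more elementary route than the paper.

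\textbf{The paper's route.} The paper transports first. It sets $\ell^1_m=\sum_nK_{mn}\ell_n$, keeps these lotteries after used messages, and assigns one fallback lottery after every unused message of $\mathcal E^1$: the lottery $\ell$ of the common message $m_0^1$. This lottery deters every type because $m_0^1$ is universally available. When $\ell$ is not supported on $A^+(q)$, the paper invokes \Cref{lem:evidence-affine-mean}, a connectedness argument on $\Delta(\Theta)$. It replaces $\ell$ by a lottery $\widehat\ell$ with the same mean of $x$, supported on receiver best responses at some belief $\nu$. These recommendations may be off path, so the proof also relies on the consistency of $\nu$, which is deferred to the Supplementary Appendix.

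\textbf{Your route.} You normalize before transporting, and your clamping step shows the detour is unnecessary. Every on-path mean $\bar x(\theta)$ lies in $[\underline x,\overline x]$, so clamping never lifts a type above $w(\theta)$. When $X(n)>\overline x$, the types who benefit from a lower mean have $\beta(\theta)<0$ and still face $\overline x\ge\bar x(\theta)$. No type with $\beta(\theta)>0$ can send $n$ at all, so your first bullet is in fact vacuous. Hence every lottery, before and after garbling by $K$, is supported on $A_q$. Every signal is then on path, so obedience of $q$ gives receiver optimality and consistency is automatic.

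\textbf{Comparison.} The paper's route is uniform with the proof of \Cref{thm:evidence-state-dependent}: a single fallback lottery plays the role of the common worst action. Yours gives a self-contained proof that needs neither the auxiliary lemma nor any off-path belief. The same clamp could also be applied directly to the paper's single lottery $\ell$, which would remove the lemma from its proof as well.
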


As discussed above, consistency can require the receiver to hold the same belief after several unused signals in the construction under \(\mathcal E^1\). Her actions following those signals under \(\mathcal E^2\) need not all be optimal at that belief. Under \eqref{eq:evidence-affine}, any two distributions of actions with the same expected value of \(x(a)\) give every sender type the same expected payoff, even when types rank actions in opposite directions. This allows the mediator to change the distribution of actions following off-path messages, while keeping the expected value of \(x(a)\) unchanged, to make the receiver willing to follow the resulting recommendations at consistent beliefs without creating profitable sender deviations.

\bibliography{event_provability_references.bib}

\appendix
\section{Proofs}\label{app:proofs}

In the arguments below, we specify the sender's and receiver's strategies and the receiver's belief, and verify that the sender and receiver act optimally and that the receiver's beliefs follow Bayes' rule whenever a message or recommendation occurs with strictly positive probability. In the Supplementary Appendix we prove that these strategies and beliefs are consistent.

\Cref{thm:complete} is the special case of \Cref{thm:sd-value} with \(v(a,\theta)=v(a)\), in which case compatible action rankings and \eqref{eq:sd-nondegeneracy} hold automatically. Likewise, \Cref{prop:some} follows from \Cref{prop:sd-some-improvement}. Since \(v\) is nonconstant, any \(a^+\in\argmax_{a\in A}v(a)\) and \(a^-\in\argmin_{a\in A}v(a)\) satisfy \(v(a^+)>v(a^-)\). Part \ref{sd-some-informative} shows that an informative message suffices, and part \ref{sd-some-binary} shows that it is necessary when \(|A|=2\).

\subsection{Proof of \autoref{thm:sd-value}}

\begin{proof}[Proof of \Cref{thm:sd-value}]
We first show that compatible action rankings imply common minimizing and maximizing actions.

\begin{claim}\label{claim:sd-common-extrema}
If sender types have compatible action rankings, every nonempty subset \(B\subseteq A\) contains an action that minimizes \(v(\cdot,\theta)\) over \(B\) for every type and an action that maximizes \(v(\cdot,\theta)\) over \(B\) for every type.
\end{claim}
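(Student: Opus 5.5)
To prove \Cref{claim:sd-common-extrema}, I would define on \(A\) the relation \(a\succsim b\) if and only if \(v(a,\theta)\ge v(b,\theta)\) for every \(\theta\in\Theta\). I would then show that, under compatible action rankings \eqref{eq:sd-compatible-rankings}, this relation is complete and transitive on \(A\). Once that is done, every finite nonempty \(B\subseteq A\) has a \(\succsim\)-minimal element and a \(\succsim\)-maximal element. By the definition of \(\succsim\), such elements minimize, respectively maximize, \(v(\cdot,\theta)\) over \(B\) simultaneously for every type.

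For completeness, fix \(a,b\in A\). Suppose \(a\not\succsim b\), so some type \(\theta\) has \(v(a,\theta)<v(b,\theta)\). Then \eqref{eq:sd-compatible-rankings} gives \([v(a,\widehat\theta)-v(b,\widehat\theta)][v(a,\theta)-v(b,\theta)]\ge0\) for every \(\widehat\theta\). Since the second factor is strictly negative, this forces \(v(a,\widehat\theta)\le v(b,\widehat\theta)\) for every \(\widehat\theta\), that is, \(b\succsim a\).

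Transitivity is immediate: the relation is an intersection over types of the weak orders \(v(\cdot,\theta)\ge v(\cdot,\theta)\), each of which is transitive. Hence \(\succsim\) is a complete preorder on the finite set \(A\). Restricting it to \(B\), I can pick a minimal element \(\underline a\in B\), which exists by finiteness and completeness. Then \(\underline a\precsim b\) for every \(b\in B\), so \(v(\underline a,\theta)\le v(b,\theta)\) for all \(\theta\) and all \(b\in B\). The maximal element is handled symmetrically.

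The only real content is the completeness step. Its key point is that a single strict comparison at one type pins down the weak direction at every other type. That is exactly what the product condition delivers, because the nonnegativity of the product with a nonzero factor fixes the sign of the other factor. I expect no further obstacle.
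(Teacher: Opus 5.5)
Your proof is correct. It uses the same key observation as the paper: under compatible rankings, a strict comparison at one type forces the weak comparison in the same direction at every type. You deploy it differently, however.

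The paper never builds an order. It picks a minimizer of the aggregate \(\sum_{\theta}v(\cdot,\theta)\) over \(B\), then argues by contradiction. If some \(b\in B\) were strictly better at one type, \(b\) would be weakly better at all types and would therefore strictly lower the sum. You instead show that unanimous dominance is complete, with transitivity free as an intersection of weak orders. It is therefore a complete preorder on \(A\), and every nonempty subset has a least and a greatest element.

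The paper's route is a little more economical and constructive. Existence comes simply from minimizing a scalar over a finite set, and it names an explicit common minimizer; the same argument works for any strictly positive weights. Your route yields a stronger structural fact: every type's weak ranking is a coarsening of one common complete preorder on \(A\). The claim for all subsets \(B\) then follows at once, without any selection device.
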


\begin{proof}[Proof of the claim]
Choose \(a^-\in\argmin_{a\in B}\sum_{\theta\in\Theta}v(a,\theta)\). If some \(b\in B\) and \(\theta\) satisfied \(v(b,\theta)<v(a^-,\theta)\), compatibility would imply \(v(b,\widehat\theta)\le v(a^-,\widehat\theta)\) for every \(\widehat\theta \in \Theta\), with a strict inequality at \(\theta\).
This would contradict the choice of \(a^-\). The maximization statement follows by applying the same argument to \(-v\).
\end{proof}

Suppose statement \ref{sd-complete} holds, and fix an obedient outcome \(q\).
Let
\[
 A^+(q)\coloneqq
 \left\{a\in A\colon
 \sum_{\theta\in\Theta}\mu(\theta)q(a\mid\theta)>0
 \right\}.
\]
By \Cref{claim:sd-common-extrema}, there is an action \(a^-\in A^+(q)\) that minimizes every type's payoff over \(A^+(q)\).
Full support implies that each action assigned positive probability at any type belongs to \(A^+(q)\). Hence, \(\sum_{a\in A}v(a,\theta)q(a\mid\theta)\ge v(a^-,\theta)\) for every type. Choose an exclusive message \(m_\theta\) for each type, and use \(Y=A^+(q)\) as the signal space. Set \(g(\cdot\mid m_\theta)=q(\theta)\), and recommend \(a^-\) after every remaining message. Type \(\theta\) cannot submit another type's selected message, and his selected message gives him at least as much as any other available message. Thus, every type submitting his selected message is a best response. The receiver follows each recommendation, which is optimal by obedience of \(q\). This implements \(q\) using a pure sender strategy and at most \(|A|\) signals, proving \ref{sd-complete} implies \ref{sd-allob}.

For the converse, suppose statement \ref{sd-allob} holds. If compatible rankings failed, there would be actions \(a,b\) and types \(\theta^+,\theta^-\) such that
\[
 v(a,\theta^+)>v(b,\theta^+)
 \qquad \text{\&} \qquad
 v(a,\theta^-)<v(b,\theta^-).
\]
Set \(u(a,\theta)=u(b,\theta)=0\) for every type and set \(u(c,\theta)=-1\) for every \(c\notin\{a,b\}\).
The outcome \(q\) defined by \(q(\theta^-)=\delta_a\) and \(q(\theta)=\delta_b\) for every \(\theta\ne\theta^-\) is obedient. Suppose a mediated equilibrium induced it, and let \(p_0\) be the probability of \(a\) following the common message. Actions outside \(\{a,b\}\) are strictly dominated for the receiver, so the common message induces a lottery on \(a,b\). The incentive constraint of \(\theta^+\) requires \(p_0=0\), while the incentive constraint of \(\theta^-\) requires \(p_0=1\), a contradiction. Hence, sender types have compatible action rankings.

It remains to prove complete provability. Apply \Cref{claim:sd-common-extrema} to \(A\), and choose a common sender-minimizing action \(a^-\) and a common sender-maximizing action \(a^+\). Because \(v(\cdot,\theta)\) is nonconstant for every type,
\begin{equation}\label{eq:sd-strict-extrema}
v(a^+,\theta)>v(a^-,\theta)\qquad\text{for every }\theta\in\Theta.
\end{equation}
Suppose a type \(\theta^*\) has no exclusive message. Assign a distinct value \(x(a)\) to each existing action, with \(x(a^-)=0\), \(x(a^+)=1\), and \(x(c)>2\) for every \(c\notin\{a^-,a^+\}\). These assigned values specify the receiver's payoff; they do not change sender payoffs and need not follow the sender types' rankings of the remaining actions. Choose \(\varepsilon\in(0,1/2)\) such that
\[\varepsilon\mu(\theta^*)<\frac12\min_{\theta\ne\theta^*}\mu(\theta),\]
and set
\begin{equation}\label{eq:sd-exclusive-test}
u(a,\theta)=-\frac12\bigl(x(a)-t(\theta)\bigr)^2,\qquad \text{\&} \qquad t(\theta)=\begin{cases}\frac12+\varepsilon,&\theta=\theta^*,\\0,&\theta\ne\theta^*.\end{cases}
\end{equation}
Action \(a^+\) strictly dominates every action outside \(\{a^-,a^+\}\) for the receiver. The difference between the payoffs from the remaining two actions is
\[u(a^+,\theta)-u(a^-,\theta)=\begin{cases}\varepsilon,&\theta=\theta^*,\\-\frac12,&\theta\ne\theta^*.\end{cases}\]

Consider any mediated equilibrium, and let \(p(\theta)\) be the probability of \(a^+\) at type \(\theta\). Since every type strictly prefers \(a^+\) to \(a^-\), every message used by \(\theta^*\) induces \(a^+\) with the same probability \(p(\theta^*)\). If this probability is positive, one such message is available to a type \(\widehat\theta\ne\theta^*\), and sender optimality implies \(p(\widehat\theta)\ge p(\theta^*)\). Receiver obedience would then require
\[0\le\varepsilon\mu(\theta^*)p(\theta^*)-\frac12\sum_{\theta\ne\theta^*}\mu(\theta)p(\theta)\le\left[\varepsilon\mu(\theta^*)-\frac12\mu(\widehat\theta)\right]p(\theta^*)<0,\]
a contradiction. Thus, \(p(\theta^*)=0\), and obedience forces \(p(\theta)=0\) for every other type. Every mediated equilibrium chooses \(a^-\) at every type.

Direct disclosure attains the resulting payoff by pooling on \(m_0\), followed by \(a^-\). This action is strictly optimal at the prior because \(\varepsilon\mu(\theta^*)-\frac12[1-\mu(\theta^*)]<0\). After each unused message, concentrate belief on an eligible type other than \(\theta^*\) and choose \(a^-\). Such a type exists because \(\theta^*\) has no exclusive message. These responses are optimal and deter deviations. Full disclosure under information design instead induces \(a^+\) at \(\theta^*\) and \(a^-\) elsewhere, giving the sender strictly more. Consequently,
\begin{equation}\label{eq:sd-quadratic-gap}
V^{\mathrm D}(\mu,u,v)=V^{\mathrm{MD}}(\mu,u,v)=\sum_{\theta\in\Theta}\mu(\theta)v(a^-,\theta)<V^{\mathrm{BP}}(\mu,u,v).
\end{equation}
In particular, the full-disclosure outcome is obedient but cannot be a mediation outcome, contradicting statement \ref{sd-allob}. Hence, complete provability holds.

We have proved the equivalence of \ref{sd-complete} and \ref{sd-allob} without \eqref{eq:sd-nondegeneracy}. Since the persuasion maximum is attained, \ref{sd-allob} implies \ref{sd-bp}, and \ref{sd-bp} implies \ref{sd-improve-all}. It remains to prove \ref{sd-improve-all} implies \ref{sd-complete} under \eqref{eq:sd-nondegeneracy}. Suppose \ref{sd-improve-all} holds.

We first prove that sender types have compatible action rankings. If not, choose actions \(a,b\) that some types rank strictly in opposite directions, and write
\[
 d(\theta)\coloneqq v(a,\theta)-v(b,\theta).
\]
Condition \eqref{eq:sd-nondegeneracy} implies
\(\sum_\theta\mu(\theta)d(\theta)\ne0\).
Interchange \(a\) and \(b\) if necessary so that, with
\[
 D_+\coloneqq
 \sum_{\{\theta \colon d(\theta)>0\}}\mu(\theta)d(\theta),
 \qquad \text{\&} \qquad
 D_-\coloneqq
 \sum_{\{\theta \colon d(\theta)<0\}}\mu(\theta)[-d(\theta)],
\]
we have \(D_+>D_->0\).
Choose \(\gamma\in\left(\frac{D_-}{D_+},1\right)\), set \(u(b,\theta)=0\), set
\[
 u(a,\theta)=
 \begin{cases}
 -\gamma d(\theta),&d(\theta)>0,\\
 -d(\theta),&d(\theta)<0,\\
 0,&d(\theta)=0,
 \end{cases}
\]
and set \(u(c,\theta)=-1\) for every \(c\notin\{a,b\}\).
Action \(b\) is strictly optimal at the prior because
\[
 \sum_{\theta\in\Theta}\mu(\theta)
 [u(a,\theta)-u(b,\theta)]
 =D_--\gamma D_+<0.
\]

Consider any mediated equilibrium.
Let \(p(\theta)\) be the probability of action \(a\) at type \(\theta\), and let \(p_0\) be the probability of \(a\) following the common message.
Because \(b\) strictly dominates every action outside \(\{a,b\}\), only \(a\) and \(b\) can be chosen.
Sender optimality implies
\begin{equation}\label{eq:sd-p0-incentives}
 p(\theta)\ge p_0\quad\text{if }d(\theta)>0,
 \qquad \text{\&} \qquad
 p(\theta)\le p_0\quad\text{if }d(\theta)<0.
\end{equation}
Obedience of action \(a\) requires
\begin{equation}\label{eq:sd-a-obedience}
 \gamma
 \sum_{\{\theta \colon d(\theta)>0\}}
 \mu(\theta)d(\theta)p(\theta)
 \le
 \sum_{\{\theta \colon d(\theta)<0\}}
 \mu(\theta)[-d(\theta)]p(\theta).
\end{equation}
Combining \eqref{eq:sd-p0-incentives} and \eqref{eq:sd-a-obedience} yields \(\gamma D_+p_0\le D_-p_0\), and since \(\gamma D_+>D_-\), we have \(p_0=0\).

\eqref{eq:sd-p0-incentives} then implies \(p(\theta)=0\) whenever \(d(\theta)<0\), and \eqref{eq:sd-a-obedience} implies \(p(\theta)=0\) whenever \(d(\theta)>0\).
Types with \(d(\theta)=0\) receive the same payoff from the two actions. Every mediated equilibrium therefore gives the sender \(W_b\coloneqq\sum_{\theta\in\Theta}\mu(\theta)v(b,\theta)\), attained by the mediator always recommending \(b\).

Direct disclosure also attains \(W_b\). Let all types submit \(m_0\), after which the receiver chooses \(b\). After an unused message that is available to a type with \(d(\theta)\ge0\), concentrate the receiver's belief on such a type and choose \(b\). If every type who can submit that message has \(d(\theta)<0\), choose \(a\), which is then strictly optimal for the receiver. Every type who can submit one of the latter messages strictly prefers \(b\) to \(a\), so no deviation is profitable. Hence, \(V^{\mathrm D}(\mu,u,v)=V^{\mathrm{MD}}(\mu,u,v)=W_b\).

Persuasion delivers strictly more.
Let
\[
 \alpha\coloneqq\frac{D_-}{\gamma D_+}\in(0,1),
\]
and recommend \(a\) with probability
\[
 p(\theta)=
 \begin{cases}
 \alpha,&d(\theta)>0,\\
 1,&d(\theta)<0,\\
 0,&d(\theta)=0.
 \end{cases}
\]
The receiver is indifferent between \(a\) and \(b\) after recommendation \(a\). The outcome assigns positive probability to \(b\) only if \(d(\theta)\ge0\). The receiver strictly prefers \(b\) when \(d(\theta)>0\) and is indifferent when \(d(\theta)=0\). The sender's gain over constant action \(b\) is
\[
 \alpha D_+-D_-
 =D_-\left(\frac1\gamma-1\right)>0,
\]
contradicting statement \ref{sd-improve-all}. Hence, sender types have compatible action rankings.

Compatible action rankings have now been established. If complete provability failed, the quadratic construction \eqref{eq:sd-exclusive-test} used above would give \eqref{eq:sd-quadratic-gap}, contradicting statement \ref{sd-improve-all}. Therefore, complete provability holds.
\end{proof}

\subsection{Proof of \autoref{cor:sd-all-priors}}

\begin{proof}
Throughout this proof, statements \ref{sd-complete}--\ref{sd-improve-all} refer to \Cref{thm:sd-value}, with the requirements in \ref{sd-allob}--\ref{sd-improve-all} imposed at every full-support prior. If statement \ref{sd-complete} holds, the construction in the proof of \Cref{thm:sd-value} implements every obedient outcome at every full-support prior without using \eqref{eq:sd-nondegeneracy}, proving statement \ref{sd-allob}. Statement \ref{sd-allob} implies \ref{sd-bp} because the persuasion maximum is attained, and \ref{sd-bp} immediately implies \ref{sd-improve-all}.

Suppose statement \ref{sd-improve-all} holds. If compatible action rankings failed, choose actions \(a,b\) that some types rank in opposite directions and write \(d(\theta)=v(a,\theta)-v(b,\theta)\). There is a full-support prior \(\mu\) for which \(\sum_{\theta\in\Theta}\mu(\theta)d(\theta)\ne0\), because \(d\) takes both positive and negative values and so this equality cannot hold at every full-support prior. After relabeling the actions so that the prior-average difference is positive, the conflicting-rankings construction in the proof of \Cref{thm:sd-value} gives a receiver payoff for which \(V^{\mathrm{MD}}(\mu,u,v)=V^{\mathrm D}(\mu,u,v)<V^{\mathrm{BP}}(\mu,u,v)\). That construction needs a nonzero prior-average difference only for the selected pair of actions, not \eqref{eq:sd-nondegeneracy} for all pairs. This contradicts statement \ref{sd-improve-all}. Thus, sender types have compatible action rankings.

If complete provability failed, the quadratic construction \eqref{eq:sd-exclusive-test} could be applied at any full-support prior, with \(\varepsilon\) chosen for that prior. Equation \eqref{eq:sd-quadratic-gap} would then contradict statement \ref{sd-improve-all}. Therefore, complete provability holds, proving statement \ref{sd-complete}.
\end{proof}

\subsection{Proof of \autoref{prop:sd-some-improvement}}

We first establish a fact about direct disclosure.

\begin{lemma}\label{lem:sd-disclosure-rounding}
Suppose all receiver best responses are supported on two actions \(a,b\). From any direct-disclosure equilibrium, one can obtain a direct-disclosure equilibrium with a deterministic receiver strategy and greater \textit{ex ante} sender payoff. The sender's payoff may depend on his type, and some types may be indifferent among all actions.
\end{lemma}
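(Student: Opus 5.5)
Start from a direct-disclosure equilibrium \((\sigma,r)\) with beliefs \(\beta(\cdot\mid m)\). Because every receiver best response is supported on \(\{a,b\}\), \(r(\cdot\mid m)\) is a lottery on \(\{a,b\}\) for every \(m\). Write \(\rho(m)\coloneqq r(a\mid m)\). Let \(\Delta(\theta)\coloneqq v(a,\theta)-v(b,\theta)\). Type \(\theta\)'s payoff from \(m\) is \(v(b,\theta)+\rho(m)\Delta(\theta)\), and the receiver's incentive at \(m\) depends only on the sign of \(U(m)\coloneqq\sum_\theta\beta(\theta\mid m)[u(a,\theta)-u(b,\theta)]\). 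The plan is to move the \(\rho(m)\) to \(\{0,1\}\) without changing \(\sigma\) or \(\beta\). We choose the rounding so that no type loses and no type gains a profitable deviation.

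\textbf{Rounding on and off path.} Messages with \(U(m)\ne0\) already have \(\rho(m)\in\{0,1\}\), so only messages with \(U(m)=0\) need attention. At such messages both pure actions are optimal at the fixed belief, so any \(\rho\in[0,1]\) keeps the receiver sequentially rational and consistency is untouched. Sort the messages into classes by their value of \(\rho\). I would use a single threshold rule: for \(t\in[0,1]\), set \(\rho_t(m)=1\) if \(\rho(m)>t\) or \(\rho(m)=1\), and \(\rho_t(m)=0\) if \(\rho(m)<t\) or \(\rho(m)=0\). Ties at the interior value \(\rho(m)=t\) are broken all to \(1\) or all to \(0\), and the two tie-breaking choices are treated as two candidate equilibria. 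This rule is monotone: it preserves the weak order of the \(\rho(m)\) across messages. Each type's equilibrium messages all lie in his argmax of \(\rho(m)\Delta(\theta)\) over \(M(\theta)\). So for \(\Delta(\theta)>0\) he uses messages with maximal \(\rho\) among those available, for \(\Delta(\theta)<0\) messages with minimal \(\rho\), and for \(\Delta(\theta)=0\) he is indifferent. Monotone rounding keeps his used messages weakly maximal (or minimal) in \(\rho_t\) among available ones. Hence \(\sigma\) remains a best response under every \(\rho_t\), and \((\sigma,\rho_t,\beta)\) is a sequential equilibrium with a deterministic receiver strategy. This monotonicity check, including the tie case, is the first step.

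\textbf{Payoff comparison.} The ex ante sender payoff under \(\rho_t\) is
\[\textstyle\sum_\theta\mu(\theta)v(b,\theta)+\sum_m \rho_t(m)\,w(m),\qquad w(m)\coloneqq\sum_\theta\mu(\theta)\sigma(m\mid\theta)\Delta(\theta).\]
Group on-path messages by their common value \(\rho(m)\in(0,1)\). This yields finitely many levels \(0<c_1<\dots<c_k<1\), with aggregate weights \(W_j\coloneqq\sum_{\rho(m)=c_j}w(m)\). The original payoff is linear in the vector \((c_j)\). The threshold roundings, with the two tie choices at each level, realize exactly the vectors \(\mathbf 1\{j\ge J\}\) for \(J=1,\dots,k+1\). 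Since \(c\) is a convex combination of these step vectors, with weights \(c_{J}-c_{J-1}\), some step vector gives a weakly greater payoff. That vector corresponds to some \(t\) and tie choice, and completes the argument.

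\textbf{Main obstacle.} The step I expect to be delicate is ensuring that the rounding is monotone simultaneously for off-path messages and for types with \(\Delta(\theta)<0\). One could instead try to push each on-path level separately toward whichever endpoint helps. Doing so breaks monotonicity and can create profitable deviations, for example by a \(\Delta<0\) type jumping to a message that was rounded down. That is why I commit to a single global threshold applied to all messages, including unused ones, and use the convex-combination argument only across thresholds. A last check is that messages with \(U(m)\ne0\) and \(\rho(m)\in\{0,1\}\) are compatible with the threshold rule. They are, because the rule never changes a \(\rho\) that is already \(0\) or \(1\).
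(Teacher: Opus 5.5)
Your proposal is correct and follows the paper's proof: both apply a monotone threshold rounding \(\rho(m)\mapsto\mathbf 1\{\rho(m)\ge t\}\) with the sender strategy and beliefs held fixed. This preserves every type's best response and the receiver's optimality at messages where she is indifferent. Your convex combination of step vectors with weights \(c_J-c_{J-1}\) is the finite form of the paper's averaging over a uniform threshold \(t\in(0,1)\).
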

\begin{proof}
Let \(p_m\) be the probability of \(a\) after message \(m\). For \(t\in(0,1)\) replace it by \(\mathbf 1_{\{p_m\geq t\}}\). This nondecreasing transformation preserves every type's best responses: types strictly preferring \(a\) maximize \(p_m\), types strictly preferring \(b\) minimize it, and indifferent types remain indifferent. Keep the sender strategy and beliefs fixed. Whenever \(0<p_m<1\), both \(a\) and \(b\) were receiver best responses, so either replacement is optimal. Pure receiver responses remain unchanged.

Averaging the transformed payoff over uniform \(t\in(0,1)\) produces the original payoff. Hence, some threshold yields a greater payoff.
\end{proof}

\begin{proof}[Proof of \Cref{prop:sd-some-improvement}]
\emph{Part \ref{sd-some-informative}.} Let \(S=E_{m^*}\) be a nonempty proper subset of the type space. Set \(u(a^-,\theta)=0\), \(u(c,\theta)=-2\) for \(c\notin\{a^-,a^+\}\), and \(u(a^+,\theta)=\varepsilon\) on \(S\) and \(-1\) on \(S^c\), where
\[
 0<\varepsilon\mu(S)<\min_{\theta\notin S}\mu(\theta).
\]
By \Cref{lem:sd-disclosure-rounding}, it is enough to bound disclosure equilibria with deterministic receiver actions. A type outside \(S\) receiving \(a^+\) with positive probability would receive it with probability one, because he strictly prefers it. Obedience would then imply \(0\leq\varepsilon\mu(S)-\mu(\theta)<0\). Hence, disclosure is bounded by
\[
 W_S=\sum_{\theta\in S}\mu(\theta)v(a^+,\theta)
 +\sum_{\theta\notin S}\mu(\theta)v(a^-,\theta).
\]
This is attained by having \(S\) submit \(m^*\) and \(S^c\) submit \(m_0\). After an unused message \(m\) with \(E_m\subseteq S\), assign any belief supported on \(E_m\) and choose \(a^+\); otherwise assign probability one to some type in \(E_m\setminus S\) and choose \(a^-\).

For mediation, send \(a^+\) after \(m^*\) and recommend \(a^+\) with probability \(\alpha\) after every other message, where
\[
 0<\alpha<\frac{\varepsilon\mu(S)}{\mu(S^c)}<1.
\]
Types in \(S\) submit \(m^*\), and the other types submit \(m_0\). Recommendation \(a^+\) is strictly optimal because \(\varepsilon\mu(S)-\alpha\mu(S^c)>0\), and recommendation \(a^-\) occurs only outside \(S\). The sender's gain is
\[
 \alpha\sum_{\theta\notin S}\mu(\theta)
 [v(a^+,\theta)-v(a^-,\theta)]>0.
\]

\emph{Part \ref{sd-some-binary}.} Let \(A=\{a,b\}\). All messages are available to all types. In a mediated equilibrium let \(p_m\) be the final probability of \(a\) after message \(m\), and define \(L=\min_m p_m\) and \(H=\max_m p_m\). Types strictly preferring \(a\) obtain \(H\), types strictly preferring \(b\) obtain \(L\), and indifferent types obtain some \(p(\theta)\in[L,H]\).

If \(H=L\), the induced outcome is constant. Obedience implies that its lottery is a receiver best response at the prior. Direct disclosure reproduces it by assigning this lottery after every message, using the prior as the belief at every message.

Suppose \(H>L\) and define \(z(\theta)=[p(\theta)-L]/(H-L)\). This outcome assigns \(a\) to types who strictly prefer \(a\), assigns \(b\) to types who strictly prefer \(b\), and may randomize at indifferent types. It  increases every type's payoff. To check obedience, put
\[
 R=\sum_\theta\mu(\theta)[u(a,\theta)-u(b,\theta)],\qquad
 F=\sum_\theta\mu(\theta)p(\theta)[u(a,\theta)-u(b,\theta)].
\]
The original obedience inequalities are \(F\geq0\) and \(F\geq R\). For the rescaled outcome the corresponding expression is
\[
 F_z=\frac{F-LR}{H-L}.
\]
If \(R\geq0\), then \(F\geq R\geq HR\) and \(F\geq LR\); if \(R\leq0\), then \(F\geq0\geq HR,LR\). Thus, \(F_z\geq0\) and \(F_z\geq R\), proving obedience.

Since \(H>L\), at least two distinct messages exist. Implement \(z\) directly using two universally available messages followed by \(a\) and \(b\): strict types choose their preferred action's message, while indifferent types randomize with probabilities \(z(\theta)\). If both messages occur, their Bayesian posteriors make the receiver's actions optimal. Extra unused messages can use a reached message's posterior and its optimal action. No strict type can strictly improve on his favorite action. If only one action occurs under \(z\), instead assign that prior-optimal action after every message. Therefore, every mediated payoff is  dominated by a direct-disclosure payoff.
\end{proof}

\subsection{Proof of \autoref{prop:sd-two-exclusive}}

\begin{proof}
Set \(u(b,\theta)=0\), \(u(c,\theta)=-2\) for \(c\notin\{a,b\}\), and
\[
 u(a,\theta)=
 \begin{cases}\varepsilon,&\theta=\theta_1,\\-1,&\theta\ne\theta_1,\end{cases}
 \qquad
 0<\varepsilon\mu(\theta_1)<\min_{\theta\ne\theta_1}\mu(\theta).
\]
Put \(d(\theta)=v(a,\theta)-v(b,\theta)\). By \Cref{lem:sd-disclosure-rounding}, one can bound disclosure using deterministic receiver strategies. Any type outside \(\theta_1\) with \(d(\theta)>0\) who received \(a\) would receive it surely, contradicting obedience by the chosen bound on \(\varepsilon\). Types with \(d(\theta)<0\) can only lower the sender's payoff relative to \(b\) if assigned \(a\); indifferent types have no effect. Consequently,
\[
 V^{\mathrm D}(\mu,u,v)\leq W=
 \mu(\theta_1)v(a,\theta_1)
 +\sum_{\theta\ne\theta_1}\mu(\theta)v(b,\theta).
\]
The bound is attained: have \(\theta_1\) send his exclusive message and all other types send \(m_0\). Choose \(a\) after the exclusive message and \(b\) at \(m_0\). For unused messages, choose \(b\) with a belief on a type other than \(\theta_1\) whenever such a type can submit the message; otherwise choose \(a\).

For mediation, map \(\theta_1\)'s selected exclusive message to \(a\), map \(\theta_2\)'s selected exclusive message to \(\alpha\delta_a+(1-\alpha)\delta_b\), and map all other messages to \(b\), where
\[
 0<\alpha<\frac{\varepsilon\mu(\theta_1)}{\mu(\theta_2)}<1.
\]
The first two types send their selected exclusive messages, and every other type sends \(m_0\). Sender incentives hold because the only messages not yielding \(b\) are exclusive to types who strictly prefer \(a\). The receiver strictly follows \(a\) because \(\varepsilon\mu(\theta_1)-\alpha\mu(\theta_2)>0\), and strictly follows \(b\) because only types other than \(\theta_1\) receive that recommendation. The payoff is
\[
 W+\alpha\mu(\theta_2)[v(a,\theta_2)-v(b,\theta_2)]>W.
\]
\end{proof}

\subsection{Proof of \autoref{cor:sd-complete-binary}}

We first establish a fact that does not require complete provability.

\begin{lemma}\label{lem:sd-opposite-binary}
With two types and two actions, if the types strictly prefer opposite actions, mediation cannot strictly improve on disclosure under any evidence structure or full-support prior.
\end{lemma}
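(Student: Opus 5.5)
The plan is to show that any mediated equilibrium outcome can be matched, with weakly higher \textit{ex ante} sender payoff, by a direct-disclosure equilibrium. Label the types \(\theta_a,\theta_b\), where \(\theta_a\) strictly prefers \(a\) and \(\theta_b\) strictly prefers \(b\). Fix a mediated equilibrium. For each message \(m\), let \(p_m\) be the probability of \(a\) it induces. Let \(p(\theta_a),p(\theta_b)\) be the induced probabilities of \(a\). Sender optimality gives \(p(\theta_a)=\max_{m\in M(\theta_a)}p_m\) and \(p(\theta_b)=\min_{m\in M(\theta_b)}p_m\). Since \(m_0\) is common to both types, \(p(\theta_a)\ge p_{m_0}\ge p(\theta_b)\).

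As in part \ref{sd-some-binary} of \Cref{prop:sd-some-improvement}, I rescale. Suppose \(H\coloneqq p(\theta_a)>L\coloneqq p(\theta_b)\). Define \(z(\theta_a)=1\) and \(z(\theta_b)=0\), i.e., full separation with each type receiving his preferred action. This dominates the original outcome for both types. The obedience argument from that proof, using \(F\) and \(R\) and \(F_z=(F-LR)/(H-L)\), shows that \(z\) is obedient. Concretely, \(z\) must give \(\mu(\theta_a)[u(a,\theta_a)-u(b,\theta_a)]\ge0\) and \(\mu(\theta_b)[u(b,\theta_b)-u(a,\theta_b)]\ge0\). So the receiver weakly prefers \(a\) at \(\theta_a\) and \(b\) at \(\theta_b\).

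Now implement this in direct disclosure. Have \(\theta_b\) submit \(m_0\), followed by \(b\). If \(\theta_a\) has a message outside \(M(\theta_b)\), he submits it, followed by \(a\). This is optimal because it reveals \(\theta_a\). Off-path messages available to \(\theta_b\) get belief on \(\theta_b\) and action \(b\), which \(\theta_b\) likes and \(\theta_a\) cannot use to gain. Messages available only to \(\theta_a\) get action \(a\) on belief \(\theta_a\). Both types then obtain their preferred actions, so there are no profitable deviations.

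The main obstacle is when every message of \(\theta_a\) is also available to \(\theta_b\), so that \(\theta_a\) cannot separate. In that case each message \(m\) used by \(\theta_a\) is available to \(\theta_b\), so \(p(\theta_b)\le p_m=p(\theta_a)\). I would first try to exploit the equilibrium directly. If \(H>L\), then \(\theta_a\)'s messages are sent by \(\theta_b\) with probability zero, so the signals following them pool only through the mediator's randomization. Rather than fight this, I would drop the separating construction and fall back on the direct-disclosure pooling outcome. Let \(\theta_a\) pool with \(\theta_b\) on a message giving the receiver's prior-optimal lottery \(\pi\). Here I need \(\mu(\theta_a)H+\mu(\theta_b)\) weighting to show the sender's mediated payoff \(\mu(\theta_a)v_a(H)+\mu(\theta_b)v_b(L)\) is at most a direct-disclosure payoff. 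Since \(\theta_b\) can mimic \(\theta_a\), I expect to use the same \(F,R\) inequalities to show that the obedient \((H,L)\) pair lies in the convex hull of \((\pi,\pi)\) for prior-optimal \(\pi\) and the separating point. Then a threshold argument like \Cref{lem:sd-disclosure-rounding}, linear in the mixing weight, picks the better extreme. I expect this case to be the delicate one. If both extremes are unavailable, I would instead use disclosure equilibria in which \(\theta_a\) pools on a message where the receiver plays \(a\) when \(R\ge0\), checking each sign of \(R\) separately.
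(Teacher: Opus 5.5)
Your rescaling step is correct and is a tidy alternative to the paper's sign analysis. Write $x\coloneqq\mu(\theta_a)[u(a,\theta_a)-u(b,\theta_a)]$ and $y\coloneqq\mu(\theta_b)[u(a,\theta_b)-u(b,\theta_b)]$. Then $F_z=x$ and $R=x+y$, so $F_z\ge0$ and $F_z\ge R$ say exactly $x\ge0\ge y$. A nonconstant mediated outcome therefore forces the receiver to weakly prefer each type's favorite action at that type. The paper instead splits on the signs of $x$ and $y$, showing directly that outside $x\ge0\ge y$ the outcome must be constant.

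The ``main obstacle'' you then identify is not an obstacle, and your convex-hull/threshold fallback is only a conjecture. Once each type receives his favorite action, every sender incentive constraint holds automatically, so separation does not need a message that proves $\theta_a$'s type. Since $p(\theta_a)\ge p_{m_0}\ge p(\theta_b)$ and $H>L$, some message $m_1\neq m_0$ exists. Let a type who can submit $m_1$ submit it, and let the other type submit $m_0$. Both on-path posteriors are degenerate, $a$ and $b$ are receiver-optimal there because $x\ge0\ge y$, and any unused message can receive any best response at a belief supported on $E_m$.

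The genuine gap is the constant case $H=L$, which you never treat, and it is where the evidence structure matters. A constant lottery $c$ is optimal at the prior, but pooling on $m_0$ with $c$ need not be a disclosure equilibrium. After a message exclusive to one type, consistency forces a degenerate belief, and the receiver's strict preference there can hand the deviator his favorite action; for instance, take $x>0$, $c<1$, and $\theta_a$ holding an exclusive message. You need the case split the paper uses:
\begin{itemize}
\item If $x\ge0\ge y$, use the separating equilibrium above, or pool if $m_0$ is the only message.
\item If $x<0$ and $y\le0$, obedience forces $c=0$, and always choosing $b$ is an equilibrium because $b$ is optimal at every belief. The case $x\ge0$, $y>0$ is symmetric.
\item If $x<0<y$, pool on $m_0$ with $c$. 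Answer a message exclusive to $\theta_a$ with $b$ and one exclusive to $\theta_b$ with $a$; each response is strictly optimal for the receiver at the degenerate belief and weakly worse than $c$ for the only type who can deviate to it. Answer universally available messages with $c$ at the prior.
\end{itemize}
Without this step, the proposal does not show that disclosure matches constant mediated outcomes.
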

\begin{proof}
Label types so that type 1 strictly prefers \(a\) and type 2 strictly prefers \(b\). In any mediated equilibrium, the common message imposes \(p_1\geq p_0\geq p_2\), hence \(p_1\geq p_2\). Put \(x=\mu(\theta_1)[u(a,\theta_1)-u(b,\theta_1)]\) and \(y=\mu(\theta_2)[u(a,\theta_2)-u(b,\theta_2)]\).

First suppose \(x\geq0\) and \(y\leq0\). If there are at least two messages, the two types can choose distinct messages: the type who can submit a message other than \(m_0\) uses it, and the other type uses \(m_0\). Direct disclosure then assigns each type his favorite action, which is also receiver optimal at that type. All sender types already attain their maximal payoff, so unused messages cannot induce a profitable deviation; assign any receiver best response at a belief supported on types who can submit that message. If there is only one message, mediation is constant and is reproduced by disclosure. Thus, there is no gain in this case.

If \(x<0\) and \(y\leq0\), obedience of \(a\) and \(p_1\geq p_2\) force \(p_1=p_2=0\); direct disclosure can always choose \(b\). If \(x\geq0\) and \(y>0\), obedience of \(b\) and \(p_1\geq p_2\) force \(p_1=p_2=1\); direct disclosure can always choose \(a\).

It remains to consider \(x<0<y\). Let \(R=x+y\) and \(F=xp_1+yp_2\). Obedience requires \(F\geq\max\{0,R\}\). The ordering gives \(F\leq p_2R\). If \(R>0\), these inequalities force \(p_1=p_2=1\). If \(R<0\), they force \(p_1=p_2=0\). If \(R=0\), they force \(p_1=p_2\). Thus, only constant outcomes obedient at the prior can occur. Reproduce any such constant probability \(c\) by pooling on \(m_0\). After another universally available message use the same prior-optimal lottery. After a message exclusive to type 1 choose \(b\), and after a message exclusive to type 2 choose \(a\). These are the receiver's strictly preferred actions at those types, and each is worse for the sender type who can submit the message than the pooling lottery. Hence, disclosure reproduces the mediated payoff.
\end{proof}

\begin{proof}[Proof of \Cref{cor:sd-complete-binary}]
Sufficiency follows from \Cref{prop:sd-two-exclusive}. If no two types strictly prefer the same action, nonconstancy and the binary action set imply that there are at most two types. With one type the receiver already knows the sender's type and persuasion cannot strictly improve on the sender's preferred receiver best response. With two types the rankings are opposite and \Cref{lem:sd-opposite-binary} applies. The last statement follows because at least two among three strict binary preferences must coincide.
\end{proof}

\subsection{Proof of \autoref{thm:evidence-order}}

\begin{proof}
We first establish two facts about mediation outcomes and reporting matrices.

\begin{claim}\label{claim:evidence-compact}
For fixed \(A,v,u,\mu\) with type-independent sender payoffs, and an evidence structure, the set of mediation outcomes is compact, and the highest mediated sender payoff is attained.
\end{claim}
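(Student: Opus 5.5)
The plan is to describe the set of mediation outcomes as the continuous image of a compact set defined by finitely many closed polynomial constraints. Compactness then follows at once. Since the sender's ex ante payoff is a linear, hence continuous, function of \(q\), it attains its maximum on this set, which gives the second assertion.

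\textbf{Step 1 (reduction to recommendations).} Take any mediated equilibrium, with signal space \(Y\), rule \(g\), sender strategy \(\sigma\) and receiver strategy \(r\). Compose \(g\) with \(r\) to obtain, for each message \(m\), a lottery \(\ell_m\coloneqq\sum_{y}g(y\mid m)r(\cdot\mid y)\in\Delta(A)\), and write \(V(\ell)\coloneqq\sum_a v(a)\ell(a)\). The induced outcome is \(q(\theta)=\sum_m\sigma(m\mid\theta)\ell_m\). Three properties hold for \((\sigma,\ell,q)\):
\begin{itemize}
\item[(a)] \(q\) is obedient, by the usual averaging argument: a deviation from \(a\) to \(b\) after every signal at which \(a\) is played cannot be profitable.
\item[(b)] \(\sigma(m\mid\theta)\bigl[V(\ell_m)-V(\ell_{m'})\bigr]\ge 0\) for every \(\theta\) and every \(m,m'\in M(\theta)\), by sender optimality. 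Here we use that \(v\) does not depend on type, so each type's payoff from \(m'\) is \(V(\ell_{m'})\).
\item[(c)] \(\sigma(m\mid\theta)=0\) whenever \(m\notin M(\theta)\).
\end{itemize}
Writing the incentive constraint in the multiplied form (b) is what makes it closed; the condition "for \(m\) in the support of \(\sigma(\cdot\mid\theta)\)" would not be.

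\textbf{Step 2 (the compact set).} Let \(\mathcal K\) be the set of triples \((\sigma,\ell,q)\in\Delta(M)^{\Theta}\times\Delta(A)^{M}\times\Delta(A)^{\Theta}\) satisfying (a), (b), (c) and \(q(\theta)=\sum_m\sigma(m\mid\theta)\ell_m\). Every constraint is a weak polynomial inequality or equality, and the ambient space is a product of simplices. Hence \(\mathcal K\) is compact, and so is its projection onto \(q\). By Step 1, every mediation outcome lies in this projection.

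\textbf{Step 3 (converse).} Given \((\sigma,\ell,q)\in\mathcal K\), I will build a mediated equilibrium inducing \(q\). Let \(A^+(q)\) be the set of actions with positive ex ante probability under \(q\), and let \(a^-\in\argmin_{a\in A^+(q)}v(a)\).
\begin{itemize}
\item Take \(Y=A^+(q)\) and recommend according to \(\ell_m\) after every message \(m\) used by some type.
\item After every unused message, recommend \(a^-\).
\item The sender follows \(\sigma\), and the receiver obeys.
\end{itemize}
The receiver's incentives hold as follows. The distribution of on-path recommendations is unchanged, so obeying is optimal by (a). Every recommendation in \(Y\) occurs with positive probability, so beliefs follow from Bayes' rule.

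The sender's incentives hold as follows. A deviation to a used message \(m'\) is unprofitable by (b). A deviation to an unused message yields \(v(a^-)\). Every type's payoff is an average of actions in \(A^+(q)\), by full support of \(\mu\), so it is at least \(v(a^-)\) and the deviation does not help.

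Consistency of the beliefs should follow from the same perturbation argument the paper defers to the Supplementary Appendix. Hence the projection of \(\mathcal K\) equals the set of mediation outcomes, and the claim follows.

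The main obstacle I expect is Step 3. The unused messages in a point of \(\mathcal K\) carry lotteries \(\ell_{m'}\) that need not be supportable at any consistent belief. The fix is to discard those lotteries and replace them with \(a^-\). This works only because the constraints in \(\mathcal K\) already make the used messages optimal, and because type-independence makes \(a^-\) a uniform punishment for all types. I would also verify that no step needs \(a^-\) to depend continuously on the point of \(\mathcal K\). It does not, since the construction is carried out separately for each point.
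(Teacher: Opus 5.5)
Your proposal is correct and follows the paper's proof essentially step for step. You encode each equilibrium by $(\sigma,\ell,q)$ subject to closed constraints and project the resulting compact set onto $q$; for the converse you keep the used-message lotteries, send $\delta_{a^-}$ after unused messages (with $a^-$ the sender's worst action in $A^+(q)$), and use $A^+(q)$ as the signal space. The only difference is cosmetic: the paper writes sender optimality as $\sum_a v(a)q(a\mid\theta)\ge\sum_a v(a)\ell_m(a)$ for $m\in M(\theta)$. That form is equivalent to your multiplied form given $q(\theta)=\sum_m\sigma(m\mid\theta)\ell_m$, and it is already linear.
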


\begin{proof}[Proof of the claim]
Let \(\ell_m\in\Delta(A)\) be the distribution of the receiver's action after the sender submits \(m\), so \(\ell_m(a)=\sum_{y\in Y}g(y\mid m)r(a\mid y)\). Every mediated equilibrium determines a reporting strategy \(\sigma\), lotteries \((\ell_m)_{m\in M}\), and an obedient outcome satisfying
\begin{equation}\label{eq:evidence-representation}
\begin{gathered}
 q(a\mid\theta)=\sum_{m\in M}\sigma(m\mid\theta)\ell_m(a),\\
 \text{\&} \qquad
 \sum_a v(a)q(a\mid\theta)\ge\sum_a v(a)\ell_m(a)
 \quad\left(m\in M(\theta)\right).
\end{gathered}
\end{equation}
Conversely, suppose \(\sigma(m\mid\theta)=0\) whenever \(m\notin M(\theta)\), and suppose \eqref{eq:evidence-representation} and \eqref{eq:obedience} hold. Let
\[
 A^+\coloneqq\left\{a\colon\sum_\theta\mu(\theta)q(a\mid\theta)>0\right\},
 \qquad \text{\&} \qquad a^-\in\argmin_{a\in A^+}v(a).
\]
Every message used by some type assigns probability one to \(A^+\). Replace the lottery at every unused message by \(\delta_{a^-}\). Each prescribed lottery gives at least \(v(a^-)\), so the replacement preserves sender optimality. Use \(A^+\) as the signal space and recommend the action drawn from the assigned lottery. Every signal occurs with positive probability, and \eqref{eq:obedience} ensures that the receiver follows it. This implements \(q\).

The reporting probabilities, message lotteries, and outcomes range over finite products of simplices. The requirements that \(\sigma(m\mid\theta)=0\) whenever \(m\notin M(\theta)\), and that \eqref{eq:evidence-representation} and \eqref{eq:obedience} hold, define a closed subset of this product. This set is compact, as is its projection onto \(q\). The sender's objective is continuous, so its maximum is attained.
\end{proof}

\begin{claim}\label{claim:evidence-message-reduction}
Fix an evidence structure and a nonempty provable set \(B\). There is a collection \(N_B\subseteq M_B\), with \(|N_B|\le|B|\), such that every reporting matrix in \(\Sigma(B)\) is a garbling of a reporting matrix using only \(N_B\). Among reporting matrices using \(N_B\), those with linearly independent columns are dense.
\end{claim}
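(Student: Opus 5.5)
The plan is to reduce the claim to a statement about nonnegative matrices whose support is constrained, and then argue by a rank/matching argument in the spirit of Carathéodory. Fix a provable \(B\). For \(m\in M_B\), the column of any \(S\in\Sigma(B)\) indexed by \(m\) is a nonnegative vector in \(\R^B\) supported on \(E_m\), and the rows of \(S\) sum to one. A garbling \(S=S^1K\) with \(K\) row-stochastic says that each column of \(S\) is a nonnegative combination of columns of \(S^1\), with the weights attached to each column of \(S^1\) summing to one across the columns of \(S\). The row-sum condition on \(S^1\) then holds automatically, and the remaining requirement is that \(S^1\) respect availability.

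\emph{Step 1: choose \(N_B\).} Consider the bipartite incidence structure between \(B\) and \(M_B\), with an edge \((\theta,m)\) whenever \(\theta\in E_m\). Call a subset \(N\subseteq M_B\) \emph{independent} if some reporting matrix using only \(N\) has linearly independent columns. By a Hall/transversal argument this holds exactly when \(N\) can be matched injectively into \(B\) along edges; take generic positive weights on all edges, so that the determinant of a square submatrix is a nonzero polynomial. Since columns live in \(\R^B\), every independent set has at most \(|B|\) elements. I would take \(N_B\) to be a maximal independent set, chosen to be maximal in the matroid sense with respect to the supports \(E_m\). Messages duplicating the support of another message should be retained only as long as the matching allows. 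In this way \(N_B\) distinguishes as many types within \(B\) as \(M_B\) itself does.

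\emph{Step 2: density.} For the second assertion, fix any reporting matrix using \(N_B\). Perturb it by moving a small mass onto every available edge with generic weights. The determinant of the square submatrix given by the matching is a polynomial that is nonzero at generic points, so arbitrarily small perturbations yield linearly independent columns while staying in \(\Sigma(B)\). This step is routine.

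\emph{Step 3: factorization, the main obstacle.} Given \(S\in\Sigma(B)\), I must produce \(S^1\) using \(N_B\) and a stochastic \(K\) with \(S=S^1K\). The first attempt is a column-exchange argument. Suppose \(S\) uses a message \(m\notin N_B\). By maximality of \(N_B\), the support pattern of \(m\) is spanned by \(N_B\): there is a set of messages in \(N_B\) whose union of supports covers \(E_m\) through the matching. I then reroute the mass that \(S\) places on \(m\) through these messages and let \(K\) split them back into \(m\) and their original targets. To make this split type-independent, as \(K\) requires, I would send the types in \(E_m\) to \emph{distinct} messages of \(N_B\) where possible, i.e.\ as in the identity-like part of the matching, so that \(K\) can reproduce each type's row separately. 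The danger is types that cannot be separated within \(N_B\): two types whose available messages in \(M_B\) coincide. Any \(S\) must then treat them with rows that are unconstrained relative to each other, and the reduction needs \(N_B\) to retain enough duplicate messages to separate them. This is exactly why maximality is imposed with respect to the matching rather than to the supports alone.

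I expect this step, namely showing that a maximal matchable set is rich enough that every \(S\) is a type-independent garbling, to be the crux. If the direct exchange argument stalls, my fallback is induction on \(|M_B\setminus N_B|\), removing one message at a time and verifying the garbling at each step.
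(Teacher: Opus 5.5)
Your choice of $N_B$ is sound: a basis of the transversal matroid on $M_B$ is exactly the message side of some maximum matching, which is what the paper uses. Your density argument in Step 2 is also essentially the paper's, a perturbation toward a matrix with a nonvanishing matched minor.

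\textbf{The gap is Step 3.} It is the substance of the claim, and you leave it as a sketch plus an unspecified induction. The matroid fact that $E_m$ is ``spanned'' by $N_B$ gives only linear dependence. It does not give a nonnegative conversion applied identically to every type, so it cannot carry the argument. Your remedy of routing types to distinct matched messages ``where possible'' presupposes exactly what must be shown. First, every type who might use a message outside $N_B$ must be given a private message in $N_B$, simultaneously for all such messages. Second, the types who cannot be separated this way must never need such a message.

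\textbf{The missing idea is the alternating-path (K\"onig) structure of a maximum matching.}
\begin{enumerate}
\item Let $C$ be the types reachable from unmatched types by alternating paths: a type, then any message in $M_B$ available to him, then that message's matched type, and so on. Let $D$ be the messages reached.
\item Because no augmenting path exists, $D\subseteq N_B$. By construction, every message in $M_B$ available to a type in $C$ lies in $D$.
\item Every $\theta\in B\setminus C$ is matched. His matched message $m_\theta$ lies outside $D$, so it is unavailable to every type in $C$, and these messages are distinct across such $\theta$.
\item Given $S\in\Sigma(B)$, define $S^1$ and $K$ as follows. Types in $C$ keep their rows, which are already supported on $D\subseteq N_B$. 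Each $\theta\notin C$ submits $m_\theta$ with certainty. Let $K$ fix every message in $D$ and send $m_\theta$ to the distribution $S(\theta)$.
\item In this $S^1$, each $m_\theta$ is sent only by $\theta$, and messages in $D$ are sent only by types in $C$. Hence $K$ is well defined and $S=S^1K$.
\end{enumerate}
This also settles your worry. The types that cannot be separated within $N_B$ are precisely those in $C$, and they have no available messages outside $N_B$ to begin with.
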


\begin{proof}[Proof of the claim]
Form the bipartite graph with types in \(B\) on one side and messages in \(M_B\) on the other, joining \(\theta\) to \(m\) when \(\theta\in E_m\). Choose a matching with the largest possible number of pairs, and let \(N_B\) be its matched messages.

Starting from all unmatched types, repeatedly add every neighboring message and then, if that message is matched, its matched type. Let \(C\) be the types reached and \(D\) the messages reached. No unmatched message is reached: otherwise an alternating path from an unmatched type to an unmatched message would strictly increase the number of matched pairs. Thus, \(D\subseteq N_B\), every message available to a member of \(C\) belongs to \(D\), and every type outside \(C\) has a distinct matched message unavailable to \(C\). Denote the latter message by \(m_\theta\). The matched messages are precisely \(D\) and the messages \(m_\theta\) for \(\theta\notin C\).

Fix \(S\in\Sigma(B)\). For \(\theta\in C\), retain his row of \(S\), which is supported on \(D\). For \(\theta\notin C\), have him submit \(m_\theta\). Garble each message in \(D\) into its original label, and garble \(m_\theta\) into the distribution \(S(\theta)\). The assignments concern disjoint messages and reproduce every row of \(S\).

Every type has a message in \(N_B\). Have each matched type submit his matched message, and assign any available message in \(N_B\) to each remaining type. This defines a reporting matrix \(R\) whose matched rows form an identity submatrix. Given any reporting matrix \(T\) on \(N_B\), the same determinant minor of \((1-\varepsilon)T+\varepsilon R\) is a polynomial in \(\varepsilon\) that equals one at \(\varepsilon=1\). It has only finitely many zeros. Taking positive \(\varepsilon\) tending to zero outside those zeros proves the density claim.
\end{proof}

We next prove that statement \ref{evidence-order-payoff} of \Cref{thm:evidence-order}, when required at only one full-support prior, implies \eqref{eq:evidence-order}.

\begin{claim}\label{claim:evidence-payoff-necessity}
Fix a full-support prior \(\mu\). If \(V_1^{\mathrm{MD}}(\mu,u,v)\ge V_2^{\mathrm{MD}}(\mu,u,v)\) for every finite \(A\), every type-independent \(v\), and every \(u\), then \(\mathcal E^1\succeq\mathcal E^2\).
\end{claim}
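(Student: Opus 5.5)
We argue by contraposition. Suppose \eqref{eq:evidence-order} fails. Then there are a nonempty set \(B\) and a matrix \(S^2\in\Sigma_2(B)\) that is not of the form \(S^1K\) with \(S^1\in\Sigma_1(B)\) and \(K\) row-stochastic. We will construct \(A\), a type-independent \(v\) and a receiver payoff \(u\) such that \(V_2^{\mathrm{MD}}(\mu,u,v)=\mu(B)\) while \(V_1^{\mathrm{MD}}(\mu,u,v)<\mu(B)\).

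\emph{Normalising the witness.} By \Cref{claim:evidence-message-reduction} applied to \(\mathcal E^2\), \(S^2=TK'\) for some \(T\) supported on \(N^2_B\). If \(T\) were a garbling of some \(S^1\in\Sigma_1(B)\), composing the two garblings would make \(S^2\) one as well. So \(T\) is also a witness.

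The set \(\{S^1K\}\) is the continuous image of the compact set \(\Sigma_1(B)\times\{\text{row-stochastic }K\}\), hence closed. A small perturbation of \(T\) therefore remains a witness. By the density part of the claim, we may assume \(S^2\) uses at most \(|B|\) messages \(n\in N^2_B\), each used with positive total probability, and that its columns are linearly independent.

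For each such \(n\), let \(\pi_n\in\Delta(B)\) be the posterior \(\pi_n(\theta)\propto\mu(\theta)S^2_{\theta n}\). Linear independence of the columns makes these posteriors pairwise distinct.

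\emph{The game.} The actions are \(a^-\), one action \(a_n\) for each \(n\), and auxiliary actions \(b^{\pm}_{n,j}\). The sender's payoff is \(v(a_n)=1\) and \(v(\cdot)=0\) for every other action.

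The receiver's payoffs are as follows.
\begin{itemize}
\item \(u(a^-,\cdot)\equiv0\).
\item \(u(a_n,\theta)=-L\) for \(\theta\notin B\), with \(L\) large, and \(u(a_n,\theta)=0\) on \(B\).
\item \(u(b^{\pm}_{n,j},\theta)=u(a_n,\theta)\pm\ell_{n,j}(\theta)\), where \(\{\ell_{n,j}\}_j\) is a basis of the functions on \(B\) satisfying \(\sum_\theta\pi_n(\theta)\ell_{n,j}(\theta)=0\). Each \(\ell_{n,j}\) is extended by a large negative constant off \(B\).
\end{itemize}

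The obedience constraints of \(a_n\) against \(b^{\pm}_{n,j}\) force the following. In any obedient outcome, the prior-weighted vector \(\mu(\theta)q(a_n\mid\theta)\), restricted to \(B\), is proportional to \(\pi_n\). The large penalty \(L\), chosen relative to the auxiliary payoffs, forces \(q(a_n\mid\theta)=0\) for \(\theta\notin B\).

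\emph{Payoff under \(\mathcal E^2\).} Types in \(B\) report according to \(S^2\), types outside \(B\) submit \(m_0\), and the mediator maps \(n\mapsto a_n\) and every other message to \(a^-\). This is obedient and incentive compatible. Messages in \(N^2_B\) are unavailable outside \(B\), and types in \(B\) already receive the maximum payoff \(1\). Hence \(V_2^{\mathrm{MD}}(\mu,u,v)=\mu(B)\), which is also the largest payoff any obedient outcome can give.

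\emph{Payoff under \(\mathcal E^1\).} By \Cref{claim:evidence-compact}, if \(V_1^{\mathrm{MD}}\ge\mu(B)\), some mediated equilibrium under \(\mathcal E^1\) gives every type in \(B\) only actions \(a_n\). By the forcing property, \(q(a_n\mid\theta)=\lambda_nS^2_{\theta n}\) on \(B\) for scalars \(\lambda_n\ge0\). Summing over \(n\) gives \(\sum_n\lambda_nS^2_{\theta n}=1=\sum_nS^2_{\theta n}\) for every \(\theta\in B\). Linear independence of the columns then gives \(\lambda_n=1\), so \(q=S^2\) on \(B\).

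Every message used by a type in \(B\) yields sender payoff \(1\). Since preferences are type-independent and types outside \(B\) receive \(0\), no such message can be available outside \(B\). The reporting strategy of \(B\) is therefore some \(S^1\in\Sigma_1(B)\). The message lotteries \(\ell_m\), which lie in \(\Delta(\{a_n\})\), define a row-stochastic \(K\) with \(S^2=S^1K\). This contradicts the choice of \(S^2\), so \(V_1^{\mathrm{MD}}<V_2^{\mathrm{MD}}\).

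The step needing the most care is the design of \(u\). The auxiliary actions must pin down the posterior after each \(a_n\) exactly, and the penalty \(L\) must dominate the auxiliary payoffs so that no type outside \(B\) is ever recommended some \(a_n\). The remaining steps are routine.
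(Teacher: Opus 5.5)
\textbf{There is a genuine gap: the receiver payoff you design does not make the \(\mathcal E^2\) equilibrium obedient.} Your normalisation (closedness of the set of garblings plus the density part of \Cref{claim:evidence-message-reduction}) is sound. So is your closing step (\(\lambda_n=1\) by linear independence, then reading off \(S^1\) and \(K\)). Together these are the contrapositive of the paper's limit argument.

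The problem is that \(u(a_{n'},\theta)=0\) on \(B\) for \emph{every} \(n'\). At the posterior \(\pi_n\), the receiver gets \(0\) from \(a_n\) but \(\pm\sum_{\theta\in B}\pi_n(\theta)\ell_{n',j}(\theta)\) from \(b^{\pm}_{n',j}\). For \(n'\ne n\), \(\pi_n\) and \(\pi_{n'}\) are distinct probability vectors. So \(\pi_n\) is not a multiple of \(\pi_{n'}\), and hence is not orthogonal to every \(\ell_{n',j}\). For some \(j\) and some sign, an auxiliary action attached to \(n'\) therefore strictly beats \(a_n\).

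More generally, \(a_n\) is a best response at a belief \(p\) only if \(p\) is supported on \(B\) (to match \(a^-\)) and \(p=\pi_{n'}\) for every \(n'\). So once the witness uses two messages, no \(a_n\) is ever optimal, and \(V_1^{\mathrm{MD}}=V_2^{\mathrm{MD}}=0\). Two messages cannot be avoided in general: when \(B\) is provable under \(\mathcal E^1\), every one-column matrix is trivially a garbling. Your forcing argument checks \(a_n\) only against its own auxiliaries and misses these cross-constraints between different messages.

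\textbf{The missing idea is that the favourable actions must themselves separate the posteriors, with the auxiliaries only a small perturbation.} The paper takes the quadratic scoring rule \(u(a_j,\theta)=2\pi_j(\theta)-\|\pi_j\|^2\), so that \(\E_{\pi_j}[u(a_j,\theta)-u(a_i,\theta)]=\|\pi_j-\pi_i\|^2>0\). It adds one-sided auxiliaries \(u(b_{jt},\theta)=u(a_j,\theta)+\varepsilon(\1_{\{\theta=t\}}-\pi_j(t))\) for all \(t\in\Theta\). Those with \(t\notin B\) also rule out any posterior mass outside \(B\). The scale satisfies \(\varepsilon|\pi_j(t)-\pi_i(t)|<\|\pi_j-\pi_i\|^2\). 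Then \(a_j\) is optimal exactly at \(\pi_j\), and no auxiliary action attached to \(a_i\) beats \(a_j\) there.

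With this payoff in place of yours, the rest of your argument goes through essentially unchanged. Send outsiders to a receiver-optimal auxiliary action, which gives the sender zero. Obedience against the \(b_{jt}\) still yields your forcing property \(\mu(\theta)q(a_j\mid\theta)\propto\pi_j(\theta)\).
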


\begin{proof}[Proof of the claim]
Fix a nonempty set \(B\) provable under \(\mathcal E^2\), and choose \(N_B=\{n_1,\ldots,n_r\}\) as in \Cref{claim:evidence-message-reduction}. First take a reporting matrix \(S\) on \(N_B\) with linearly independent columns. For each \(j\), define
\[
 \lambda_j^0\coloneqq\sum_{\theta\in B}\mu(\theta)S_{\theta j},
 \qquad \text{\&} \qquad
 \pi_j(\theta)\coloneqq
 \begin{cases}
 \mu(\theta)S_{\theta j}/\lambda_j^0,&\theta\in B,\\
 0,&\theta\notin B.
 \end{cases}
\]
All \(\lambda_j^0\) are positive, and the posterior vectors \(\pi_1,\ldots,\pi_r\) are linearly independent.

Introduce actions \(a_1,\ldots,a_r\) and \(b_{jt}\), where \(j=1,\ldots,r\) and \(t\in\Theta\). Set \(v(a_j)=1\) and \(v(b_{jt})=0\) for every \(j\) and \(t\). For the receiver, set
\[u(a_j,\theta) = 2\pi_j(\theta)-\|\pi_j\|^2,\qquad \text{\&} \qquad u(b_{jt},\theta) =u(a_j,\theta)
       +\varepsilon\left(\1_{\{\theta=t\}}-\pi_j(t)\right),\]
where the norm is Euclidean and \(\varepsilon>0\) is small enough that
\[
 \varepsilon|\pi_j(t)-\pi_i(t)|<\|\pi_j-\pi_i\|^2
 \qquad(i\ne j,\ t\in\Theta).
\]
When \(r=1\), any positive \(\varepsilon\) suffices.

At posterior \(p\), the expected payoff difference between \(b_{jt}\) and \(a_j\) is \(\varepsilon[p(t)-\pi_j(t)]\). If \(p\ne\pi_j\), some coordinate has \(p(t)>\pi_j(t)\), so \(a_j\) is not optimal. At \(\pi_j\), every \(b_{jt}\) ties with \(a_j\), while
\[
 \E_{\pi_j}[u(a_j,\theta)-u(a_i,\theta)]
 =\|\pi_j-\pi_i\|^2>0
 \qquad(i\ne j).
\]
The choice of \(\varepsilon\) also makes \(a_j\) strictly preferable to every \(b_{it}\) with \(i\ne j\). Thus \(a_j\) is optimal exactly at \(\pi_j\). At every posterior there is a receiver best response among the actions \(b_{jt}\): for each \(a_j\), at least one of its associated actions \(b_{jt}\) gives the receiver at least as much.

Under \(\mathcal E^2\), assign action \(a_j\) to message \(n_j\), and have types in \(B\) report according to \(S\). If \(B\ne\Theta\), assign every remaining message a receiver-optimal action among the \(b_{jt}\) at \(\mu(\cdot\mid\Theta\setminus B)\), and have all types outside \(B\) submit the common message. If \(B=\Theta\), assign all remaining messages \(a_1\). Members of \(B\) obtain payoff one, and outsiders, if present, obtain zero and cannot submit any \(n_j\). Recommendation \(a_j\) induces posterior \(\pi_j\), so this is an equilibrium with sender payoff \(\mu(B)\).

Under either evidence structure, action \(a_j\) can be optimal only at posterior \(\pi_j\), which assigns probability one to \(B\). Hence, no mediated payoff exceeds \(\mu(B)\). The assumed inequality \(V_1^{\mathrm{MD}}(\mu,u,v)\ge V_2^{\mathrm{MD}}(\mu,u,v)\) and \Cref{claim:evidence-compact} imply that \(\mathcal E^1\) has an equilibrium attaining \(\mu(B)\). Every type in \(B\) must then receive an action in \(\{a_1,\ldots,a_r\}\) with probability one, and every type outside \(B\) receives payoff zero.

Let \(\lambda_j\) be the unconditional probability of action \(a_j\) in this equilibrium. Whenever it occurs, its conditional type distribution is \(\pi_j\). Bayes' rule therefore implies
\[
 \sum_{j=1}^r\lambda_j\pi_j(\theta)
 =\mu(\theta)\1_{\{\theta\in B\}}
 =\sum_{j=1}^r\lambda_j^0\pi_j(\theta)
 \qquad(\theta\in\Theta).
\]
Linear independence yields \(\lambda_j=\lambda_j^0\) for every \(j\). Thus, the conditional probability of \(a_j\) at every \(\theta\in B\) equals \(S_{\theta j}\).

Every message used by a member of \(B\) must assign probability one to \(\{a_1,\ldots,a_r\}\). Such a message must rule out every type outside \(B\); otherwise an outsider could deviate from payoff zero to one. Restrict the equilibrium reporting strategy to \(B\) to obtain \(S^1\in\Sigma_1(B)\). For each message used by a member of \(B\), let \(K_{mj}\) be the probability of final action \(a_j\) following that message. Complete unused rows of \(K\) arbitrarily to probability distributions. Then \(S=S^1K\).

We have established the factorization for every reporting matrix on \(N_B\) with linearly independent columns. These matrices are dense by \Cref{claim:evidence-message-reduction}. The possible \(S^1\) and \(K\) range over compact sets, so taking limits establishes the factorization for every reporting matrix on \(N_B\). Finally, every element of \(\Sigma_2(B)\) is a garbling of such a matrix. Composing the two garblings proves \eqref{eq:evidence-order}. %
\end{proof}

\begin{claim}\label{claim:evidence-outcome-sufficiency}
If \(\mathcal E^1\succeq\mathcal E^2\), every mediation outcome under \(\mathcal E^2\) is also a mediation outcome under \(\mathcal E^1\), for every \(A,v,u\) with type-independent \(v\) and every full-support \(\mu\).
\end{claim}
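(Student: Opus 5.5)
Fix \(A,v,u,\mu\) with type-independent \(v\), and a mediated equilibrium under \(\mathcal E^2\) inducing outcome \(q\), with reporting strategy \(\sigma^2\) and message lotteries \((\ell_n)_{n\in M^2}\) as in \Cref{claim:evidence-compact}. By the converse direction of \Cref{claim:evidence-compact}, it suffices to construct under \(\mathcal E^1\) a reporting strategy \(\sigma^1\) and lotteries \((\ell'_m)_{m\in M^1}\) that satisfy \eqref{eq:evidence-representation} and reproduce \(q\). Obedience is automatic because \(q\) is unchanged, and unused messages can then be sent to \(\delta_{a^-}\). Let \(w(\theta)=\sum_a v(a)q(a\mid\theta)\). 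Let \(c_1>c_2>\dots>c_k\) be its distinct values, with level sets \(L_i=\{\theta\colon w(\theta)=c_i\}\) and upper sets \(B_i=L_1\cup\dots\cup L_i\).

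\textbf{Step 1 (upper sets under \(\mathcal E^2\)).} Any message \(n\) used by a type in \(B_i\) gives him \(\sum_a v(a)\ell_n(a)=w(\theta)\ge c_i\). Since \(v\) is type-independent, any type who can send \(n\) obtains at least \(c_i\) by doing so, so by sender optimality \(E^2_n\subseteq B_i\). Hence the rows of \(\sigma^2\) restricted to \(B_i\) form a matrix \(S^{2,i}\in\Sigma_2(B_i)\). Moreover, each message \(n\) used by types in \(L_i\) has value exactly \(c_i\), so it is not used by any other level.

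\textbf{Step 2 (garbling level by level).} For each \(i\), apply \eqref{eq:evidence-order} to \(B_i\) and \(S^{2,i}\) to get \(S^{1,i}\in\Sigma_1(B_i)\) and row-stochastic \(K^i\) with \(S^{2,i}=S^{1,i}K^i\). Let type \(\theta\in L_i\) report according to his row of \(S^{1,i}\). Every message he uses lies in \(M^1_{B_i}\), so it excludes all types with \(w<c_i\). Distinct levels might share an \(\mathcal E^1\) message, and this is the main obstacle. A message used at level \(i\) lies in \(M^1_{B_i}\), and one used at level \(j>i\) lies in \(M^1_{B_j}\). A message \(m\) shared by both would only be in \(M^1_{B_i}\) if \(E^1_m\subseteq B_i\), so the two could in principle coincide. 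To handle this, I would refine the choice: at level \(j\), take the matrix \(S^{1,j}\) but use only the rows of \(L_j\). If those rows put mass on some \(m\in M^1_{B_{j-1}}\) also used by a higher level, I would try to reroute that mass. First I would check whether a collision is actually harmful. If \(m\) is used by both levels \(i<j\), then assigning \(m\) a single lottery cannot give values \(c_i\) and \(c_j\) simultaneously. The fix I would try is to choose the garbling for the largest set first and define the level-\(j\) matrix as the restriction of the \(B_j\)-factorization, where \(K^j\) maps each message of \(S^{1,j}\) into \(\mathcal E^2\) messages. Because \(S^{2,j}\) rows for \(L_j\) are supported on level-\(j\) \(\mathcal E^2\) messages and rows for higher levels on theirs, I would then split each \(\mathcal E^1\) message \(m\) used at level \(j\) according to which level its \(K^j\)-image lies in. If \(K^j\) maps \(m\) into messages of several levels, with positive \(S^{1,j}\) mass from level-\(j\) types, then the level-\(j\) types' images would include higher-level \(\mathcal E^2\) messages, contradicting \(S^{2,j}=S^{1,j}K^j\) on those rows. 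I would aim to formalize this as: within the single factorization at \(B_j\), messages used by \(L_j\) types have \(K^j\)-images only in level-\(j\) \(\mathcal E^2\) messages. So I use the \(B_j\) factorization only for \(L_j\) types, and then must verify that level-\(j\) messages are not also used by higher levels in their own factorizations. When they are, I would replace those higher-level messages' lotteries. Since \(E^1_m\subseteq B_j\)-messages used by \(L_j\) but also usable by a higher type \(\theta'\), giving \(\theta'\) a deviation option worth \(c_j<w(\theta')\) is harmless. So I would tie-break by letting each message carry the lottery of the lowest level using it, and have higher types avoid it by using their own factorization messages that are excluded from lower levels. If this still collides, I would iterate from the top level downward.

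\textbf{Step 3 (lotteries and incentives).} After message \(m\) used at level \(i\), set \(\ell'_m=\sum_n K^i_{mn}\ell_n\). Then \(\theta\in L_i\) gets \(\sum_m S^{1,i}_{\theta m}\sum_n K^i_{mn}\ell_n=\sum_n S^{2,i}_{\theta n}\ell_n=q(\theta)\), and every used level-\(i\) message has value \(c_i\). For incentives, a type \(\theta\) who can send a level-\(i\) message belongs to \(E^1_m\subseteq B_i\), so \(w(\theta)\ge c_i\) and he gains nothing. Unused messages get \(\delta_{a^-}\). Thus \eqref{eq:evidence-representation} holds, and \Cref{claim:evidence-compact} yields a mediated equilibrium under \(\mathcal E^1\) inducing \(q\).
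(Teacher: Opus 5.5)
Your overall route is the paper's: the upper sets \(B_i\), one factorization \(S^{2,i}=S^{1,i}K^i\) per level used only for the rows of \(L_i\), lotteries \(\ell'_m=\sum_n K^i_{mn}\ell_n\), and incentives from \(E^1_m\subseteq B_i\). The gap is in Step 2. There you name cross-level collisions of \(\mathcal E^1\) messages as ``the main obstacle'' but never show that \(\ell'_m\) is well defined. You offer only things you would try: rerouting mass, tie-breaking by the lowest level, iterating from the top down. That is not an argument. Giving a shared message the lowest level's lottery would change \(q(\theta')\) for higher types who use it, and you never show they can be moved to other messages without altering their outcome.

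The missing observation is that the obstacle does not exist. Your remark that the two levels ``could in principle coincide'' is false. Suppose \(m\) lies in the support of the row of some \(\theta\in L_j\) under \(S^{1,j}\) and of some \(\theta'\in L_i\) under \(S^{1,i}\), with \(i<j\). Because \(S^{1,i}\in\Sigma_1(B_i)\), we have \(m\in M^1_{B_i}\), that is, \(E^1_m\subseteq B_i\). But \(S^{1,j}_{\theta m}>0\) gives \(\theta\in E^1_m\), and \(w(\theta)=c_j<c_i\) gives \(\theta\notin B_i\), a contradiction. Equivalently, as in the paper, any message used at level \(i\) satisfies \(\min_{\theta\in E^1_m}w(\theta)=c_i\). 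The level is therefore a function of the message, so messages used at different levels are automatically distinct.

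Replace the rerouting discussion with this one line and the rest goes through. In particular, your argument that a used level-\(i\) message has value \(c_i\) is sound. Stated cleanly: \(S^{2,i}_{\theta n}\ge S^{1,i}_{\theta m}K^i_{mn}>0\) forces \(n\) to be used by \(\theta\) itself, hence \(z_n=c_i\). With that, Step 3 completes the proof.
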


\begin{proof}[Proof of the claim]
Fix an equilibrium under \(\mathcal E^2\), with reporting strategy \(\sigma^2\), final-action lotteries \(\ell_n\) following its messages, and induced outcome \(q\). Write
\[
 z_n\coloneqq\sum_a v(a)\ell_n(a),
 \qquad \text{\&} \qquad
 w(\theta)\coloneqq\sum_a v(a)q(a\mid\theta)
 =\max_{n\in M^2(\theta)}z_n.
\]
For each value \(c\in w(\Theta)\), let \(B_c\coloneqq\{\theta \colon w(\theta)\ge c\}\). If \(n\) is used by a type in \(B_c\), sender optimality implies \(z_n\ge c\). If \(n\) were available to some \(\widehat\theta\notin B_c\), that type could obtain \(z_n\ge c>w(\widehat\theta)\) by submitting it, contradicting sender optimality. The restriction of \(\sigma^2\) to \(B_c\) is, consequently, an element \(S^{2,c}\in\Sigma_2(B_c)\). By \eqref{eq:evidence-order}, choose \(S^{1,c}\in\Sigma_1(B_c)\) and \(K^c\) with \(S^{2,c}=S^{1,c}K^c\).

Let \(J_c\coloneqq \bigcup_{\{\theta \colon w(\theta)=c\}}\{m\colon S^{1,c}_{\theta m}>0\}\). If \(m\in J_c\), choose \(\theta\) with \(w(\theta)=c\) and \(S^{1,c}_{\theta m}>0\). Since
\[
 \sigma^2(n\mid\theta)
 =
 \sum_{\widehat m\in M_{B_c}^1}
 S^{1,c}_{\theta\widehat m}K^c_{\widehat m n}
 \ge
 S^{1,c}_{\theta m}K^c_{mn},
\]
\(K^c_{mn}>0\) implies \(\sigma^2(n\mid\theta)>0\), and hence \(z_n=w(\theta)=c\). Assign message \(m\) the lottery \(\ell_m^1\coloneqq\sum_{n\in M_{B_c}^2}K^c_{mn}\ell_n\), yielding payoff \(c\). Moreover, \(E_m^1\subseteq B_c\) and \(m\) is available to a type receiving \(c\), so \(\min_{\theta\in E_m^1}w(\theta)=c\). Thus, the sets \(J_c\) are disjoint, so no message is assigned more than one lottery.

For a type \(\theta\) with \(w(\theta)=c\), use the row \(S^{1,c}(\theta)\) as his reporting strategy. Its support is contained in \(J_c\), and
\[
 \sum_{m\in J_c}S^{1,c}_{\theta m}\ell_m^1
 =\sum_{n\in M_{B_c}^2}\sigma^2(n\mid\theta)\ell_n
 =q(\theta).
\]
Every assigned message delivering payoff \(c\) rules out every type outside \(B_c\), so any type who can submit it already receives at least \(c\). Assign every remaining message the sender's least-preferred action among those used by \(q\). No deviation is profitable. All assigned lotteries are supported on actions used by \(q\); recommend those actions and have the receiver follow them. The induced outcome is still \(q\), so receiver optimality follows from \eqref{eq:obedience}, and every signal occurs with positive probability.
\end{proof}

\Cref{claim:evidence-outcome-sufficiency} proves \ref{evidence-order-condition} implies \ref{evidence-order-outcomes}, and \ref{evidence-order-outcomes} implies \ref{evidence-order-payoff} immediately. \Cref{claim:evidence-payoff-necessity} proves the remaining implication using any one full-support prior. Since \eqref{eq:evidence-order} contains no prior, its sufficiency applies at every full-support prior, establishing the fixed-prior observation following the theorem.
\end{proof}

\subsection{Proof of \autoref{thm:evidence-H}}

\begin{proof}
We first show that, under \eqref{eq:evidence-H}, an obedient outcome can be induced exactly when \eqref{eq:evidence-upper-sets} holds.

\begin{claim}\label{claim:evidence-upper-characterization}
Every mediation outcome has provable upper contour sets of sender payoffs. Under \eqref{eq:evidence-H}, every obedient outcome with this property can be induced using a pure sender strategy and at most \(|A|\) signals.
\end{claim}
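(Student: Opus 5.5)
The claim has two halves, and I will prove them separately: first necessity for any evidence structure, then sufficiency under \eqref{eq:evidence-H}.

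\emph{Necessity.} Fix a mediated equilibrium inducing \(q\), and let \(w(\theta)=\sum_a v(a)q(a\mid\theta)\). As in the proof of \Cref{claim:evidence-outcome-sufficiency}, let \(z_m\) be the sender's payoff from the final-action lottery following message \(m\). Sender optimality gives \(w(\theta)=\max_{m\in M(\theta)}z_m\). For a given \(c\), if \(B_c=\{\theta\colon w(\theta)\ge c\}\) is empty or equal to \(\Theta\), it lies in \(\mathcal P\): the empty union covers the first case, and the common message \(m_0\) with \(E_{m_0}=\Theta\) covers the second. Otherwise, every \(\theta\in B_c\) uses some message \(m\) with \(z_m=w(\theta)\ge c\). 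Any type who can submit \(m\) obtains at least \(c\), so \(E_m\subseteq B_c\), and \(\theta\in E_m\). Hence \(B_c=\bigcup E_m\), taken over the messages used by members of \(B_c\), so \(B_c\in\mathcal P\).

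\emph{Sufficiency.} Let \(q\) be obedient with every \(B_c\in\mathcal P\). Let \(A^+\) be the set of actions used by \(q\), and let \(a^-\in\argmin_{A^+}v\). For each type \(\theta\), set \(c=w(\theta)\) and have \(\theta\) submit \(m_\theta\coloneqq f_{B_c}(\theta)\). Two points need checking.
\begin{enumerate}[label=\textup{(\alph*)},nosep]
\item Assigned messages are distinct across types. Types with the same \(c\) get distinct messages because \(f_{B_c}\) is injective. Suppose instead \(w(\theta)=c>c'=w(\theta')\). Then \(\theta'\in E_{m_{\theta'}}\), while \(E_{m_\theta}\subseteq B_c\) excludes \(\theta'\), so \(m_\theta\ne m_{\theta'}\).
\item No type gains by deviating. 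Set \(g(\cdot\mid m_\theta)=q(\theta)\) with \(Y=A^+\), and map every unassigned message to \(a^-\). A type \(\widehat\theta\) who can submit \(m_\theta\) lies in \(E_{m_\theta}\subseteq B_{w(\theta)}\), so he already receives \(w(\widehat\theta)\ge w(\theta)\). Unassigned messages give \(v(a^-)\), which is at most every \(w(\widehat\theta)\).
\end{enumerate}
Every signal in \(A^+\) occurs with positive probability, so obedience \eqref{eq:obedience} makes following recommendations optimal, as in \Cref{claim:evidence-compact}. The construction uses a pure sender strategy and \(|A^+|\le|A|\) signals.

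The main point to get right is (a), specifically the case of types with different payoff levels. Because a separate injective map \(f_{B_c}\) is chosen for each \(c\), injectivity alone does not separate types with different values of \(w\). What separates them is the exclusion property \(E_{f_B(\theta)}\subseteq B\), so each message carries only one lottery. That same property is what makes (b) go through.
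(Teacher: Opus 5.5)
Your proof is correct and follows the paper's argument essentially step for step. For necessity, you write $B_c$ as a union of sets $E_m$ over messages delivering at least $c$; the paper takes all such messages rather than only used ones, so your separate treatment of $B_c\in\{\varnothing,\Theta\}$ is unnecessary but harmless. For sufficiency, you use the level-by-level injections $f_{B_{w(\theta)}}$, get distinctness across levels from the exclusion $E_{f_B(\theta)}\subseteq B$, and map unassigned messages to $a^-$, exactly as the paper does.
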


\begin{proof}[Proof of the claim]
In any equilibrium, let \(z_m\) be the sender's payoff following message \(m\). Sender optimality implies
\begin{equation}\label{eq:evidence-payoff-unions}
 w(\theta)=\max_{m\in M(\theta)}z_m,
 \qquad \text{\&} \qquad
 \{\theta \colon w(\theta)\ge c\}
 =\bigcup_{\{m \colon z_m\ge c\}}E_m.
\end{equation}
This proves necessity without \eqref{eq:evidence-H}.

For sufficiency, suppose \eqref{eq:evidence-H} holds and \(q\) is obedient with provable upper contour sets. For each \(c\in w(\Theta)\), let \(B_c=\{\theta \colon w(\theta)\ge c\}\) and choose the injection \(f_{B_c}\) in \eqref{eq:evidence-H}. Assign to type \(\theta\) the message \( m_\theta\coloneqq f_{B_{w(\theta)}}(\theta)\). Types with the same value of \(w(\theta)\) receive distinct assigned messages by injectivity. Types with different values of \(w(\theta)\) also receive distinct assigned messages: a message that rules out every type outside \(B_d\) cannot be available to a type receiving \(c<d\).

Assign lottery \(q(\theta)\) to \(m_\theta\), and assign every other message the sender's least-preferred action among those used by \(q\). If type \(\theta\) can submit \(m_{\widehat\theta}\), then \(\theta\in E_{m_{\widehat\theta}}
 \subseteq B_{w(\widehat\theta)}\) and \(\qquad w(\theta)\ge w(\widehat\theta)\), and so each type best responds by submitting his assigned message. All other messages yield no more than his prescribed payoff. Recommend the actions drawn from the assigned lotteries. The induced outcome is \(q\), so \eqref{eq:obedience} ensures receiver optimality. Using only actions occurring under \(q\) leaves no unused signals and requires at most \(|A|\) signals.
\end{proof}

If \(\mathcal P_2\subseteq\mathcal P_1\), every mediation outcome under \(\mathcal E^2\) has upper contour sets in \(\mathcal P_1\), by \eqref{eq:evidence-payoff-unions}. Since \(\mathcal E^1\) satisfies \eqref{eq:evidence-H}, \Cref{claim:evidence-upper-characterization} implies that it induces the same outcome. This proves \ref{evidence-H-inclusion} implies \ref{evidence-H-outcomes}. Statement \ref{evidence-H-outcomes} implies \ref{evidence-H-payoff} immediately.

\begin{claim}\label{claim:evidence-event-test}
For a fixed type-independent sender payoff \(v\), universal sender-payoff dominance implies \(\mathcal P_2\subseteq\mathcal P_1\), without imposing \eqref{eq:evidence-H} on either evidence structure.
\end{claim}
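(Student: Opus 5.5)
We argue by contraposition. Fix the type-independent sender payoff \(v\) (nonconstant), the action set \(A\) and the prior \(\mu\). Suppose some \(B\in\mathcal P_2\setminus\mathcal P_1\); necessarily \(B\) is nonempty and \(B\neq\Theta\), since \(\Theta\) is provable under any structure because \(m_0\) is available. We will construct a receiver payoff \(u\) with \(V_2^{\mathrm{MD}}(\mu,u,v)>V_1^{\mathrm{MD}}(\mu,u,v)\).

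Let \(a^+\in\argmax v\) and \(a^-\in\argmin v\), so that \(v(a^+)>v(a^-)\). We reuse the two-action device from the proof of \Cref{prop:sd-some-improvement}. Set \(u(a^-,\theta)=0\) and \(u(c,\theta)=-2\) for \(c\notin\{a^+,a^-\}\). Set \(u(a^+,\theta)=\varepsilon\) for \(\theta\in B\) and \(u(a^+,\theta)=-1\) for \(\theta\notin B\), with \(\varepsilon\mu(B)<\min_{\theta\notin B}\mu(\theta)\). Only \(a^\pm\) are ever receiver-optimal, and each type's payoff is increasing in \(p(\theta)\), the probability of \(a^+\) at \(\theta\).

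\emph{Step 1: a payoff under \(\mathcal E^2\).} Since \(B\in\mathcal P_2\), every \(\theta\in B\) has some \(m_\theta\) with \(\theta\in E^2_{m_\theta}\subseteq B\). Let types in \(B\) send these messages and map them to \(a^+\). Let types outside \(B\) send \(m_0\), which the mediator maps to \(a^+\) with a small probability \(\alpha\in\bigl(0,\varepsilon\mu(B)/\mu(\Theta\setminus B)\bigr)\). All other messages map to \(a^-\). Outsiders cannot send any \(m_\theta\), so sender incentives hold. Obedience of \(a^+\) holds because \(\varepsilon\mu(B)-\alpha\mu(\Theta\setminus B)>0\). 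Obedience of \(a^-\) holds because \(a^-\) is recommended only to outsiders. The resulting sender payoff is
\[
\mu(B)v(a^+)+\mu(\Theta\setminus B)\bigl[\alpha v(a^+)+(1-\alpha)v(a^-)\bigr].
\]

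\emph{Step 2: a bound under \(\mathcal E^1\).} Take any mediated equilibrium under \(\mathcal E^1\), using \Cref{claim:evidence-compact} to pass to an optimal one. By \eqref{eq:evidence-payoff-unions}, the upper contour set \(U_c=\{\theta\colon p(\theta)\ge c\}\) lies in \(\mathcal P_1\) for every \(c\). Let \(c^*=\max_\theta p(\theta)\). If \(c^*=0\), the payoff is \(v(a^-)<\) the \(\mathcal E^2\) value, and we are done. Otherwise, the plan is to show that the sender's payoff falls short of the construction in Step 1. The key inequality is obedience of \(a^+\):
\[
\varepsilon\sum_{\theta\in B}\mu(\theta)p(\theta)\ \ge\ \sum_{\theta\notin B}\mu(\theta)p(\theta).
\]
Because \(\mathcal P_1\) is closed under unions and \(B\notin\mathcal P_1\), no \(U_c\) equals \(B\). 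In particular, the top set \(U_{c^*}\) contains an outsider whenever it meets \(B\), or else it is a proper subset of \(B\).

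This is where we expect the main obstacle. A small \(\varepsilon\) alone does not rule out outcomes in which \(B\)-types get \(a^+\) with probability one while outsiders get a tiny positive probability, so the payoff comparison is delicate. The fix we would try first is to weight \(u(a^+,\cdot)\) on \(B\) so that obedience binds in a way that requires all of \(B\) to sit at the top level. Concretely, give types \(\theta\in B\) the gain \(\varepsilon\) and insist, via the \(\mathcal E^2\) construction, that every type of \(B\) receives \(a^+\) for sure.

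Alternatively, make \(\alpha\) near its bound so that the \(\mathcal E^2\) value is the obedience-constrained maximum: a linear program whose unique optimum is \(p=1\) on \(B\) and \(p=\alpha^*\) on \(B^c\), which is flat on \(B^c\). Then show the \(\mathcal E^1\) optimum cannot attain it. Attaining it would require \(p\equiv1\) on \(B\) and \(p<1\) off \(B\), that is, \(U_1=B\in\mathcal P_1\), a contradiction. Compactness from \Cref{claim:evidence-compact} turns this non-attainment into a strict inequality, since the \(\mathcal E^1\) supremum is a maximum.

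Uniqueness of the LP optimum holds because the objective \(\sum\mu(\theta)p(\theta)\) against the single constraint with ratio weights \(\varepsilon\) versus \(1\) favors \(B\) strictly, so \(B\) saturates first. The remaining freedom on \(B^c\) does not matter, since any optimum has \(p=1\) exactly on \(B\) as long as \(\alpha^*<1\). This yields \(V_2^{\mathrm{MD}}>V_1^{\mathrm{MD}}\), proving the claim.
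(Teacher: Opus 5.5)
Your concluding argument (the ``Alternatively'' paragraph) is correct, but it takes a more roundabout route than the paper. The paper uses your test payoff with $\varepsilon=0$, i.e.\ $u_B(a^+,\theta)=0$ on $B$ and $-1$ off $B$. Obedience of $a^+$ then forces $q(a^+\mid\theta)=0$ at every $\theta\notin B$. Hence any message inducing $a^+$ with positive probability must be unavailable to outsiders, which gives the exact value $V_i^{\mathrm{MD}}(\mu,u_B,v)=v(a^-)+\mu(B_i)[v(a^+)-v(a^-)]$, with $B_i$ the union of all $E^i_m\subseteq B$. Since $B_2=B$ and $B_1\subsetneq B$, the gap $\mu(B\setminus B_1)[v(a^+)-v(a^-)]$ is explicit, with no appeal to compactness or to \eqref{eq:evidence-payoff-unions}.

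The ``obstacle'' you ran into comes entirely from taking $\varepsilon>0$, which lets outsiders receive $a^+$ with small probability. Your resolution is valid. From $\sum_\theta\mu(\theta)p(\theta)\le(1+\varepsilon)\sum_{\theta\in B}\mu(\theta)p(\theta)\le(1+\varepsilon)\mu(B)$, every maximizer has $p=1$ on $B$ and $p<1$ off $B$. So its top upper contour set is $B$, which \eqref{eq:evidence-payoff-unions} would place in $\mathcal P_1$, a contradiction; \Cref{claim:evidence-compact} then turns non-attainment into a strict inequality. This buys nothing over $\varepsilon=0$ and gives no quantitative gap.

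Three repairs are needed for a clean write-up. First, the maximizer is not unique when $|\Theta\setminus B|\ge2$; only $\sum_{\theta\notin B}\mu(\theta)p(\theta)=\varepsilon\mu(B)$ is pinned down. Also, $p<1$ off $B$ follows from $\mu(\theta)p(\theta)\le\varepsilon\mu(B)<\mu(\theta)$, i.e.\ from your choice $\varepsilon\mu(B)<\min_{\theta\notin B}\mu(\theta)$, not from $\alpha^*<1$. Second, under $\mathcal E^2$ set $\alpha=\varepsilon\mu(B)/\mu(\Theta\setminus B)$ exactly (weak obedience suffices), or note that $V_2^{\mathrm{MD}}$ is a supremum; an $\alpha$ merely ``near'' the bound falls short of the maximum. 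Third, the paragraph on the fix you ``would try first'' is not an argument and should be dropped.
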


\begin{proof}[Proof of the claim]
Choose \(a^-\in\argmin_{a\in A}v(a)\) and \(a^+\in\argmax_{a\in A}v(a)\). Fix a nonempty set of types \(B\) and set
\[u_B(a^-,\theta) = 0, \quad u_B(a^+,\theta) =
 \begin{cases}
 0,&\theta\in B,\\
 -1,&\theta\notin B,
 \end{cases}\quad \text{\&} \quad
 u_B(a,\theta) =-2\quad(a\notin\{a^-,a^+\}).
\]
For \(i\in\{1,2\}\), write
\(B_i\coloneqq\bigcup_{\{m\in M^i\colon E_m^i\subseteq B\}}E_m^i\). We show that
\begin{equation}\label{eq:evidence-event-value}
 V_i^{\mathrm{MD}}(\mu,u_B,v)
 =v(a^-)+\mu(B_i)\left[v(a^+)-v(a^-)\right].
\end{equation}

Only \(a^-\) and \(a^+\) can be receiver best responses, and obedience requires zero probability of \(a^+\) outside \(B\). Every outsider thus receives payoff \(v(a^-)\). A message producing \(a^+\) with positive probability must exclude all outsiders, since otherwise one could profitably submit it. Such a message is available only to types in \(B\). Hence, only members of \(B_i\) can receive \(a^+\), proving the upper bound in \eqref{eq:evidence-event-value}.

To attain the bound, recommend \(a^+\) after every message available only to types in \(B\), and recommend \(a^-\) after every other message. Members of \(B_i\) choose a message of the first kind, and all remaining types choose a message of the second kind. Sender incentives hold, and both recommendations are obedient whenever they occur. Omit any recommendation that does not occur. This proves \eqref{eq:evidence-event-value}.

If \(B\in\mathcal P_2\setminus\mathcal P_1\), then \(B_2=B\) and \(B_1\subsetneq B\). Full support and nonconstancy of \(v\) imply
\[
 V_2^{\mathrm{MD}}(\mu,u_B,v)-V_1^{\mathrm{MD}}(\mu,u_B,v)
 =\mu(B\setminus B_1)\left[v(a^+)-v(a^-)\right]>0.
\]
This contradicts universal payoff dominance.
\end{proof}

\Cref{claim:evidence-event-test} proves \ref{evidence-H-payoff} implies \ref{evidence-H-inclusion}, completing the equivalence. The final assertion follows from \Cref{claim:evidence-upper-characterization}.

To verify the sharpness statement, suppose \(\mathcal E^1\) fails \eqref{eq:evidence-H} on a nonempty provable set \(B\). Obtain \(\mathcal E^2\) by adding \(|B|\) distinct messages, each available precisely to types in \(B\). Since \(B\in\mathcal P_1\) and \(\mathcal P_1\) is closed under unions, \(\mathcal P_2=\mathcal P_1\).

Take actions \(a_0\) and \(a_\theta\) for \(\theta\in B\), with
\[
 v(a_0)=0,\qquad v(a_\theta)=1,
 \qquad u(a_0,\widehat\theta)=0,
 \qquad \& \qquad
 u(a_\theta,\widehat\theta)=
 \begin{cases}
 0,&\widehat\theta=\theta,\\
 -1,&\widehat\theta\ne\theta.
 \end{cases}
\]
A favorable action \(a_\theta\) is optimal only at the posterior assigning probability one to \(\theta\). \(\mathcal E^2\) gives every \(\theta\in B\) his associated action through a different added message, and assigns \(a_0\) to all original messages. Members of \(B\) obtain sender payoff one, and outsiders obtain zero. This is an equilibrium, and its payoff \(\mu(B)\) is the largest possible under either evidence structure.

If \(\mathcal E^1\) attained \(\mu(B)\), each \(\theta\in B\) would receive \(a_\theta\). Choose one message used by each such type. These messages must be distinct, since they induce different certain actions. They must also rule out every type outside \(B\), since otherwise an outsider could deviate from payoff zero to one. They would therefore supply the injection required by \eqref{eq:evidence-H}, a contradiction. By \Cref{claim:evidence-compact}, the optimum under \(\mathcal E^1\) is attained, so it is strictly less than \(\mu(B)\).
\end{proof}

\subsection{Proof of \autoref{thm:evidence-state-dependent}}

\begin{proof}
Suppose first that \(\mathcal E^1\succeq^{\mathrm{G}}\mathcal E^2\), and that sender types have compatible action rankings. Fix a mediated equilibrium under \(\mathcal E^2\), with reporting strategy \(S^2\), final-action lottery \(\ell_n\) following each message \(n\in M^2\), and outcome \(q\). Write
\[z_n(\theta)\coloneqq\sum_{a\in A}v(a,\theta)\ell_n(a),\qquad w(\theta)\coloneqq\sum_{a\in A}v(a,\theta)q(a\mid\theta).\]
Sender optimality gives \(z_n(\theta)\le w(\theta)\) for every \(n\in M^2(\theta)\). Choose \(S^1,K\) as in \eqref{eq:evidence-state-dependent}, and initially assign message \(m\in M^1\) the lottery
\[\ell_m^1\coloneqq\sum_{n\in M^2}K_{mn}\ell_n.\]
Then \(S^1\ell^1=q\). If \(\theta\in E_m^1\), the restriction on \(K\) ensures that every \(n\) with \(K_{mn}>0\) is available to \(\theta\) under \(\mathcal E^2\). Hence,
\[\sum_a v(a,\theta)\ell_m^1(a)=\sum_nK_{mn}z_n(\theta)\le w(\theta).\]
The prescribed mixture gives \(w(\theta)\), so every message used by \(\theta\) attains this bound.

Let \(A^+(q)\) denote the actions used by \(q\). By \Cref{claim:sd-common-extrema}, this set contains an action \(a^-\) that minimizes every type's payoff on the set. Retain \(\ell_m^1\) after every used message and assign \(\delta_{a^-}\) after every unused message. Used-message lotteries are supported on \(A^+(q)\), and every type receives at least \(v(a^-,\theta)\) under \(q\). Thus, the replacement preserves sender optimality and \(q\). Use \(A^+(q)\) as the signal space and recommend the drawn action. Obedience of \(q\) makes the receiver follow each recommendation. This proves statement \ref{evidence-general-outcomes}.

For necessity, fix a full-support prior \(\mu\) and suppose the comparison in \ref{evidence-general-outcomes} holds when \(u\equiv0\) and all types have the same strict action ranking. Fix \(S^2\in\Sigma_2(\Theta)\), and list the messages of \(\mathcal E^2\) as \(n_1,\ldots,n_r\). Use distinct actions \(a_j^-,a_j^+\), \(j=1,\ldots,r\), set \(u\equiv0\), and let
\begin{equation}\label{eq:evidence-compatible-test}
v(a_j^-,\theta)=-2j+\1_{\{\theta\notin E_{n_j}^2\}},\qquad v(a_j^+,\theta)=2j+\1_{\{\theta\notin E_{n_j}^2\}}.
\end{equation}
At every type, payoffs strictly increase along \(a_r^-,\ldots,a_1^-,a_1^+,\ldots,a_r^+\). After message \(n_j\), the mediator recommends each of \(a_j^-,a_j^+\) with probability \(1/2\). The expected sender payoff from that lottery is \(\1_{\{\theta\notin E_{n_j}^2\}}\). Every available message gives its sender payoff zero, so reporting according to \(S^2\) is optimal. The receiver follows each recommendation because \(u\equiv0\). The resulting outcome satisfies
\[q(a_j^-\mid\theta)=q(a_j^+\mid\theta)=\frac12S^2_{\theta n_j}.\]

By the assumed comparison, \(\mathcal E^1\) has a mediated equilibrium with this outcome. Let \(S^1\) be its reporting strategy and \(\ell_m^1\) the distribution of final actions after \(m\). Every type's equilibrium payoff is zero. For every used message \(m\), set
\[K_{mn_j}\coloneqq\ell_m^1(a_j^-)+\ell_m^1(a_j^+).\]
Choose a type \(\theta_0\) using \(m\). If \(n_j\notin M^2(\theta_0)\), both corresponding actions have zero probability under \(q(\theta_0)\), and hence zero probability following \(m\). The indicator terms in \eqref{eq:evidence-compatible-test} therefore vanish on the support of \(\ell_m^1\) at \(\theta_0\). Since the used message gives \(\theta_0\) payoff zero,
\[\sum_{j=1}^r2j\bigl[\ell_m^1(a_j^+)-\ell_m^1(a_j^-)\bigr]=0.\]
For any type \(\theta\in E_m^1\), his payoff from submitting \(m\) is consequently
\[\sum_a v(a,\theta)\ell_m^1(a)=\sum_{\{j:\theta\notin E_{n_j}^2\}}K_{mn_j}.\]
This nonnegative quantity cannot exceed his equilibrium payoff of zero. Thus, \(K_{mn_j}>0\) implies \(E_m^1\subseteq E_{n_j}^2\). For unused messages, complete the corresponding rows of \(K\) by assigning probability one to the common message of \(\mathcal E^2\). The required inclusion holds there as well. Summing the two action probabilities gives
\[S^2_{\theta n_j}=q(a_j^-\mid\theta)+q(a_j^+\mid\theta)=\sum_{m\in M^1}S^1_{\theta m}K_{mn_j}.\]
Therefore, \(S^2=S^1K\), proving \eqref{eq:evidence-state-dependent}. The same construction applies to every \(S^2\) at the fixed prior. Finally, reproducing an outcome preserves its \textit{ex ante} sender payoff, giving the payoff comparison.
\end{proof}

\subsection{Proof of \autoref{prop:evidence-affine}}

We first establish a fact about receiver best responses.

\begin{lemma}\label{lem:evidence-affine-mean}
Fix \(u\), a scalar function \(x\colon A\to\R\), and a lottery \(\ell\) supported on actions that maximize the receiver's payoff at some belief. There are a belief \(\nu\) and a lottery \(\widehat\ell\) supported on at most two receiver best responses at \(\nu\) such that \(\sum_a x(a)\widehat\ell(a)=\sum_a x(a)\ell(a)\).
\end{lemma}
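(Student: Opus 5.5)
The plan is to reduce the lemma to a one-dimensional intermediate-value argument along a segment of beliefs. Write \(X\coloneqq\sum_a x(a)\ell(a)\), and read the hypothesis as saying that each action in \(\supp\ell\) is a receiver best response at some belief. (If instead all of \(\supp\ell\) are best responses at a single belief, the same argument applies with \(\nu_0=\nu_1\) below.) Choose \(\underline a,\overline a\in\supp\ell\) minimizing and maximizing \(x\) over \(\supp\ell\). Then
\[
x(\underline a)\le X\le x(\overline a).
\]
Let \(\nu_0\) be a belief at which \(\underline a\) is a best response, and \(\nu_1\) one at which \(\overline a\) is a best response. Set \(\nu_t\coloneqq(1-t)\nu_0+t\nu_1\) for \(t\in[0,1]\). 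Let \(\mathrm{BR}(\nu)\) denote the set of receiver best responses at \(\nu\), and define
\[
\Phi(t)\coloneqq\Bigl[\min_{a\in\mathrm{BR}(\nu_t)}x(a),\ \max_{a\in\mathrm{BR}(\nu_t)}x(a)\Bigr].
\]

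First I record the regularity of \(\Phi\). Since \(A\) is finite and expected payoffs are linear in the belief, the best-response correspondence has a closed graph: if \(a\in\mathrm{BR}(\nu_{t_k})\) and \(t_k\to t\), then \(a\in\mathrm{BR}(\nu_t)\). Moreover, for \(t'\) near \(t\), \(\mathrm{BR}(\nu_{t'})\subseteq\mathrm{BR}(\nu_t)\), because any action strictly suboptimal at \(\nu_t\) stays strictly suboptimal nearby. Hence \(t\mapsto\min\Phi(t)\) is lower semicontinuous and \(t\mapsto\max\Phi(t)\) is upper semicontinuous. By construction, \(\min\Phi(0)\le x(\underline a)\le X\) and \(\max\Phi(1)\ge x(\overline a)\ge X\).

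The main step is to find \(t^*\) with \(X\in\Phi(t^*)\). Let
\[
T\coloneqq\{t\in[0,1]\colon\min\Phi(t)\le X\},
\]
which is nonempty because it contains \(0\), and let \(t^*\coloneqq\sup T\). Take \(t_k\in T\) with \(t_k\to t^*\). Only finitely many actions exist, so along a subsequence a fixed action \(a\) with \(x(a)\le X\) lies in \(\mathrm{BR}(\nu_{t_k})\). Closedness of the graph then gives \(a\in\mathrm{BR}(\nu_{t^*})\), so \(t^*\in T\).

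Suppose, for contradiction, that \(\max\Phi(t^*)<X\). Then \(t^*<1\), since \(\max\Phi(1)\ge X\). By the local inclusion \(\mathrm{BR}(\nu_t)\subseteq\mathrm{BR}(\nu_{t^*})\) for \(t\) slightly above \(t^*\), every best response there has \(x<X\). Such \(t\) therefore lie in \(T\), contradicting the definition of \(t^*\). Hence \(\min\Phi(t^*)\le X\le\max\Phi(t^*)\).

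Finally, set \(\nu\coloneqq\nu_{t^*}\) and pick \(a,b\in\mathrm{BR}(\nu)\) with \(x(a)\le X\le x(b)\). If \(x(a)=x(b)\), let \(\widehat\ell\coloneqq\delta_a\). Otherwise let \(\widehat\ell\coloneqq\lambda\delta_a+(1-\lambda)\delta_b\) with
\[
\lambda=\frac{x(b)-X}{x(b)-x(a)}\in[0,1].
\]
This lottery is supported on at most two best responses at \(\nu\) and has mean \(X\), as required. I expect the semicontinuity/sup argument in the main step to be the only delicate point. The routine pieces are the local inclusion of best-response sets, which follows from strict inequalities being open, and the finite-action subsequence extraction.
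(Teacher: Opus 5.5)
Your proof is correct and is essentially the paper's argument. The paper observes that the beliefs admitting a best response with $x(a)\le X$ and those admitting one with $x(a)\ge X$ are two nonempty closed sets covering $\Delta(\Theta)$, so they intersect by connectedness; you restrict the same two sets to the segment from $\nu_0$ to $\nu_1$ and show they meet via your supremum argument, which amounts to proving connectedness of $[0,1]$ by hand.
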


\begin{proof}
Put \(z=\sum_a x(a)\ell(a)\). Let \(F_-\) be the beliefs at which some receiver best response has \(x(a)\le z\), and let \(F_+\) be the beliefs at which some receiver best response has \(x(a)\ge z\). Each is a finite union of closed best-response regions and is therefore closed. They cover \(\Delta(\Theta)\), and both are nonempty because \(z\) is an average of the values of actions that are optimal at some beliefs. Connectedness of \(\Delta(\Theta)\) gives \(F_-\cap F_+\ne\varnothing\). At a belief \(\nu\) in the intersection, choose best responses with values below and above \(z\). A mixture of them has expected value \(z\). If either value equals \(z\), use that action alone.
\end{proof}

\begin{proof}[Proof of \Cref{prop:evidence-affine}]
Start with a mediated equilibrium under \(\mathcal E^2\), with outcome \(q\), reporting matrix \(S^2\), and final-action lotteries \(\ell_n\). Choose \(S^1,K\) from \eqref{eq:evidence-state-dependent}, and assign \(\ell_m^1=\sum_nK_{mn}\ell_n\). The factorization and sender-incentive argument in the proof of \Cref{thm:evidence-state-dependent} give \(S^1\ell^1=q\) and
\[\sum_a v(a,\theta)\ell_m^1(a)\le w(\theta)\coloneqq\sum_a v(a,\theta)q(a\mid\theta)\qquad(\theta\in E_m^1).\]
Every action in these lotteries is optimal for the receiver at some belief, since it is chosen at a receiver information set in the original equilibrium.

Let \(\ell=\ell_{m_0^1}^1\). This lottery is available to every type. Retain \(\ell_m^1\) after used messages. If \(\supp(\ell)\subseteq A^+(q)\), assign \(\ell\) after all unused messages and recommend actions directly. The sender's strategy remains optimal, the outcome remains \(q\), and every recommendation occurs and is obedient.

Otherwise, \(m_0^1\) is unused. Apply \Cref{lem:evidence-affine-mean} to obtain \(\widehat\ell\) and \(\nu\). By \eqref{eq:evidence-affine}, for every type,
\[\sum_a v(a,\theta)\widehat\ell(a)=\alpha(\theta)+\beta(\theta)\sum_a x(a)\widehat\ell(a)=\sum_a v(a,\theta)\ell(a)\le w(\theta).\]
Assign \(\widehat\ell\) after every unused message. The sender's strategy and the outcome are unchanged. Recommend the drawn actions. Reached recommendations are obedient by \(q\), and every unused recommendation is optimal at \(\nu\). Hence, \(q\) is a mediation outcome under \(\mathcal E^1\). Comparing \textit{ex ante} payoffs gives the stated inequality.
\end{proof}

\end{document}